\DeclarePairedDelimiter\norm{\lVert}{\rVert}
\newcommand{\E}{\mathbb{E}}
\newcommand{\N}{\mathbb{N}}
\newcommand{\R}{\mathbb{R}}
\newcommand{\Z}{\mathbb{Z}}
\newcommand{\T}{\mathbb{T}}
\renewcommand{\epsilon}{\ensuremath\varepsilon}
\renewcommand{\phi}{\ensuremath{\varphi}}
\newcommand\footnoteref[1]{\protected@xdef\@thefnmark{\ref{#1}}\@footnotemark}
\newcommand{\ex}[1]{\mathbb{E} \left[ #1 \right]}
\newcommand{\pr}[1]{\mathbb{P} \left[ #1 \right]}
\title{Sublinear Cuts are the Exception in BDF-GIRGs}
\author{Marc Kaufmann}{ETH Zurich, Switzerland}{marc.kaufmann@inf.ethz.ch}{}{Swiss National Science Foundation [grant number 200021\_192079]}
\author{Raghu Raman Ravi}{ETH Zurich, Switzerland}{raravi@student.ethz.ch}{}{}
\author{Ulysse Schaller}{ETH Zurich, Switzerland}{ulysse.schaller@inf.ethz.ch}{}{Swiss National Science Foundation [grant number 200021\_192079]}
\authorrunning{M. Kaufmann, R.\,R. Ravi, and U. Schaller}
\keywords{Real-world Networks, Geometric Inhomogeneous Random Graphs, Boolean Distance Functions, Network Robustness, Small Separators, Sparse Cuts, Clustering} 
\begin{document}

\maketitle

\begin{abstract}
    The introduction of geometry has proven instrumental in the efforts towards more realistic models for real-world networks. In Geometric Inhomogeneous Random Graphs (GIRGs), Euclidean Geometry induces clustering of the vertices, which is widely observed in networks in the wild. Euclidean Geometry in multiple dimensions however restricts proximity of vertices to those cases where vertices are close in each coordinate. 
    We introduce a large class of GIRG extensions, called BDF-GIRGs, which capture arbitrary hierarchies of the coordinates within the distance function of the vertex feature space. These distance functions have the potential to allow more realistic modeling of the complex formation of social ties in real-world networks, where similarities between people lead to connections. Here, similarity with respect to certain features, such as familial kinship or a shared workplace, suffices for the formation of ties. It is known that - while many key properties of GIRGs, such as log-log average distance and sparsity, are independent of the distance function - the Euclidean metric induces small separators, i.e.\ sublinear cuts of the unique giant component in GIRGs, whereas no such sublinear separators exist under the component-wise minimum distance. Building on work of Lengler and Todorovi\'{c}, we give a complete classification for the existence of small separators in BDF-GIRGs. We further show that BDF-GIRGs all fulfill a stochastic triangle inequality and thus also exhibit clustering. 
\end{abstract}

\section{Introduction}

Bringing generative graph models closer to applications has driven network science since its inception. This includes the design of models which capture structural properties widespread among real networks. A prominent such model are Geometric Inhomogeneous Random Graphs (GIRGs), which are known to be sparse, small worlds, and whose degrees follow a power-law distribution \cite{bringmann2016average}.  Recent research has shown that they are well-suited to model geometric network features such as the local clustering coefficient as well as closeness and betweenness centrality of real networks~\cite{dayan2023expressivity}. In GIRGs, each vertex comes equipped with a set of coordinates in a geometric ground space and a weight, both drawn independently. Pairs of vertices $u,v$ are then connected independently with a probability which depends on the product of the vertex weights $w_u$ and $w_v$, and decays as their distance in the ground space increases. 
The role played by the distance function has been relatively unexplored. While we know that many graph properties are invariant under this choice~\cite{bringmann2016average}, the robustness of a GIRG, that is, how much of the graph's giant component can be removed before it falls apart (into chunks of comparable size) crucially depends on it~\cite{lengler2017existence}. Questions of robustness and fragility such as this one have been widely researched both for random graph models and real-world networks, in the quest of properties that are universal across networks. Analytical and numerical evidence suggests that there is a trade-off between robustness and performance, such that both cannot be simultaneously optimized~\cite{pasqualetti2020fragility}. A further central paradigm is the controversial "robust yet fragile nature of the internet"~\cite{doyle2005robust, hasheminezhad2023myth}. Robustness can be examined with respect to vertex or edge removal - and removals can be adversarial or random. Understanding which removal strategies are most successful has developed into a research direction of its own~\cite{bellingeri2018analyses, iyer2013attack, tian2017articulation, wandelt2018comparative}. 

In our present work, we consider robustness with respect to adversarial edge removal. In GIRGs, choosing the Euclidean metric induces a graph which contains small separators, that is, edge cuts of sublinear size which split the giant component into two connected components of linear size. Using instead the so-called minimum-component distance function (MCD), then in dimensions $d\ge 2$ - in dimension $d=1$, the two distance functions coincide - Lengler and Todorovi\'{c} have demonstrated that all sublinear separators disappear with high probability~\cite{lengler2017existence}. This coincides with the picture that we encounter in Erdös-Rényi random graphs at edge probability $p=\frac{1+ \Omega(1)}{n}$, as shown by Luczak and McDiarmid~\cite{luczak2001bisecting}. In both cases, the proofs proceed by a two-round exposure of the edges in the graph. After the first, larger, batch of edges is unveiled, the graph already contains a giant component - and this component contains only few sparse cuts. In the second round, one can show that enough edges are sprinkled between the vertices that are only sparsely connected, so that all sublinear cuts will disappear. Compared to Erdös-Rényi graphs, several complications arise. It is, in particular, not possible to sample the edges in two independent batches. Instead, one can produce two "almost independent" batches of edges by unveiling first a number of $d-1$ coordinates of the vertices and then unveiling their remaining $d$th coordinate. It is unavoidable that the giant component grows substantially, as a constant fraction of the total number of edges in the graph is unveiled this way, creating many new potential vertex bipartitions in the giant component which may yield sublinear cuts. This problem can be addressed by uncovering only coordinate information of bounded-weight vertices within the giant component for the second batch. In their elegant proof, Lengler and Todorovi\'{c} then show that edges incident to this vertex set do not increase the size of the giant by too much, using an Azuma-Hoeffding type estimate. 

As we will demonstrate, this procedure can be extended far beyond the case of MCD-GIRGs. We first generalize GIRGs to accommodate a large class of underlying distance functions, which we call Boolean Distance Functions (BDF). 
Intuitively speaking, this family of distance functions covers arbitrary combinations of minima and maxima of subsets of coordinates. For example, even if two individuals have exactly the same hobby (in this case, the minimum component distance is zero), they may still be unlikely to know each other if they live in different countries. A more refined notion could be for example, the formula $dist \coloneqq \min\{dist_{work}, \max\{dist_{hobby}, dist_{residence} \}\}$ which encodes that two individuals are likely to know each other if they either work in closely related fields, or if they have at the same time similar hobbies and locations. This illustrates the potential of BDF for building more realistic models for real-world networks. We remark that the dimension $d$ of the underlying geometric space is always assumed to be a constant with respect to the number of vertices $n$.

In many real-world instances, particularly social networks, connections are formed not based on an \textit{averaged} similarity as the one captured by Euclidean distances which weight all features in a symmetric way, but rather one some feature set that will naturally dominate. One may think of, in a social setting, an edge encoding that two people, i.e.\ vertices, know each other and closeness in a specific coordinate encoding having a parent in common -- or more generally how many generations one needs to go back to find a common ancestor. Even if all other features, e.g.\ education, age, wealth, domicile, interests, differ greatly, sharing a parent will usually ensure that two people know each other. 
Sometimes single features are not dominant enough, but being similar in a specific subset of $k$ features will render the other $d-k$ features obsolete, i.e.\ these $k$ features will suffice to guarantee a lower bound on the connection probability. One such example could be a feature that encodes that two people live in the same town with a population of ca.\ 100'000 people (which does not suffice for a connection) and another playing table tennis in a club (which on its own also is not sufficient). But it is plausible that playing in a table tennis club and living in the same town jointly ensures a constant lower bound on the connection probability. This is in line with the complex mechanisms underlying the observed formation of social ties in real networks, where similarities with regard to certain features are dominant but the baseline provided by similarity in this regard is combined with similarities or differences in other dimensions \cite{mcpherson2001birds}.
Such more sophisticated situations can be captured with BDF but are completely impossible to encode in the Euclidean setting - and, due to the equivalence of norms on finite-dimensional vector spaces - by any other metric induced by a norm. 

The precise definition of Boolean Distance Functions can be found in Definition~\ref{def:bdf}. In this family of GIRGs, there is only one subfamily, which we call Single-Coordinate Outer-Max (SCOM) GIRGs, that contains small separators. These collections of GIRGs exhibit a distance function that can be expressed as the maximum of one singled-out coordinate and an arbitrary "Boolean" combination of the other coordinates. This is our first main result:

\begin{theorem}
\label{thm:natural}
    Let $G$ be a GIRG induced by a SCOM BDF $\kappa$ acting on the $d$-dimensional torus $\T^d$. Then, with probability $1-o(1)$, the giant component of $G$ has a separator of size $o(n)$. 
\end{theorem}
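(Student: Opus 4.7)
The plan is to exhibit a sublinear edge separator by cutting the torus perpendicular to the singled-out outer-max coordinate. The key structural fact driving everything is that, writing
\[
\kappa(u,v) \;=\; \max\bigl(|u_j - v_j|_{\T},\ \tilde\kappa(u_{-j},v_{-j})\bigr)
\]
for the distinguished coordinate $j$, one has $\kappa(u,v) \geq |u_j - v_j|_{\T}$ irrespective of how $\tilde\kappa$ combines the remaining coordinates. Hence the GIRG connection probability decays at the standard GIRG rate in the coordinate-$j$ distance alone, and the situation reduces along that coordinate to the $L^{\infty}$ (Chebyshev) case already treated in the Euclidean GIRG literature.

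Without loss of generality take $j=d$ and cut the torus at $x_d = 0$ and $x_d = 1/2$ to obtain slabs $A = \{x : x_d \in [0, 1/2)\}$ and $B = \{x : x_d \in [1/2, 1)\}$, with induced vertex sets $V_A, V_B$. Chernoff gives $|V_A|, |V_B| = (1+o(1))n/2$ w.h.p. Each induced subgraph $G[V_A]$, $G[V_B]$ is itself a GIRG with the same weight and kernel parameters on a half-torus, so by standard GIRG results each contains a (unique) giant of size $\Theta(n)$ w.h.p.\ These half-giants are linear-sized connected subgraphs of $G$, and since $G$ has a unique giant of size $\Theta(n)$, they must lie inside it. Hence, to produce a sublinear separator of the giant it suffices to show that the number $C$ of $V_A$--$V_B$ edges is $o(n)$ w.h.p.: removing these edges leaves the two half-giants intact in disjoint components.

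To bound $\E[C]$, fix $u \in A$, $v \in B$ and put $r = |u_d - v_d|_{\T}$. By the SCOM bound and the standard GIRG kernel one has, conditional on weights and positions,
\[
\pr{u \sim v} \;\leq\; \min\!\bigl(1,\ c\,(w_u w_v)^{\alpha}\, r^{-d\alpha}\bigr)
\]
for some constant $c$ and $\alpha > 1$. Integrating out the first $d-1$ coordinates, which impose no constraint for the crossing event, reduces the computation to a one-dimensional radial integral in $r$ identical in form to the one arising in the $L^{\infty}$-GIRG separator construction. A weight-class decomposition into layers $w \in [2^k, 2^{k+1})$, combined with the power-law tail of the weights, then yields $\E[C] = o(n)$; Markov's inequality promotes this to a w.h.p.\ bound.

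The main technical obstacle is controlling the contribution of very heavy vertices, for which the connection probability saturates at $1$ and would naively dominate the count. One handles them by truncating the weight distribution at a threshold $w^{\star}$ chosen so that the number of vertices with $w > w^{\star}$ is $o(n)$ by the power-law tail, while for $w \leq w^{\star}$ the radial integral converges geometrically thanks to $\alpha > 1$. This weight-truncation argument is the technical heart of the Euclidean separator construction, and because the SCOM lower bound $\kappa(u,v) \geq |u_d - v_d|_{\T}$ matches the Chebyshev lower bound used there, the argument transfers essentially verbatim once the SCOM reduction has been made.
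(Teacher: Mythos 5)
Your overall strategy matches the paper's: cut the torus perpendicular to the distinguished outer-max coordinate, observe that each half-torus induces a GIRG containing a linear-sized component of the giant, and bound the number of crossing edges. But the crucial technical step has a genuine gap. You condition only on the coordinate-$d$ distance $r = |u_d - v_d|_{\T}$ and bound the conditional edge probability by $p_{uv}(r) \asymp \min\{1, (w_u w_v/(n r^{D}))^{\alpha}\}$ (with $D = \mathcal{D}(\kappa)$, not $d$ as you wrote — the relevant exponent comes from $V_\kappa(r) = \Theta(r^{D})$, Proposition~\ref{prop:volume}). That bound is valid because $\kappa(x_u - x_v) \geq r$, but it is far too crude: conditional on $|x_{u,d} - x_{v,d}| = r$ being small, the full distance $\kappa(x_u - x_v) = \max(r, \kappa_0((x_u - x_v)_{-d}))$ is typically $\Theta(1)$, not $\Theta(r)$, and so $\E[p_{uv}(\kappa)\mid r]$ is smaller than $p_{uv}(r)$ by roughly a factor $V_{\kappa_0}(r) = \Theta(r^{D-1})$. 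Your claim that the resulting integral is "identical in form" to the $L^{\infty}$ separator integral is not correct: that integral carries the radial Jacobian $\Theta(r^{D-1})$ from the density of the $\kappa$-distance, whereas your integral only has the flat density of the one-dimensional coordinate-$d$ distance.

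Concretely, your bound gives $\rho_{uv} \leq O(\gamma_{uv}^{\min(2, D\alpha)})$ (up to logs) with $\gamma_{uv} = \min\{(w_u w_v/n)^{1/D},\,1/2\}$, whereas the correct computation yields $O(\gamma_{uv}^{\min(D+1, D\alpha)}\log n)$. Since every SCOM BDF in dimension $d \geq 2$ has $D = 1 + \mathcal{D}(\kappa_0) \geq 2$, and $\alpha > 1$, your exponent is $2$ while the correct one is at least $3$; summing over pairs one gets $\sum_{u,v}\gamma_{uv}^{2} = \sum_{u,v}\min\{w_u w_v/n, 1/4\} = \Theta(n)$ even after restricting to constant-weight vertices, so Markov does not give $o(n)$. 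The weight-truncation device you propose cannot repair this, because the shortfall is already present in the bulk of the pairs with $w_u, w_v = \Theta(1)$. The fix is the paper's: work with the full $\kappa$-distance $r = \kappa(x_u - x_v)$, use its density $\varrho[G^r_{u,v}] = \Theta(r^{D-1})$, and prove separately (via conditional independence given the coordinate-$d$ distance $r_1 \leq r$) that $\Pr[\text{crossing} \mid \kappa\text{-distance} = r] \leq 2r$.
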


We note that the separators can be described explicitly. One can consider two hyperplanes that are perpendicular to the singled-out coordinate axis, which bisect the ground space into two halves. It is then possible to show that a sublinear number of edges cuts across the hyperplanes, connecting two linear-sized components, hence yielding the natural small separators.

We then proceed to show - leveraging a modified version of the algorithm by Lengler and Todorovi\'{c}~\cite{lengler2017existence} - that for all other BDF-GIRGs, a two-round exposure of the vertex coordinates generates a graph where all potential sparse cuts in the giant component are erased in the second round. The two key modifications concern first the partition of the coordinates into batches. Here, the crucial ingredient is an upper bound on the distance function by a minimum of two distance functions which involve each only a subset of (disjoint) coordinates. This then enables an extension of the two-round exposure to arbitrary non-SCOM BDF-GIRGs, by allowing an (under)estimate of the edges contributed by the second exposure round. We also need to adjust the criteria for when edges are inserted. With these additional insights we are able to prove our second main result:

\begin{theorem}\label{thm:robust}
    Let $G$ be a GIRG induced by a non-SCOM BDF $\kappa$ acting on the $d$-dimensional torus $\T^d$. Then, with probability $1-o(1)$, the giant component of $G$ has no separator of size $o(n)$. 
\end{theorem}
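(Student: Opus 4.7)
The plan is to adapt the two-round coordinate-exposure argument of Lengler and Todorovi\'{c}~\cite{lengler2017existence} from MCD-GIRGs to the full non-SCOM BDF setting. The central structural ingredient I would establish first is that every non-SCOM BDF $\kappa$ on $\T^d$ admits two disjoint coordinate subsets $A, B \subseteq [d]$ and two BDFs $\kappa_A$ (on the $A$-coordinates) and $\kappa_B$ (on the $B$-coordinates) with $\kappa \le \min(\kappa_A, \kappa_B)$ pointwise. The proof is an induction on the min/max formula tree of $\kappa$: a $\min$ at the outermost level immediately yields $\kappa \le \kappa_i$ for two children with disjoint coordinate supports, while other shapes are handled by choosing the partition so that each inner sub-formula can still be dominated on both sides. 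The SCOM condition is exactly the obstruction to such a split, since in the SCOM case some coordinate must appear in every upper-bound formula on $\kappa$.

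Given this decomposition, I would construct a subgraph $H$ of the BDF-GIRG $G$ in two rounds. In Round~1 the vertex weights and the $A$-coordinates are exposed, and for each pair $u,v$ an edge is inserted independently with probability a small constant fraction of the connection probability induced by $\kappa_A$; in Round~2 the $B$-coordinates are exposed and additional edges are sprinkled independently at a constant fraction of the connection probability induced by $\kappa_B$. Because $\kappa \le \min(\kappa_A, \kappa_B)$, a union bound on per-pair edge probabilities lets us couple $H$ as a subgraph of $G$, so any sublinear separator of $G$'s giant would also be a sublinear separator of $H$'s giant, and it suffices to work with $H$. Crucially, the $B$-coordinates are independent of Round~1 even after conditioning on the Round~1 graph, since the weights are shared data and the coordinate vectors have independent entries.

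After Round~1 the graph is, up to a constant factor, a $\kappa_A$-GIRG on the $A$-coordinates; by the giant-component results for GIRGs, which are insensitive to the choice of distance function, it already contains a unique linear-sized giant component. I would then follow the Lengler-Todorovi\'{c} template: restrict attention to the bounded-weight vertices $V'$ of the giant and use an Azuma-Hoeffding-type estimate to show that the giant absorbs only few further vertices during Round~2, keeping the family of candidate bipartitions of the final giant subexponential in $n$. For each such bipartition $(S, \bar S)$ with $|S|, |\bar S| = \Omega(n)$, a Chernoff bound on the linear number of independent Bernoulli sprinklings between $S \cap V'$ and $\bar S \cap V'$ in Round~2 yields $\Omega(n)$ crossing edges except with probability $e^{-\Omega(n)}$. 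A union bound over the surviving candidates eliminates every sublinear separator with probability $1 - o(1)$.

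The principal obstacle is the structural lemma: one must verify the disjoint-coordinate bound $\kappa \le \min(\kappa_A, \kappa_B)$ in full generality for non-SCOM BDFs, via a careful case analysis on the formula tree, leveraging that the only obstruction is precisely the single-coordinate outer-max pattern. Once this is in hand, the probabilistic steps are a faithful adaptation of~\cite{lengler2017existence} in which MCD-specific volume and boundary estimates are replaced by their counterparts for $\kappa_A$ and $\kappa_B$, while the bounded-weight filter that keeps the candidate cut-family small carries over essentially verbatim.
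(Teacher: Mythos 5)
Your proposal follows the paper's proof nearly step for step: the disjoint-coordinate upper bound $\kappa \le \min(\kappa_A,\kappa_B)$ is exactly Proposition~\ref{prop:bdfb} (which the paper obtains via Propositions~\ref{prop:bdfup} and~\ref{prop:scomup}, the latter hinging on the distributive inequality $\max(\min(a,b),\min(c,d)) \le \min(\max(a,c),\max(b,d))$), and the two-round coordinate exposure, the bounded-weight filter on the first-round giant, the Azuma--Hoeffding control of the giant's growth, the Luczak--McDiarmid count of sparse bipartitions, and the Chernoff-plus-union-bound sprinkling are precisely the paper's adaptation of the Lengler--Todorovi\'{c} argument. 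The only cosmetic difference is that the paper specializes $\kappa_A$ and $\kappa_B$ to max-norms on coordinate subsets $S_1,S_2$ with $\min(|S_1|,|S_2|)=\mathcal{D}(\kappa)$, which makes the volume comparison $V_{\kappa}(r)\le c\,V_{\kappa'}(r)$ needed for the coupling immediate.
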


Together, Theorems~\ref{thm:natural} and~\ref{thm:robust} provide a complete characterization of the occurrence of sublinear separators in BDF-GIRGs. Finally, we show that all BDFs satisfy a stochastic version of the triangle inequality. From this, it immediately follows that all BDF-GIRGs exhibit clustering. 

\begin{theorem}\label{thm:clustering}
    Let $G$ be a GIRG induced by a BDF $\kappa$ acting on the $d$-dimensional torus $\T^d$. Then, with probability $1-o(1)$, its clustering coefficient is constant, i.e.\ $\textsc{cc}(G) = \Theta(1)$. 
\end{theorem}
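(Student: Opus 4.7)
The plan is to reduce Theorem~\ref{thm:clustering} to a stochastic triangle inequality (STI) for BDFs, which then plugs into the standard GIRG clustering calculation. Precisely, I want to show that for any BDF $\kappa$ on $\T^d$ there exist constants $C\ge 1$ and $c>0$ such that for independent uniform $u,v,w\in\T^d$ and every $r>0$,
\[
\pr{\kappa(u,w)\le Cr \,\big|\, \kappa(u,v)\le r,\ \kappa(v,w)\le r} \;\ge\; c.
\]
The proof proceeds by structural induction on the formula tree defining $\kappa$ in Definition~\ref{def:bdf}, whose leaves are single-coordinate torus distances (base case: deterministic triangle inequality with $C=2,\ c=1$) and whose internal nodes are $\max$ or $\min$ applied to sub-BDFs $\kappa_1,\kappa_2$ on disjoint coordinate blocks $S_1,S_2$.

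For a $\max$-node $\kappa=\max(\kappa_1,\kappa_2)$, conditioning on $\kappa(u,v)\le r$ and $\kappa(v,w)\le r$ forces $\kappa_i(u,v)\le r$ and $\kappa_i(v,w)\le r$ for both $i$. Because these conditionings factor across the disjoint blocks $S_1,S_2$ and the coordinates of $u,v,w$ are a priori independent, the inductive STI applied separately to $\kappa_1$ and $\kappa_2$ yields $\{\kappa_i(u,w)\le Cr\}$ with constant conditional probability independently across $i$; their intersection still has constant probability and implies $\kappa(u,w)\le Cr$. For a $\min$-node $\kappa=\min(\kappa_1,\kappa_2)$, translation invariance on $\T^d$ together with the conditional independence of $u,w$ given $v$ gives $\pr{\kappa(u,v)\le r,\kappa(v,w)\le r}=P(r)^2$ with $P(r)\le p_1(r)+p_2(r)$ and $p_i(r):=\pr{\kappa_i(u,v)\le r}$. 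Meanwhile, $\kappa\le\kappa_i$ pointwise implies that the numerator dominates $\pr{\kappa_i(u,w)\le Cr,\,\kappa_i(u,v)\le r,\,\kappa_i(v,w)\le r}$, which the inductive STI for $\kappa_i$ lower-bounds by $c\cdot p_i(r)^2$. Choosing $i$ to maximise $p_i$ yields a ratio of at least $c/4$, closing the induction.

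Once STI is available, Theorem~\ref{thm:clustering} follows by the standard GIRG clustering argument. For a typical constant-weight vertex $v$, the expected number of its constant-weight neighbours is $\Theta(1)$, and any such neighbour $u$ satisfies $\kappa(u,v)=O(n^{-1/d})$ with constant conditional probability. STI applied at scale $r=\Theta(n^{-1/d})$ then shows that for any two such neighbours $u,w$, $\kappa(u,w)=O(n^{-1/d})$ with constant probability, at which distance the GIRG connection probability is $\Omega(1)$. Hence the expected number of triangles through $v$ is a constant fraction of the expected number of wedges through $v$, giving $\ex{\textsc{cc}(G)}=\Theta(1)$. A routine second-moment calculation on triangle and wedge counts, essentially identical to the Euclidean case treated in~\cite{bringmann2016average} with STI substituted for the deterministic triangle inequality, promotes this to a whp statement.

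The main obstacle is the $\min$-node case of the STI induction: unlike the $\max$-case where the two conditioning events factor cleanly over the disjoint coordinate blocks, here each conditioning event is a union (either $\kappa_1$ or $\kappa_2$ is the small one), and the four possible ``match patterns'' across the edges $uv$ and $vw$ interact nontrivially. Crucially, the mismatched patterns, where different sub-distances are small on $uv$ and on $vw$, give no useful control on $\kappa(u,w)$, so one must extract the constant conditional mass from the matched patterns alone; this is exactly what the $c/4$ bound achieves. Away from this subtlety the argument is a faithful generalisation of the classical GIRG clustering proof with STI in place of the Euclidean triangle inequality.
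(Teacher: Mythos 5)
Your proposal is correct in outline and reaches the same key destination as the paper: a stochastic triangle inequality (STI) for BDFs, which then feeds into the standard GIRG clustering machinery. But you prove the STI by a genuinely different route. You run a structural induction on the $\max$/$\min$ tree defining $\kappa$, handling the $\max$-node via the factorisation of the conditioning over the disjoint coordinate blocks and handling the $\min$-node via the $c/4$ argument that extracts the constant mass from the ``matched'' patterns. This is a self-contained argument, but it is considerably more involved than what the paper actually does: the paper simply invokes Proposition~\ref{prop:bdfup} (which it has already proved), namely that $\kappa(x) \le \max_{i\in S}|x_i|$ for some $S$ with $|S| = \mathcal{D}(\kappa)$, and then observes that for $x_1, x_2$ uniform in the $\varepsilon$-ball of $\kappa$, the event $\{\max_{i\in S}|x_{1,i}| \le \varepsilon,\ \max_{i\in S}|x_{2,i}| \le \varepsilon\}$ has $\Theta(1)$ probability by Proposition~\ref{prop:volume} and already forces $\kappa(x_1-x_2) \le 2\varepsilon$ by the deterministic triangle inequality on the max-norm. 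This gives $C=2$ in one stroke; your induction also yields $C = 2$ (since $\max$ and $\min$ nodes preserve the leaf constant), but your $c$ degrades multiplicatively with tree depth, whereas the paper's is a single volume ratio. What your approach buys is that it does not presuppose Proposition~\ref{prop:bdfup}; what the paper's buys is brevity, since that proposition was already needed elsewhere. Two small points to tidy: (i) the paper's STI (Definition~\ref{def:stochtriangle}) has a second, volume-doubling condition $\liminf V_\kappa(\varepsilon)/V_\kappa(C\varepsilon) > 0$, which you do not address but which is immediate from $V_\kappa(r) = \Theta(r^{\mathcal{D}(\kappa)})$; you should state it for completeness. (ii) The typical neighbour scale for a constant-weight vertex is $r = \Theta(n^{-1/\mathcal{D}(\kappa)})$, not $\Theta(n^{-1/d})$, since $V_\kappa(r) \asymp r^{\mathcal{D}(\kappa)}$. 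Finally, the reduction from STI to $\textsc{cc}(G) = \Theta(1)$ that you sketch by hand is exactly Theorem~A.4 of~\cite{lengler2017existence}, which the paper cites directly rather than re-deriving.
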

In social networks, the clustering coefficient has a natural interpretation: What proportion of the friends of a node are also mutual friends ? In sparse networks, a pair of nodes $u,v$ having a common friend thus greatly boosts their chances of being directly connected. Recently it has further been shown that, in a different setting, namely for GIRGs whose distance function is induced by an $L_p$-norm, the clustering coefficient can be used to estimate the dimension of the underlying ground space~\cite{friedrich2023simple}. The precise definition of $\textsc{cc}(G)$ and the proof of Theorem~\ref{thm:clustering} can be found in Appendix \ref{sec:clustering}. In the rest of the paper, we will say that an event happens \emph{with high probability (w.h.p.)} if it happens with probability $1-o(1)$.

\section{Geometric Inhomogeneous Random Graphs and Underlying Geometric Spaces}

In this section, we define Boolean Distance Functions (BDFs) as well as Geometric Inhomogeneous Random Graphs (GIRGs). At the end of the section we give some useful properties of GIRGs and two more general lemmata that we will be needed in the proofs.

\subsection{Definition of Boolean Distance Functions}

We start by introducing the notion of Boolean Distance Functions. They define a symmetric and translation-invariant distance between any two points in the $d$-dimensional torus $\T^d \coloneqq \R^d/\Z^d$, and hence can be characterized by an even function $\kappa : \T^d \rightarrow \R_{\geq 0}$.

The distance $dist(x,y)$ between two points $x,y\in\T^d$ is then given by $dist(x,y)\coloneqq \kappa(x-y)$. For the sake of simplicity, all the properties of the distance $dist(\cdot,\cdot)$ will be expressed in terms of $\kappa(\cdot)$. Note that on the one-dimensional torus, the distance between two points $x,y\in[0,1)$ is given by
\begin{align*}
    |x - y|_T \coloneqq \min(|x - y|, 1 - |x - y|).
\end{align*}
A Boolean Distance Function is then defined recursively as follows.

\begin{definition}
\label{def:bdf}
Let $d\in\N$ be a positive integer, and let $\kappa : \T^d \rightarrow \R_{\geq 0}$ and let $x = (x_1, x_2, .., x_d) \in \T^d$ be an arbitrary point. Then $\kappa$ is a \emph{Boolean Distance Function (BDF)} if:
\begin{itemize}
    \item When $d = 1$, then $\kappa(x) = |x|_T$.
    
    \item When $d \ge 2$, then there exists a non-empty proper subset $S \subsetneq [d]$ of coordinates such that
    \begin{align*}
        \kappa(x) = \max(\kappa_1((x_i)_{i \in S}), \kappa_2((x_i)_{i \notin S})) \quad \textnormal{or} \quad \kappa(x) = \min(\kappa_1((x_i)_{i \in S}), \kappa_2((x_i)_{i \notin S})),
    \end{align*}
    where $\kappa_1: \T^{|S|} \rightarrow \R_{\geq 0}$ and $\kappa_2: \T^{d-|S|} \rightarrow \R_{\geq 0}$ are Boolean Distance Functions.
\end{itemize}

In the $d\ge2$ case, we call $\kappa_1$ and  $\kappa_2$ the \emph{comprising functions} of $\kappa$. Moreover, we say that $\kappa$ is an \emph{outer-max} BDF if it is defined recursively as the maximum of two other BDFs, and we say $\kappa$ is an \emph{outer-min} BDF if it is defined recursively as the minimum of two other BDFs.
\end{definition}
Since we only work with the torus geometry in this paper, we will simply write $|x|$ instead of $|x|_T$.

Observe that the max-norm $\kappa(x) \coloneqq \max_{i\in [d]} |x_i|$ and the minimum component distance (MCD) $\kappa(x) \coloneqq \min_{i\in [d]} |x_i|$ are both BDFs.

Given a Boolean Distance Function $\kappa$, we write $B_{\kappa}^r(x) \coloneqq \{y\in\T^d : \kappa(x-y) < r\}$ for the ball centered at point $x$ of radius $r$ with respect to the distance function induced by $\kappa$. Moreover, we write $V_{\kappa}(r)$ for the volume (or Lebesgue measure) of that ball. We will sometimes drop the $\kappa$ index and just write $V(r)$ for that quantity when the underlying BDF is clear from context.

We now define the notion of depth of a Boolean Distance Function. As we will see in Lemma~\ref{prop:volume}, $V_{\kappa}(r)$ is characterized (as a function of $r$) by the depth of $\kappa$.

\begin{definition}\label{def:depth}
Let $\kappa : \T^d \rightarrow \R_{\geq 0}$ be a Boolean Distance Function. Then the \emph{depth} of $\kappa$, written $\mathcal{D}(\kappa)$, is defined recursively as follows:
\begin{itemize}
    \item When $d = 1$, then $\mathcal{D}(\kappa) \coloneqq 1$.
    
    \item When $d\ge2$, then $\mathcal{D}(\kappa) \coloneqq \mathcal{D}(\kappa_1) + \mathcal{D}(\kappa_2)$ if $\kappa$ is outer-max, and $\mathcal{D}(\kappa) \coloneqq \min(\mathcal{D}(\kappa_1),\mathcal{D}(\kappa_2))$ if $\kappa$ is outer-min, where $\kappa_1$ and $\kappa_2$ are the comprising functions of $\kappa$. 
\end{itemize}
\end{definition}

We now define the notion of Single-Coordinate Outer-Max (SCOM) Boolean Distance Function. These are BDFs that can be written as the maximum of a single coordinate and some other BDF acting on the remaining coordinates. As mentioned in the introduction, this property characterizes whether BDF-GIRGs have small separators or not.

\begin{definition}
Let $\kappa : \T^d \rightarrow \R_{\geq 0}$ be a Boolean Distance Function. We say that $\kappa$ is \emph{Single-Coordinate Outer-Max (SCOM)} if it can be written as
\begin{align*}
    \kappa(x) = \max(|x_k|, \kappa_0((x_i)_{i\neq k}))
\end{align*}
for some coordinate $k\in [d]$ and some BDF $\kappa_0: \T^{d-1} \rightarrow \R_{\geq 0}$. In dimension 1, we also say that $\kappa(x) = |x|$ is a SCOM BDF.
\end{definition}
Note that this means that the unique BDF acting on $d = 1$ dimension is SCOM.

\subsection{Definition of Geometric Inhomogeneous Random Graphs}

We now define Geometric Inhomogeneous Random Graphs (GIRGs). This model was initially introduced for the max-norm in~\cite{bringmann2019geometric}. Here, we use a version of this definition given in~\cite{lengler2017existence} that we generalize to cover any Boolean Distance Function as the underlying distance. Throughout the whole paper, we will consider undirected graphs with vertex set denoted by $\mathcal{V} \coloneqq [n]$ and edge set denoted by $\mathcal{E}$. Before we give the definition of GIRGs, we need to introduce the concept of (deterministically) power-law distributed weights. 
Intuitively, following a power law means that the proportion of vertices at a given weight $w$ decays as a polynomial in $w$.

\begin{definition}\label{def:power-law}
    Let $(w_v)_{v\in\mathcal{V}}$ be a sequence of weights associated with the vertex set, and let $\mathcal{V}_{\ge w} \coloneqq \{ v \in \mathcal{V} \mid w_v \geq w\}$ denote the set of vertices with weight at least $w$. We say that this sequence is \emph{power-law distributed with exponent $\beta$} if $w_v \ge 1$ for all $v\in\mathcal{V}$ and if the two following conditions are satisfied:
    \begin{enumerate}
        \item There exists some $\overline{w} = \overline{w}(n) \geq n^{\omega(1 / \log \log n)}$ such for all constant $\eta > 0$ there is a constant $c_1>0$ such that for all $1 \le w \le \overline{w}$,
        \begin{align}\label{eq:pl1}
            |\mathcal{V}_{\ge w}| \ge c_1 \frac{n}{w^{\beta - 1 + \eta}}. \tag{PL1}
        \end{align}

        \item For all constant $\eta > 0$ there is a constant $c_2>0$ such that for all $w \ge 1$,
        \begin{align}\label{eq:pl2}
            |\mathcal{V}_{\ge w}| \le c_2 \frac{n}{w^{\beta - 1 - \eta}}. \tag{PL2}
        \end{align}
    \end{enumerate}x
\end{definition}

Now, we are ready to define $\kappa$-GIRGs.
\begin{definition}\label{def:girg}
    Let $\beta\in (2,3)$, $\alpha>1$, $d\in\mathbb{N}$, and let $\kappa : \T^d \rightarrow \R_{\geq 0}$ be a Boolean Distance Function. Let $(w_v)_{v\in\mathcal{V}}$ be a sequence of weights that is power-law distributed with exponent $\beta$. A \emph{$\kappa$-Geometric Inhomogeneous Random Graph ($\kappa$-GIRG)} is obtained by the following two-step procedure:
    \begin{enumerate}
        \item Every vertex $v\in\mathcal{V}$ draws independently and uniformly at random\ a position $x_v$ in the torus $\T^d$.

        \item For every two distinct vertices $u,v \in\mathcal{V}$, we add an edge between $u$ and $v$ in $\mathcal{E}$ independently with some probability $p_{uv}$ satisfying
        \begin{align}\label{eq:girg-connection}
            c_L \cdot \min \Big\{ \frac{w_u w_v}{n \cdot V_{\kappa}(\kappa(x_u - x_v))} , 1 \Big\}^{\alpha} \le p_{uv} \le c_U \cdot \min \Big\{ \frac{w_u w_v}{n \cdot V_{\kappa}(\kappa(x_u - x_v))} , 1 \Big\}^{\alpha} \tag{EP}
        \end{align}
        for some constants $c_U \ge c_L > 0$.
    \end{enumerate}
\end{definition}

\subsection{Some useful GIRG properties}\label{sec:girg_properties}

We now recall some known properties of GIRGs that are crucially used for the proofs in later sections. We first remark that \cite{bringmann2016average} uses a very general definition of GIRG, and in particular our definition of $\kappa$-GIRG (Definition \ref{def:girg}) fits in their geometric setting. The first property is the existence and uniqueness of a connected component of linear size (i.e.\ containing $\Omega(n)$ vertices), which we call the \emph{giant component} of the graph. 

\begin{theorem}[Theorems 5.9 and 7.3 in~\cite{bringmann2016average}]
    \label{thm:giant}
    Let $\kappa : \T^d \rightarrow \R_{\geq 0}$ be a Boolean Distance Function and let $\mathcal{G} = (\mathcal{V}, \mathcal{E})$ be a $\kappa$-GIRG. There exists a constant $s_{max}>0$ such that w.h.p. $\mathcal{G}$ has a connected component of size at least $s_{max} n$. Additionally, w.h.p.\ all other connected components are at most poly-logarithmic in size, i.e., contain at most $\log^{O(1)} n $ vertices.
\end{theorem}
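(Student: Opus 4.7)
The plan is to adapt the standard giant-component analysis for inhomogeneous random graphs to the BDF setting, exploiting the fact that the normalization by $V_\kappa$ inside the connection probability (\ref{eq:girg-connection}) renders the first moment of the edge count BDF-independent. Concretely, I would first compute the expected number of neighbors of a fixed vertex $v$ of weight $w_v$. The integrand $\min\{w_u w_v / (n V_\kappa(\kappa(x-x_v))), 1\}^\alpha$ depends on $x$ only through $V_\kappa(\kappa(x-x_v))$, and since this random variable is uniform on $[0,1]$ when $x$ is uniform on $\T^d$ (by the definition of $V_\kappa$ as the cumulative volume), the integral reduces to $\int_0^1 \min\{w_u w_v / (nu), 1\}^\alpha\,du = \Theta(w_u w_v / n)$ for $\alpha > 1$. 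Summing over the weight sequence using the power-law bounds (\ref{eq:pl1})--(\ref{eq:pl2}) then yields expected degree $\Theta(w_v)$, uniformly in $\kappa$. This is what will make the whole argument BDF-oblivious.

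Next I would establish \emph{existence} of a linear-sized component by a branching-process coupling. Since $\beta\in(2,3)$, the size-biased offspring distribution induced by the weight sequence has mean strictly greater than $1$ (and actually infinite variance), so a local-weak-limit argument shows that BFS from a uniformly random root survives with constant probability. A bounded-differences martingale on the edge-exposure filtration then upgrades the survival probability to the existence of a component of size at least $s_{\max}\cdot n$ with high probability.

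For \emph{uniqueness} and the $\log^{O(1)} n$ bound on the other components, I would employ a sprinkling strategy: expose a $(1-\varepsilon)$-fraction of each potential edge first, produce a giant of linear size in the resulting subgraph by the argument above, and then show in a second round that any two candidate giants $A,B$ of linear size contribute $\Omega(n^2)$ pairs whose sprinkled connection probability is bounded below by a positive constant. A standard Chernoff bound then implies that with probability $1 - e^{-\Omega(n)}$ at least one sprinkled edge merges them, ruling out two disjoint giants. For the polylogarithmic bound I would explore from an arbitrary non-giant vertex, dominate the exploration by a subcritical Galton--Watson process obtained after peeling off the surviving branches, and use that its total progeny has a tail of the form $e^{-\Omega(k)}$; a union bound over the $n$ possible starting vertices is then absorbed by choosing $k = \log^{O(1)} n$.

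The main obstacle I expect is making the isoperimetric and sprinkling steps uniform over all BDFs, since the geometric boundary of a vertex set under an outer-min $\kappa$ can be extremely irregular (this is precisely the phenomenon exploited in Theorem~\ref{thm:natural}). The resolution is that the argument can be phrased in purely weight-based terms: the expected number of edges between two vertex sets depends, after averaging over positions, only on their weight sequences and $n$, and not on $\kappa$ at all. This is also the reason why the cited results of Bringmann, Keusch and Lengler go through for the very general connection probabilities of the form (\ref{eq:girg-connection}), and it is exactly the lever we rely on to invoke Theorem~\ref{thm:giant} for arbitrary BDFs.
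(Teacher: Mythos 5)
The paper does not prove this statement; it is imported verbatim from Bringmann, Keusch and Lengler (Theorems~5.9 and~7.3 in~\cite{bringmann2016average}), and the only argument the paper offers is the remark preceding the theorem that Definition~\ref{def:girg} falls within the general geometric framework of that reference. Your proposal instead sketches a full re-derivation from scratch. The routes are therefore genuinely different, but they share the same conceptual lever, and you identify it correctly: because the connection probability in~\eqref{eq:girg-connection} depends on positions only through $V_\kappa(\kappa(x_u-x_v))$, and because $V_\kappa(\kappa(X))$ is uniform on $[0,1]$ for $X$ uniform on $\T^d$ (the probability integral transform, valid since $V_\kappa$ is a continuous CDF for any BDF), all first-moment edge computations are BDF-independent. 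This is exactly the observation that makes the general axiomatization in~\cite{bringmann2016average} apply to arbitrary BDFs, so you have, in effect, reverse-engineered the reason the paper's citation is legitimate.

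That said, several steps of your sketch are left at the level of a plan rather than a proof, and some would need real work. The ``bounded-differences martingale on the edge-exposure filtration'' does not upgrade a constant survival probability to a linear component directly: conditioned on positions the edges are independent, but the positions themselves are not bounded-difference inputs for the component-size functional, which is why~\cite{bringmann2016average} argue via a more careful second-moment / sprinkling scheme rather than Azuma. More significantly, the polylogarithmic bound for non-giant components is the hardest part of the cited Theorem~7.3, and ``dominate the exploration by a subcritical Galton--Watson process obtained after peeling off the surviving branches'' glosses over the fact that for $\beta\in(2,3)$ the raw offspring distribution is supercritical with infinite mean; one has to first show that all vertices above a weight threshold $w^*$ lie in the giant w.h.p., and only the exploration restricted to low-weight vertices is dominated by a subcritical process. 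If you intend this as a standalone proof rather than a rationale for the citation, those two steps need to be fleshed out; as a justification for why the cited theorem applies to BDF-GIRGs, your first paragraph already suffices.
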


Another very useful result is the characterization of the degree distribution of GIRGs. The following lemma tells us that with high probability the degree distribution follows a power-law with the same exponent as the weight sequence.

\begin{lemma}[Theorems 6.3 and 7.3 in~\cite{bringmann2016average}]
    \label{lem:deg}
    Let $(\beta\in (2,3)$ and let $\kappa : \T^d \rightarrow \R_{\geq 0}$ be a Boolean Distance Function. Let $\mathcal{G} = (\mathcal{V}, \mathcal{E})$ be a $\kappa$-GIRG whose weight sequence $(w_v)_{v\in\mathcal{V}}$ follows a power-law with exponent $\beta$. Then the following holds:
    \begin{enumerate}
        \item For all $\eta > 0$, there exists $c_3 > 0$ such that w.h.p.\ for all $1 \le w \le \overline{w}$ (where $\overline{w}$ is the same as in Definition~\ref{def:power-law}),
        \begin{align*}
            |\{ v \in \mathcal{V} : \deg(v) \geq w\}| \ge c_3 \frac{n} {w^{\beta - 1 + \eta}}.
        \end{align*}

        \item For all $\eta > 0$, there exists $c_4 > 0$ such that w.h.p.\ for all $w \ge 1$,
        \begin{align*}
            |\{ v \in \mathcal{V} : \deg(v) \geq w\}| \le c_4 \frac{n} {w^{\beta - 1 - \eta}}.
        \end{align*}

        \item There exists $C>0$ such that for all $v\in\mathcal{V}$ we have $\E[\deg(v)] \le C w_v$, and moreover with probability $1-n^{-\omega(1)}$ we have $\deg(v) \le C \cdot (w_v + \log^2 n)$ for all $v\in\mathcal{V}$.
        
    \end{enumerate}
\end{lemma}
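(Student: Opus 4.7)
The plan is to reduce everything to a single expected-degree computation, from which the tail bounds in parts~(1) and~(2) and the concentration in part~(3) follow by standard arguments.

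First I would establish part~(3). Since the positions $(x_u)_{u\in\mathcal{V}}$ are independent and uniform on $\T^d$, conditional on the weights we have $\E[\deg(v)] = \sum_{u\neq v}\E_{x_u}[p_{uv}]$. Let $r^{*}=r^{*}(w_u,w_v)$ solve $V_{\kappa}(r^{*}) = w_uw_v/n$. Inside the ball of radius $r^{*}$ the edge probability is $\Theta(1)$, contributing $\Theta(V_{\kappa}(r^{*}))=\Theta(w_uw_v/n)$ in expectation; outside, one integrates $(w_uw_v/(nV_{\kappa}(r)))^{\alpha}$ against the volume differential, and the assumption $\alpha>1$ makes this integral convergent and dominated by the boundary contribution at $r=r^{*}$, again of order $w_uw_v/n$. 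Summing over $u$ gives $\E[\deg(v)]\le C'\cdot\frac{w_v}{n}\sum_u w_u$, and \eqref{eq:pl2} with $\beta>2$ yields $\sum_u w_u = O(n)$, hence $\E[\deg(v)] = O(w_v)$. For the pointwise statement $\deg(v)\le C(w_v+\log^2 n)$, note that conditional on weights and positions the edges incident to $v$ are independent Bernoullis, so a Chernoff estimate combined with a union bound over $v\in\mathcal{V}$ produces the desired uniform bound with probability $1-n^{-\omega(1)}$.

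Part~(2) then follows essentially for free from part~(3): if $\deg(v)\ge w$, then either we land in the exceptional $n^{-\omega(1)}$ event, or $w_v\ge w/C-\log^2 n$. Applying \eqref{eq:pl2} at this slightly smaller weight threshold bounds the count of such vertices by $O(n/w^{\beta-1-\eta})$, the $\eta$-slack absorbing both the constant $C$ and the additive $\log^2 n$ (the latter being relevant only at constant $w$, where the bound is trivial). Part~(1) is the more delicate direction: fix $w$ and select a threshold weight $w' = w^{1+\Theta(\eta)}$. By \eqref{eq:pl1} there are $\Omega(n/w'^{\beta-1+\eta/2})$ vertices with $w_v\ge w'$; each such vertex has $\E[\deg(v)]=\Omega(w')\gg w$, and a Chernoff bound gives $\Pr[\deg(v)\ge w]\ge 1-\exp(-\Omega(w'))$. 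A further concentration step over this family of high-weight vertices converts the expected count $\Omega(n/w'^{\beta-1+\eta/2})$ into a w.h.p.\ lower bound that matches $c_3\,n/w^{\beta-1+\eta}$ once the inflation from $w'=w^{1+\Theta(\eta)}$ is absorbed.

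The main obstacle is the uniformity over $w\in[1,\overline{w}]$ demanded in parts~(1) and~(2). Since $\overline{w}$ is super-polylogarithmic, a naive union bound over every value of $w$ is unaffordable; instead, one discretizes $w$ into dyadic buckets and exploits the monotonicity of $w\mapsto|\{v:\deg(v)\ge w\}|$ to control all intermediate $w$ from the bucket endpoints. The slack $w'=w^{1+\Theta(\eta)}$ must be chosen large enough that each dyadic endpoint contributes a failure probability $n^{-\omega(1)}$, which survives the $O(\log\overline{w})$ union bound. A secondary subtlety is that the edges incident to different high-weight vertices are not strictly independent (they may share a common endpoint), which is handled either by exposing positions in two rounds or by pairing vertices so that edge events become disjoint, as done in the standard degree-profile analyses for GIRGs.
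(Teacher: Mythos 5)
This lemma is not proved in the paper at all: it is imported verbatim as Theorems~6.3 and~7.3 of~\cite{bringmann2016average}, so there is no ``paper proof'' to compare against. Your sketch is essentially a reconstruction of the standard argument that appears in that reference, and the overall plan (compute the expected degree by integrating the connection kernel over shells, deduce concentration, translate the weight tail into a degree tail using \eqref{eq:pl1} and \eqref{eq:pl2}, and discretize dyadically to get uniformity in $w$) is the right one.

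A few places where your outline is thinner than what is actually needed. For part~(1) you implicitly use $\E[\deg(v)] = \Omega(w_v)$, but you only derive the upper bound $\E[\deg(v)] = O(w_v)$; the matching lower bound requires the lower end of the connection probability in~\eqref{eq:girg-connection} together with Proposition~\ref{prop:volume} to show that the ball of volume $\Theta(w_uw_v/n)$ already contributes $\Theta(w_uw_v/n)$ to the expectation, and then \eqref{eq:pl1} to lower-bound $\sum_u w_u$. For the concentration step in part~(3), note that a Chernoff bound conditional on positions controls $\deg(v)$ around its \emph{conditional} mean $\sum_u p_{uv}(x_u,x_v)$, which is itself random; one needs an additional step (either a second Chernoff/McDiarmid bound over the position randomness, or the typical bounded-differences inequality of Lemma~\ref{lem:azuma}, which is the route taken in the reference) to carry the bound back to $Cw_v$. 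Finally, your remark that different vertices' degree events are dependent is correct, but the mechanism is subtler than ``shared endpoint'': conditionally on all positions the edges are mutually independent, so the dependence across $v$ lives entirely in the positions; the two-round exposure you allude to is the standard way of decoupling this, and it is also why the additive $\log^2 n$ slack in part~(3) is written the way it is. None of these are fatal --- they are exactly the moving parts one must fill in to make the sketch a complete proof --- but they are the parts you would need to spell out if this lemma were to be proved rather than cited.
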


\section{Properties of Boolean Distance Functions}\label{sec:bdf}

In this section, we provide a few propositions about Boolean Distance Functions together with the main proof ideas. The complete proofs are given in Appendix \ref{sec:proof_bdf}. These propositions will be used for deriving our main results, but we believe they are also interesting results on their own. Remember that $V_{\kappa}(r)$ denotes the volume of a ball of radius $r$, with the distances measured with respect to $\kappa$. We start by analyzing how $V_{\kappa}(r)$ behaves as $r\rightarrow0$. When $\kappa : \T^d \rightarrow \R_{\geq 0}$ is the max-norm we have $V_{\kappa}(r)= \Theta(r^d)$, while for the MCD $\kappa(x) = \min_{i\in [d]} |x_i| $ we have $V_{\kappa}(r)= \Theta(r)$. The following proposition generalizes this to arbitrary BDFs.

\begin{proposition}\label{prop:volume}
	 Let $\kappa : \T^d \rightarrow \R_{\geq 0}$ be a Boolean Distance Function of depth $\mathcal{D}(\kappa)$ (see Definition~\ref{def:depth}). Then $V_{\kappa}(r) = \Theta(r^{\mathcal{D}(\kappa)})$ as $r\rightarrow0$.
\end{proposition}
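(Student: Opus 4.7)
The plan is to proceed by structural induction on the recursive definition of the BDF $\kappa$, with both the depth $\mathcal{D}(\kappa)$ and the volume behavior $V_\kappa(r)$ unfolding along the same recursion tree.

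For the base case $d=1$, by definition $\kappa(x) = |x|_T$ and $\mathcal{D}(\kappa) = 1$. For sufficiently small $r$ (say $r < 1/2$) the ball $B_\kappa^r(0)$ is the interval $(-r,r)$ on the torus, so $V_\kappa(r) = 2r = \Theta(r) = \Theta(r^{\mathcal{D}(\kappa)})$, which gives the claim.

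For the inductive step I would split on whether $\kappa$ is outer-max or outer-min, with comprising functions $\kappa_1$ on coordinates $S$ and $\kappa_2$ on coordinates $[d]\setminus S$. The decisive observation is that since $\kappa_1$ and $\kappa_2$ act on \emph{disjoint} coordinate sets, the corresponding balls have a product structure in $\T^d$. Concretely, in the outer-max case
\begin{align*}
    B_\kappa^r(0) = B_{\kappa_1}^r(0) \times B_{\kappa_2}^r(0),
\end{align*}
so $V_\kappa(r) = V_{\kappa_1}(r)\cdot V_{\kappa_2}(r)$. Applying the inductive hypothesis twice yields $V_\kappa(r) = \Theta(r^{\mathcal{D}(\kappa_1)})\cdot \Theta(r^{\mathcal{D}(\kappa_2)}) = \Theta(r^{\mathcal{D}(\kappa_1)+\mathcal{D}(\kappa_2)}) = \Theta(r^{\mathcal{D}(\kappa)})$. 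In the outer-min case the ball is the union of two cylindrical sets, and by inclusion-exclusion
\begin{align*}
    V_\kappa(r) = V_{\kappa_1}(r) + V_{\kappa_2}(r) - V_{\kappa_1}(r)\cdot V_{\kappa_2}(r).
\end{align*}
By induction each $V_{\kappa_i}(r) = \Theta(r^{\mathcal{D}(\kappa_i)})$, so as $r\to 0$ both terms tend to $0$, the product term is a strictly higher-order correction, and the sum is dominated by the smaller of the two exponents: $V_\kappa(r) = \Theta(r^{\min(\mathcal{D}(\kappa_1),\mathcal{D}(\kappa_2))}) = \Theta(r^{\mathcal{D}(\kappa)})$, matching the recursive definition of depth in this case.

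The main technical point to be careful about is the behavior of the hidden constants in the $\Theta(\cdot)$ through the recursion: each recursive step multiplies or adds constants, but since $d$ (and hence the depth of the recursion tree defining $\kappa$) is a constant with respect to $n$, the number of such operations is bounded by a constant, so the final constants remain $\Theta(1)$. A minor caveat is that the base case identity $V_\kappa(r)=2r$ only holds for $r$ below some threshold (on the torus the ball saturates at $r=1/2$), so strictly speaking the argument yields the claim for $r$ smaller than some $r_0(\kappa)>0$; since the statement is about $r \to 0$, this suffices.
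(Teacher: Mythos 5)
Your proof is correct and follows essentially the same inductive strategy as the paper's: product structure of the ball for the outer-max case (which the paper phrases via Fubini--Tonelli) and inclusion-exclusion for the outer-min case (which the paper phrases via the complement $1-(1-V_{\kappa_1}(r))(1-V_{\kappa_2}(r))$, an algebraically identical identity). The added remarks about the constants being absorbed because $d$ is constant and about the threshold $r_0(\kappa)$ are sound but do not change the argument.
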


The proof proceeds by induction on the dimension $d$. The induction step consists of writing $V_{\kappa}(r)$ as an integral and, using the recursive structure of the BDF $\kappa$, writing this integral as the product of two integrals, splitting into two different cases depending on whether $\kappa$ is outer-max or outer-min.

In the next proposition, we upper-bound a BDF $\kappa$ by a max-norm over a subset of the coordinates that has size exactly $\mathcal{D}(\kappa)$.

\begin{proposition}
    \label{prop:bdfup}
    Let $\kappa : \T^d \rightarrow \R_{\geq 0}$ be a Boolean Distance Function. Then there exists a subset $S \subseteq [d]$ of the coordinates with $|S| = \mathcal{D}(\kappa)$ such that $\kappa(x) \le \max_{i\in S} |x_i|$ for all $x\in \T^d$.
\end{proposition}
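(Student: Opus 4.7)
The plan is to prove Proposition~\ref{prop:bdfup} by structural induction on the recursive definition of a BDF (equivalently, induction on the dimension $d$). The base case $d=1$ is immediate: $\kappa(x)=|x|$ has depth $1$, and the choice $S=\{1\}$ trivially gives $\kappa(x) \le |x_1|$. The entire content of the argument lies in the inductive step, where one splits into the outer-max and outer-min cases from Definition~\ref{def:bdf} and tracks how the witness subset $S$ is assembled, matching the two cases in Definition~\ref{def:depth}.

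For the inductive step, suppose $d \ge 2$ and write $\kappa$ in terms of its comprising functions $\kappa_1$ and $\kappa_2$, which act on the coordinates in some non-empty proper subset $T \subsetneq [d]$ and its complement $[d]\setminus T$, respectively. By the induction hypothesis, there exist subsets $S_1 \subseteq T$ and $S_2 \subseteq [d]\setminus T$ with $|S_i| = \mathcal{D}(\kappa_i)$ and $\kappa_i \le \max_{j\in S_i}|x_j|$ on their respective factors. The critical observation, which drives the bookkeeping, is that $S_1$ and $S_2$ are automatically disjoint because they live in disjoint coordinate blocks $T$ and $[d]\setminus T$.

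If $\kappa$ is outer-max, set $S := S_1 \cup S_2$. Then for every $x \in \T^d$,
\begin{align*}
    \kappa(x) = \max\bigl(\kappa_1((x_i)_{i\in T}), \kappa_2((x_i)_{i\notin T})\bigr) \le \max\Bigl(\max_{j\in S_1}|x_j|, \max_{j\in S_2}|x_j|\Bigr) = \max_{j\in S}|x_j|,
\end{align*}
and by disjointness $|S| = |S_1| + |S_2| = \mathcal{D}(\kappa_1) + \mathcal{D}(\kappa_2) = \mathcal{D}(\kappa)$, as required. If $\kappa$ is outer-min, then $\kappa(x) \le \kappa_i(x)$ for each $i\in\{1,2\}$, so it suffices to pick whichever index achieves $\mathcal{D}(\kappa) = \min(\mathcal{D}(\kappa_1), \mathcal{D}(\kappa_2))$: letting $i^\star \in \arg\min_{i}\mathcal{D}(\kappa_i)$ and $S := S_{i^\star}$, we obtain $\kappa(x) \le \kappa_{i^\star}(x) \le \max_{j\in S}|x_j|$ with $|S| = \mathcal{D}(\kappa)$.

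There is no real obstacle here; the only thing to be careful about is not to conflate the two cases of Definition~\ref{def:depth}. Specifically, in the outer-min case one must resist the temptation to form a union of $S_1$ and $S_2$ (which would give the wrong cardinality, namely $\mathcal{D}(\kappa_1)+\mathcal{D}(\kappa_2)$ rather than the minimum), and in the outer-max case one must exploit coordinate-disjointness of the comprising functions to conclude that $|S_1\cup S_2|=|S_1|+|S_2|$. Both points are built directly into Definitions~\ref{def:bdf} and~\ref{def:depth}, so the induction goes through cleanly.
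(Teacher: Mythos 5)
Your proof is correct and follows essentially the same route as the paper's: induction on the dimension, taking $S=S_1\cup S_2$ in the outer-max case and $S=S_{i^\star}$ (for the comprising function of smaller depth) in the outer-min case, with the disjointness of $S_1$ and $S_2$ accounting for the cardinality.
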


The proof is again by induction on the dimension. For the induction step, we have (by the induction hypothesis) two subsets $S_1, S_2 \subseteq [d]$ such that the max-norm on these subsets of coordinates is an upper bound for the comprising functions of $\kappa$: taking $S \coloneqq S_1 \cup S_2$ if $\kappa$ is outer-max, respectively the set of smallest cardinality among $S_1, S_2$ if $\kappa$ is outer-min, completes the induction step.

The next proposition allows us to upper-bound any non-SCOM outer-max BDF by an outer-min BDF acting on the same set of coordinates.

\begin{proposition}
    \label{prop:scomup}
    Let $\kappa : \T^d \rightarrow \R_{\geq 0}$ be a non-SCOM outer-max Boolean Distance Function. Then there exists an outer-min BDF $\kappa' : \T^d \rightarrow \R_{\geq 0}$ such that $\mathcal{D}(\kappa) = \mathcal{D}(\kappa')$ and $\kappa(x) \leq \kappa'(x)$ for all  $x \in \T^d$.
\end{proposition}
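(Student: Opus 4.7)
The plan is to exploit the scalar distributive inequality $\max(\min(a,b),\min(c,d)) \leq \min(\max(a,c),\max(b,d))$, which trades an outer-max of outer-mins for an outer-min of outer-maxes while only raising the value. Applying this globally requires first flattening the outer-max layer of $\kappa$, then pairing the comprising functions of the resulting outer-min summands in a depth-respecting way.

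\textbf{Flattening.} Starting from $\kappa = \max(\kappa_1, \kappa_2)$ and iteratively expanding any summand that is itself outer-max, after finitely many steps I obtain $\kappa = \max(f_1, \ldots, f_m)$ for some $m \geq 2$, with coordinate sets $S_1,\ldots,S_m$ disjoint and partitioning $[d]$, and with each $f_i$ being either a single coordinate or an outer-min BDF. The non-SCOM hypothesis rules out the single-coordinate case: if some $f_i = |x_k|$ existed, then $\kappa = \max(|x_k|, \max_{j\neq i} f_j)$ would exhibit $\kappa$ as SCOM. Hence every $f_i$ is outer-min, so $f_i = \min(g_{i,1}, g_{i,2})$ with $g_{i,1}, g_{i,2}$ supported on complementary subsets $S_{i,1}, S_{i,2}$ of $S_i$. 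The depth recursion then gives $\mathcal{D}(\kappa) = \sum_{i=1}^m \min(\mathcal{D}(g_{i,1}), \mathcal{D}(g_{i,2}))$.

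\textbf{Construction of $\kappa'$.} After relabelling within each $f_i$ so that $\mathcal{D}(g_{i,1}) \leq \mathcal{D}(g_{i,2})$ (legal since $\min$ is symmetric), I build two outer-max BDFs by pairing corresponding sides,
\begin{align*}
    h_1 \coloneqq \max(g_{1,1}, g_{2,1}, \ldots, g_{m,1}), \qquad h_2 \coloneqq \max(g_{1,2}, g_{2,2}, \ldots, g_{m,2}),
\end{align*}
each associated as a binary tree of $\max$'s, and set $\kappa' \coloneqq \min(h_1, h_2)$. The supports of $h_1$ and $h_2$ are $\bigcup_i S_{i,1}$ and $\bigcup_i S_{i,2}$, which are disjoint and together equal $[d]$, so $\kappa'$ is a well-defined outer-min BDF on $\T^d$.

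\textbf{Verification and main obstacle.} The outer-max depth recursion gives $\mathcal{D}(h_j) = \sum_i \mathcal{D}(g_{i,j})$, so $\mathcal{D}(\kappa') = \min\bigl(\sum_i \mathcal{D}(g_{i,1}),\, \sum_i \mathcal{D}(g_{i,2})\bigr) = \sum_i \mathcal{D}(g_{i,1}) = \mathcal{D}(\kappa)$ by the WLOG labelling. For the pointwise bound, $\min(g_{i,1}, g_{i,2}) \leq g_{i,j}$ for each $j \in \{1,2\}$ gives $\kappa = \max_i f_i \leq \max_i g_{i,j} = h_j$, whence $\kappa \leq \min(h_1, h_2) = \kappa'$. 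The only delicate step is the flattening, where the non-SCOM assumption must be invoked carefully to forbid single-coordinate leaves at the outer-max level; the rest is a direct application of the scalar distributive inequality combined with the depth recursion.
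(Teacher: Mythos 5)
Your proof is correct and takes a genuinely different route from the paper's. The paper proves Proposition~\ref{prop:scomup} by induction on the dimension $d$, with the explicit base case $d=4$ and $\kappa(x) = \max(\min(|x_1|,|x_2|),\min(|x_3|,|x_4|))$; in the inductive step it converts each comprising function (via the induction hypothesis if it is non-SCOM outer-max, or trivially if it is already outer-min) into an outer-min BDF and then applies the two-term distributive inequality once. You instead fully flatten the outer-max layer into $\max(f_1,\ldots,f_m)$ in a single pass, argue (correctly, using non-SCOM) that every $f_i$ must be outer-min, and then pair the sides of all $m$ mins simultaneously. The key algebraic step is the same distributive inequality $\max(\min(a,b),\min(c,d)) \le \min(\max(a,c),\max(b,d))$; you generalize it to $m$ terms at once rather than iterating it through the induction. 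Your version is arguably more transparent: the resulting $\kappa' = \min(h_1,h_2)$ is read off directly from the structure of $\kappa$ and the depth computation is a one-line sum, whereas the paper's recursion obscures the final object. The paper's induction, on the other hand, is somewhat shorter to state because it delegates the combinatorics to the inductive hypothesis rather than spelling out the flattening. Two small remarks on your write-up: you should note explicitly that the $m$-ary maxes in $h_1$, $h_2$ and in $\max_{j\ne i} f_j$ are realized as binary trees of maxes over disjoint coordinate supports (so that they are indeed BDFs per Definition~\ref{def:bdf}), and the observation that $m\ge 2$ (guaranteed because $\kappa$ is outer-max) is what makes $\max_{j\ne i}f_j$ non-empty and hence the SCOM contradiction well-formed.
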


The proof also proceeds by induction on $d$, with the key inequality being \[\max(\min(x_1, x_2), \min(x_3, x_4)) \leq \min(\max(x_1, x_3), \max(x_2, x_4)).\]

Combining the three previous propositions, we get the following result, which is one of the main building blocks for the proof of Theorem~\ref{thm:robust}.

\begin{proposition}
    \label{prop:bdfb}
    Let $\kappa : \T^d \rightarrow \R_{\geq 0}$ be a non-SCOM Boolean Distance Function. There exist disjoint subsets $S_1, S_2 \subseteq [d]$ with $\min(|S_1|, |S_2|) = \mathcal{D}(\kappa)$ such that, with $\kappa'((x_i)_{i\in S_1 \cup S_2}) \coloneqq \min(\max_{i\in S_1} |x_i|, \max_{i\in S_2} |x_i|)$, it holds for all $x \in \T^d$ that $\kappa(x) \le \kappa'((x_i)_{i\in S_1 \cup S_2})$. Moreover, $\kappa'$ is a BDF and there exists a constant $c > 0$ such that $V_{\kappa}(r) \le c V_{\kappa'}(r)$ for all $r\ge0$.

\end{proposition}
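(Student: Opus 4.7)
The plan is to split into cases based on whether $\kappa$ is outer-max or outer-min, and to reduce the outer-max subcase to the outer-min one using Proposition~\ref{prop:scomup}. Specifically, if $\kappa$ is a non-SCOM outer-max BDF, I would first invoke Proposition~\ref{prop:scomup} to obtain an outer-min BDF $\tilde{\kappa}$ with $\kappa(x) \le \tilde{\kappa}(x)$ for all $x$ and $\mathcal{D}(\tilde{\kappa}) = \mathcal{D}(\kappa)$. From there, it suffices to prove the proposition assuming the BDF at hand is outer-min, since any bound proved for $\tilde\kappa$ transfers to $\kappa$.

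In the outer-min case, write $\kappa = \min(\kappa_1, \kappa_2)$ where $\kappa_1, \kappa_2$ act on disjoint coordinate sets $T$ and $[d] \setminus T$. I would apply Proposition~\ref{prop:bdfup} to each comprising function separately, obtaining disjoint subsets $S_1 \subseteq T$ and $S_2 \subseteq [d] \setminus T$ with $|S_i| = \mathcal{D}(\kappa_i)$ such that $\kappa_i \le \max_{j \in S_i} |x_j|$ for $i = 1,2$. Taking the minimum of both bounds yields exactly $\kappa'((x_i)_{i\in S_1 \cup S_2}) = \min(\max_{j \in S_1} |x_j|, \max_{j \in S_2} |x_j|)$, and by Definition~\ref{def:depth} one has $\mathcal{D}(\kappa) = \min(\mathcal{D}(\kappa_1), \mathcal{D}(\kappa_2)) = \min(|S_1|, |S_2|)$, as required.

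To check that $\kappa'$ is a BDF, I would note that each $\max_{j \in S_i}|x_j|$ is built from the singleton BDFs $|x_j|$ by iterated outer-max composition on the coordinate set $S_i$, hence is a BDF on $\T^{|S_i|}$; then $\kappa'$, being the outer-min of these two, is a BDF on $\T^{|S_1|+|S_2|}$ as well. Its depth equals $\min(|S_1|, |S_2|) = \mathcal{D}(\kappa)$, so Proposition~\ref{prop:volume} gives $V_\kappa(r), V_{\kappa'}(r) = \Theta(r^{\mathcal{D}(\kappa)})$ as $r \to 0$, yielding the desired constant $c$ for small $r$. For $r$ bounded away from $0$, $V_\kappa(r) \le 1$ (the total torus volume) while $V_{\kappa'}(r)$ is monotone in $r$ and bounded below by a positive constant, so the inequality extends uniformly to all $r \ge 0$.

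The main obstacle is the non-SCOM outer-max case: applying Proposition~\ref{prop:bdfup} directly to $\kappa$ would only produce an upper bound via a single max-norm, which has the wrong structural form (a max, not a min of two maxes). The key trick is to first flip the outer operator from max to min via Proposition~\ref{prop:scomup} while preserving depth; the depth preservation is crucial because it is precisely what guarantees, through Proposition~\ref{prop:volume}, that $V_\kappa(r)$ and $V_{\kappa'}(r)$ scale as the same power of $r$ near $0$, so that the volume comparison constant exists. Once this reduction is in place, the remaining steps amount to a clean combination of Propositions~\ref{prop:volume}, \ref{prop:bdfup}, and~\ref{prop:scomup}.
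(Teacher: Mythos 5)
Your proposal is correct and follows the paper's proof essentially step by step: reduce the non-SCOM outer-max case to the outer-min case via Proposition~\ref{prop:scomup}, decompose the outer-min BDF into comprising functions, apply Proposition~\ref{prop:bdfup} to each, and conclude via the depth and volume facts from Definition~\ref{def:depth} and Proposition~\ref{prop:volume}. The only addition is that you spell out why the volume inequality extends from small $r$ to all $r\ge 0$ (bounding $V_\kappa$ by the torus volume and using monotonicity of $V_{\kappa'}$), a detail the paper leaves implicit.
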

\begin{proof}
	If $\kappa$ is outer-max, let $\kappa''$ be the outer-min BDF given by Proposition~\ref{prop:scomup} with $\kappa \le \kappa''$ and $\mathcal{D}(\kappa) = \mathcal{D}(\kappa'')$. If $\kappa$ is outer-min, we set $\kappa'' \coloneqq \kappa$ (and notice that we also have $\kappa \le \kappa''$ and $\mathcal{D}(\kappa) = \mathcal{D}(\kappa'')$ in that case). Let $\kappa_1, \kappa_2$ be the comprising functions of $\kappa''$. Applying Proposition~\ref{prop:bdfup} to $\kappa_1$ and $\kappa_2$ we get disjoint subsets $S_1, S_2 \subseteq [d]$ with $|S_k| = \mathcal{D}(\kappa_k)$ such that $\kappa_k(x) \le \max_{i\in S_k} |x_i|$ for $k=1,2$. Let $\kappa'((x_i)_{i\in S_1 \cup S_2}) \coloneqq \min(\max_{i\in S_1} |x_i|, \max_{i\in S_2} |x_i|)$. Then $\kappa(x) \leq \kappa''(x) = \min(\kappa_1(x), \kappa_2(x)) \leq \min(\max_{i\in S_1} |x_i|, \max_{i\in S_2} |x_i|) = \kappa'((x_i)_{i\in S_1 \cup S_2})$ for all $x\in\T^d$. Moreover, since $\kappa''$ is outer-min, we have $\mathcal{D}(\kappa) = \mathcal{D}(\kappa'') = \min(\mathcal{D}(\kappa_1), \mathcal{D}(\kappa_2)) = \min(|S_1|, |S_2|)$ as desired. Finally since $\mathcal{D}(\kappa') = \min(|S_1|, |S_2|) = \mathcal{D}(\kappa)$ we have that $V_{\kappa}(r), V_{\kappa'}(r) \in \Theta(r^{\mathcal{D}(\kappa)})$, and hence there exists $c > 0$ such that $V_{\kappa}(r) \le c V_{\kappa'}(r)$ for all $r\ge0$.
\end{proof}

\section{Small Separators in Single-Coordinate-Outer-Max GIRGs}\label{sec:small_seperators}
Our next main result states that GIRGs induced by BDFs that are SCOM have natural sub-linear separators, which, as it turns out, run along the singled-out coordinate axis. The key proof idea is to partition the ground space along the singled-out coordinate axis into two half-spaces of equal volume, ensuring that each half-space contains a linear number of the vertices of the giant. We then upper-bound the number of edges crossing the separating hyperplanes. Each pair of vertices will contribute an edge intersecting one of the two hyperplanes if and only if the vertices lie in different half-spaces and are connected by an edge. The joint probability of this event can be computed using the law of iterated probability. One can show that the number of crossing edges is $o(n)$ with high probability. 

Thus the two subgraphs can be disconnected through the removal of the $o(n)$ crossing edges, yielding Theorem~\ref{thm:small_separator}.\footnote{The full proof can be found in the Appendix~\ref{sec:proof_small_sep}.} We note here that the max-norm is also a SCOM BDF, and hence our results extend the result proved for max-norms in~\cite{bringmann2019geometric}.

\begin{theorem}
\label{thm:small_separator}
    Let $\kappa : \T^d \rightarrow \R_{\geq 0}$ be a SCOM BDF and let $\mathcal{G} = (\mathcal{V}, \mathcal{E})$ be a $\kappa$-GIRG. Then, w.h.p.\ there exists a subset of edges $\mathcal{S} \subset \mathcal{E}$ with $|\mathcal{S}| = o(n)$ such that $\mathcal{G}' \coloneqq (\mathcal{V}, \mathcal{E}\setminus\mathcal{S})$ has two connected components of size $\Theta(n)$.
\end{theorem}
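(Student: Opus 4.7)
The plan is to exploit the SCOM decomposition $\kappa(x)=\max(|x_k|,\kappa_0((x_i)_{i\ne k}))$; without loss of generality I assume $k=1$ and partition the torus as $H_1=\{x\in\T^d:x_1\in[0,1/2)\}$ and $H_2=\T^d\setminus H_1$. Let $\mathcal{S}\subseteq\mathcal{E}$ be the set of edges with one endpoint in $H_1$ and the other in $H_2$. The goal is to show that $|\mathcal{S}|=o(n)$ w.h.p.\ and that the induced subgraphs on $H_1$ and $H_2$ each contain a connected component of size $\Theta(n)$; removing $\mathcal{S}$ then yields the desired small separator.

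For the edge count, I would bound $\E[|\mathcal{S}|]$ by linearity over ordered pairs. The SCOM structure gives $\kappa(x_u-x_v)\ge|x_{u,1}-x_{v,1}|$, so Proposition~\ref{prop:volume} (yielding $V_\kappa(r)\in\Omega(r^{\mathcal{D}(\kappa)})$) upgrades (EP) to an upper bound on $p_{uv}$ depending only on $|x_{u,1}-x_{v,1}|$, the weights, and $\mathcal{D}(\kappa)$. Integrating out the other $d-1$ coordinates is trivial, and the remaining integral over $x_{u,1}\in[0,1/2)$ and $x_{v,1}\in[1/2,1)$ reduces to a one-dimensional tail estimate in $t=|x_{u,1}-x_{v,1}|$; taking expectation over the power-law weights via (PL2) and combining with $\alpha>1$ and $\beta\in(2,3)$ gives $\E[|\mathcal{S}|]=o(n)$. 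Markov's inequality then yields $|\mathcal{S}|=o(n)$ w.h.p.

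For the component sizes, I would observe that the subgraph induced on $\mathcal{V}\cap H_j$ is itself a $\kappa$-GIRG on the half-torus, up to constants absorbed into $c_L$ and $c_U$. Conditional on the $\mathrm{Binomial}(n,1/2)$-many vertices that fall into $H_j$, their positions are i.i.d.\ uniform on $H_j$, their weights still form a power-law sequence on $\Theta(n)$ vertices, and (EP) continues to hold with $V_\kappa(r)$ of the same polynomial order for distances realized within $H_j$. Applying Theorem~\ref{thm:giant} to the restricted $\kappa$-GIRG yields a component of size $\Theta(n)$ inside each of $H_1,H_2$, and since every edge of $\mathcal{G}\setminus\mathcal{S}$ lies entirely inside one half, these two components survive in $\mathcal{G}\setminus\mathcal{S}$.

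\textbf{Main obstacle.} The most technical step is the integral estimate of $\E[|\mathcal{S}|]$: one must balance the polynomial lower bound on $V_\kappa$ with the $\alpha$-exponent in (EP), the power-law weight tails, and the spatial constraint that $u,v$ lie on different sides of a hyperplane -- essentially forcing the boundary zones of width $\sim n^{-1/\mathcal{D}(\kappa)}$ to carry almost all crossing edges. A secondary subtlety is verifying that the restriction to a half-torus still fits the $\kappa$-GIRG framework of~\cite{bringmann2016average}; this follows because, for distances realized within $H_j$, $V_\kappa$ on $\T^d$ and its half-torus counterpart agree up to multiplicative constants, which can be absorbed into $c_L$ and $c_U$.
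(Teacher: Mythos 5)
Your high-level plan — cut along two hyperplanes perpendicular to the singled-out coordinate, bound the crossing edges, and invoke the giant-component theorem on each half-torus — matches the paper's, and the half-torus step is indeed handled via the fact that a restricted GIRG is again a GIRG (the paper cites Lemma~3.12 of Keusch's thesis). The gap is in the bound on $\E[|\mathcal{S}|]$, and it is a real one whenever the depth $D=\mathcal{D}(\kappa)\geq 2$.

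The problem is the step where you replace $V_\kappa(\kappa(x_u-x_v))$ by $\Omega\bigl(|x_{u,1}-x_{v,1}|^D\bigr)$, turning $p_{uv}$ into a function of $t=|x_{u,1}-x_{v,1}|$ alone and then ``integrating out the other $d-1$ coordinates trivially.'' This discards exactly the geometry that makes the bound work for $D\geq 2$. Conditionally on $t=|x_{u,1}-x_{v,1}|\le\gamma_{u,v}$, your bound on $p_{uv}$ is the constant $c_U$; but the \emph{actual} connection probability averaged over the remaining coordinates is only $\Theta(\gamma_{u,v}^{D-1})$, because SCOM forces $D-1$ further coordinates (those in the maximizing set $S$ of Proposition~\ref{prop:bdfup}) to also lie within distance $\approx\gamma_{u,v}$ before $p_{uv}$ is $\Theta(1)$. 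Concretely, your integral yields
\[
\rho_{u,v}\ \lesssim\ \int_0^{1/2} 2t\cdot c_U\min\Bigl\{1,\bigl(\tfrac{w_uw_v}{cn\,t^D}\bigr)^\alpha\Bigr\}\,dt\ =\ O\bigl(\gamma_{u,v}^{\,\min(2,\alpha D)}\bigr),
\]
and since $\alpha D>2$ for $D\geq2$ this is $O(\gamma_{u,v}^2)$. Summing over pairs of constant weight gives $n^2\cdot n^{-2/D}=n^{2-2/D}$, which is $\Omega(n)$ for every $D\geq2$ — so Markov's inequality does not yield $|\mathcal{S}|=o(n)$. The paper instead conditions on the full distance $r=\kappa(x_u-x_v)$, which brings in two extra ingredients: the radial density $\varrho[G^r_{u,v}]=\Theta(r^{D-1})$, and the crucial observation that the crossing event has conditional probability $\pr{F_{u,v}\mid G^r_{u,v}}\le 2r$ (since SCOM implies $|x_{u,1}-x_{v,1}|\le r$ and given that the crossing probability is $2|x_{u,1}-x_{v,1}|$). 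This produces the sharper estimate $\rho_{u,v}=O(\gamma_{u,v}^{D+1})+O(\gamma_{u,v}^{D\alpha}|\log\gamma_{u,v}|)$, whose sum is $O(n^{3-\beta+\eta})+O(n^{2-\tilde\alpha})=o(n)$. For $D=1$ your computation happens to coincide with the paper's, but it breaks immediately at $D=2$ (e.g.\ the max-norm on $\T^2$). To repair your approach you would need to retain the joint dependence of $p_{uv}$ on all $D$ dominant coordinates rather than collapsing to the first one.
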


\section{Robustness of non-Single-Coordinate-Outer-Max GIRGs}\label{sec:robustness}
In this section, we consider GIRGs induced by non-SCOM BDFs and show that they are robust, i.e., they do not contain separators of sub-linear size in the giant component. Our goal is, more precisely, to prove the following theorem:
\begin{theorem}
    \label{thm:robustness}
    Let $\kappa : \T^d \rightarrow \R_{\geq 0}$ be a non-SCOM BDF and let $\mathcal{G} = (\mathcal{V}, \mathcal{E})$ be a $\kappa$-GIRG.
    Then, w.h.p.\ for any subset of edges $\mathcal{S} \subseteq \mathcal{E}$ such that $\mathcal{G}' \coloneqq (\mathcal{V}, \mathcal{E}\setminus\mathcal{S})$ has two connected components of size $\Theta(n)$, it holds that $|\mathcal{S}| = \Omega(n)$.
\end{theorem}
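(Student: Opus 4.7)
The plan is to adapt the two-round edge exposure argument of Lengler and Todorovi\'{c}~\cite{lengler2017existence} to non-SCOM BDFs, with Proposition~\ref{prop:bdfb} supplying the structural ingredient that enables the reduction. First I apply Proposition~\ref{prop:bdfb} to obtain disjoint subsets $S_1, S_2 \subseteq [d]$ with $\min(|S_1|, |S_2|) = \mathcal{D}(\kappa)$ and a BDF $\kappa'(x) = \min(\max_{i \in S_1}|x_i|, \max_{i \in S_2}|x_i|)$ satisfying $\kappa \le \kappa'$ pointwise and $V_{\kappa} = \Theta(V_{\kappa'})$. Combining the two bounds, the connection probability in the $\kappa$-GIRG dominates that of a $\kappa'$-GIRG up to constants, so it suffices to work with $\kappa'$-GIRGs. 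Next, because $V_{\kappa'}(\min(a,b)) = \min(V_{\kappa'}(a), V_{\kappa'}(b))$, the quantity $w_u w_v/(n V_{\kappa'}(\kappa'(x-y)))$ equals the maximum of the two analogous quantities obtained by replacing $\kappa'$ with $\max_{S_1}$ or $\max_{S_2}$. Since $S_1$ and $S_2$ are disjoint, the projections of the vertex coordinates onto $S_1$ and $S_2$ are independent; combined with the elementary inequality $1 - (1-p_1)(1-p_2) \ge \max(p_1, p_2)$, this allows me to realize the $\kappa'$-GIRG as stochastically containing the edge-union of two conditionally independent graphs $G^{(1)}, G^{(2)}$, where $G^{(k)}$ is a max-norm GIRG on the $|S_k|$-dimensional torus using only the coordinates in $S_k$, with slightly reduced edge-probability constants.

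By Theorem~\ref{thm:giant} applied to the max-norm GIRG $G^{(1)}$, the first-round graph already has a unique giant component $\mathcal{C}_1$ of size at least $s_{max} n$ w.h.p. It remains to show that the independent second-round edges from $G^{(2)}$ destroy every sublinear cut of $\mathcal{C}_1$. For a fixed bipartition $(A,B)$ of $\mathcal{C}_1$ with $|A|, |B| = \Omega(n)$, I restrict attention to light vertices of weight at most a large constant $W$, which by~\eqref{eq:pl1} number $\Omega(n)$ on both sides. Tiling the $|S_2|$-dimensional torus into $\Theta(n)$ cells of side length $\Theta(n^{-1/|S_2|})$ arranges that each cell contains $\Theta(1)$ light vertices in expectation, and any two light vertices sharing a cell are $G^{(2)}$-connected with constant probability. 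A positive fraction of cells hold light vertices from both $A$ and $B$, so summing essentially independent contributions across cells yields $\Omega(n)$ edges of $G^{(2)}$ crossing $(A,B)$ with concentration $\exp(-c n)$, where $c$ grows with $W$. A union bound over the at most $2^n$ bipartitions of $\mathcal{C}_1$, taking $W$ so large that $c > \log 2$, then excludes sublinear cuts w.h.p.

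The main obstacle is ensuring that the exponential concentration in the sprinkling step has a constant large enough to beat the $2^n$ candidate count. A direct Chernoff bound may be insufficient due to dependencies induced by the geometric structure and the heavy tails of the weight distribution. The clean remedy, following Lengler and Todorovi\'{c}, is first to argue that only bipartitions already ``sparse in $G^{(1)}$'' need be considered and, via an Azuma--Hoeffding estimate applied to an edge-exposure martingale indexed by the light vertices, to bound the number of such bipartitions by $2^{o(n)}$. This reduces the union bound to $2^{o(n)}$ events, so any positive constant sprinkling exponent suffices. Executing this carefully requires handling the subtle conditional dependencies among $G^{(1)}$, the bipartition $(A,B)$, and the $S_2$-coordinates of the light vertices, which is the most delicate part of the argument.
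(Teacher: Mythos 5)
Your high-level plan follows the paper's route: decompose via Proposition~\ref{prop:bdfb}, realize the $\kappa'$-GIRG as an edge-union of two independent sub-GIRGs driven by the disjoint coordinate blocks $S_1, S_2$ (this is exactly what~\eqref{eq:lb1} and~\eqref{eq:lb2} formalize), use the first to produce a giant component, and sprinkle the second via a cell tiling (Lemma~\ref{lem:spread}, Corollary~\ref{cor:constant_connection_prob}) to destroy sparse cuts. However, the Azuma--Hoeffding tool is misattributed. The bound on the number of sparse bipartitions is the deterministic combinatorial Lemma~\ref{lem:cut} of Luczak--McDiarmid (at most $(1+\varepsilon)^n$ bipartitions with at most $\eta_0(\varepsilon) n$ cross-edges in any connected $n$-vertex graph); Azuma--Hoeffding (Lemma~\ref{lem:azuma}) appears in Lemma~\ref{lem:type3}, where it bounds the number of edges incident to the sprinkling set $F$ and thereby controls the growth of the giant. (Also, the $\Omega(n)$ light vertices on each side follow from~\eqref{eq:pl2}, not~\eqref{eq:pl1}.)

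More substantively, the proposal only argues that sprinkling destroys sparse cuts of the first-round giant $\mathcal{C}_1 = K^1_{max}$, but the theorem concerns the final giant, which may be considerably larger. A sparse cut $(A,B)$ of the final giant with $|A|,|B|=\Omega(n)$ need not restrict to a bipartition of $\mathcal{C}_1$ with both parts of linear size, and a union bound over bipartitions of the final giant is circular: that giant and its bipartitions depend on the very second-round edges you want to sprinkle. The paper breaks this circularity by subsampling a small bounded-weight set $F\subseteq K^1_{max}$ (Phases 2--3) and revealing the $S_2$-coordinates in a staged order (Phases 4--6): first vertices outside $K^1_{max}$, then $K^1_{max}\setminus F$, then $F$. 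This fixes an intermediate giant $K^3_{max}$ \emph{before} the cut-destroying sprinkling, so the Luczak--McDiarmid union bound can be taken over its bipartitions (Lemma~\ref{lem:k3cuts}); Lemmata~\ref{lem:type1}--\ref{lem:type3} and~\ref{lem:notbig} then show $|K^4_{max}| \le |K^3_{max}| + 3\delta n$, so any sparse cut of $K^4_{max}$ induces one of $K^3_{max}$, a contradiction. This staged-reveal-and-subsampling mechanism is the content of what you flag as ``the most delicate part'' and it needs to be supplied, not just gestured at.
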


First, we state a lemma that bounds the number of small cuts in connected graphs~\cite{luczak2001bisecting}. This lemma is the inspiration for the proof of robustness of MCD-GIRGs~\cite{lengler2017existence} and will also be used our proof of Theorem~\ref{thm:robustness}.

\begin{lemma}[Lemma 7 in \cite{luczak2001bisecting}]
    \label{lem:cut}
    For any $\varepsilon >0$ there exists $\eta_0(\varepsilon) > 0$ and $n_0$ such that for all $n \geq n_0$, and for all connected graphs $G$ with $n$ vertices, there are at most $(1 + \varepsilon)^n$ many bipartitions of G with at most $\eta_0 n$ cross-edges.
\end{lemma}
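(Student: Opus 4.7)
The plan is to prove the bound via a spanning tree argument: I will show that each sparse bipartition is determined by the (small) set of its cross-edges inside a fixed spanning tree of $G$, and then count such subsets via the binary entropy estimate on binomial coefficients.

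Concretely, I would first fix an arbitrary spanning tree $T$ of $G$ and root it at some vertex $r \in V(G)$. For any bipartition $\{A,B\}$ of $V(G)$, let $C_G(A,B) \subseteq E(G)$ denote its cross-edges in $G$ and $C_T(A,B) \coloneqq C_G(A,B) \cap E(T)$ its cross-edges in $T$. Since we assume $|C_G(A,B)| \le \eta_0 n$, we have $|C_T(A,B)| \le \eta_0 n$ as well. The key observation is that once $C_T(A,B) \subseteq E(T)$ and the side of $r$ are specified, the bipartition is fully determined: for any vertex $v$, its side is obtained from the parity of the number of edges of $C_T(A,B)$ lying on the unique $r$-$v$ path in $T$. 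Thus the map $\{A,B\} \mapsto C_T(A,B)$ is at most $2$-to-$1$ onto subsets of $E(T)$ of size at most $\eta_0 n$.

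Next I would bound the number of subsets of $E(T)$ of size at most $\eta_0 n$. Since $|E(T)| = n-1$, the count is at most $\sum_{k=0}^{\lfloor \eta_0 n \rfloor} \binom{n-1}{k}$. For $\eta_0 \le 1/2$, the standard entropy estimate yields
\begin{equation*}
\sum_{k=0}^{\lfloor \eta_0 n \rfloor} \binom{n-1}{k} \le 2^{(n-1) H(\eta_0)},
\end{equation*}
where $H(\eta) = -\eta \log_2 \eta - (1-\eta) \log_2(1-\eta)$ is the binary entropy. Multiplying by the factor $2$ from the at-most-$2$-to-$1$ map, the total number of qualifying bipartitions is at most $2 \cdot 2^{n H(\eta_0)}$.

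Finally, given $\varepsilon > 0$, I would choose $\eta_0 = \eta_0(\varepsilon) > 0$ small enough that $2^{H(\eta_0)} \le (1+\varepsilon)^{1/2}$, which is possible since $H(\eta) \to 0$ as $\eta \to 0^+$. Then for all $n$ sufficiently large, $2 \cdot 2^{n H(\eta_0)} \le (1+\varepsilon)^n$, completing the proof. The only mildly delicate step is verifying that the tree-restricted cross-edge set really does determine the bipartition (up to swapping $A$ and $B$), which boils down to the tree-path parity argument sketched above; the rest is a routine entropy estimate.
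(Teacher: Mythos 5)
The paper does not prove this lemma at all: it is imported verbatim as Lemma~7 of Luczak and McDiarmid \cite{luczak2001bisecting}, so there is no in-paper proof to compare against. Your spanning-tree argument is correct and is essentially the standard proof of this fact (and the one underlying the original source): a bipartition with few cross-edges in $G$ has few cross-edges in any spanning tree $T$, and the set of tree cross-edges determines the unordered bipartition via the root-to-vertex parity argument, so it suffices to count small subsets of $E(T)$ with the entropy bound. One cosmetic point: since your sum runs to $\lfloor \eta_0 n \rfloor$ while the binomial has upper index $n-1$, the entropy estimate gives $2^{(n-1)H(\eta_0 n/(n-1))} \le 2^{nH(2\eta_0)}$ rather than $2^{(n-1)H(\eta_0)}$ as written; this is harmless because the final step only needs $H(\cdot)\to 0$ as $\eta_0\to 0$, but the exponent should be stated with the correct argument.
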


We will also need the following notion of a sparse cut.
\begin{definition}\label{def:cut}
    For a graph $\mathcal{G} = (\mathcal{V}, \mathcal{E})$ and constants $\delta,\eta>0$, a \emph{$(\delta,\eta)$-cut} is a partition of $\mathcal{V}$ into two sets of size at least $\delta|\mathcal{V}|$ such that there are at most $\eta|\mathcal{V}|$ \emph{cross-edges}, i.e.\ edges that have one endpoint in each of the sets.
\end{definition}
The proof strategy we follow is almost the same as in~\cite{lengler2017existence}, with some small but crucial modifications.

\subsection{Edge Insertion Criteria} To extend the two-round edge exposure procedure from MCD-GIRGs to general $\kappa$-GIRGs, we will use the upper bounds in terms of an outer-min distance function for the BDF at hand that were established in Section~\ref{sec:bdf}. Our overarching goal will be to expose at first only a subset of the coordinates of each vertex, insert some of the edges based on this partial information, then reveal the remaining coordinates which will lead to the creation of additional edges. More concretely, our aim will be to construct for each pair of vertices $u,v$ a pair of independent random variables $(Y_{uv}^1, Y_{uv}^2)$, inserting edges in the first round if $Y_{uv}^1<p_{uv}$ and in the second round if $Y_{uv}^2<p_{uv}$. The careful modification of the distribution of these two random variables will allow us to emulate in two rounds the one-round sampling procedure where an edge is inserted if $Y_{uv}<p_{uv}$ where $Y_{uv}$ would be drawn uniformly at random from $[0,1]$.\footnote{We refer the reader to \cite{lengler2017existence} for a detailed discussion.} 

Consider therefore a $\kappa_0$-GIRG for some non-SCOM BDF $\kappa_0$. By Proposition \ref{prop:bdfb}, we can find disjoint subsets $S_1, S_2 \subseteq [d]$, with $\mathcal{D}(\kappa_0) = \min(|S_1|, |S_2|)$ such that the outer-min BDF given by $\kappa((x_i)_{i\in S_1 \cup S_2}) \coloneqq \min(\max_{i\in S_1} |x_i|, \max_{i\in S_2} |x_i|) \eqqcolon \min(\norm{(x)_{S_1}}_{\infty}, \norm{(x)_{S_2}}_{\infty})$ is an upper bound for $\kappa_0(x)$. This implies in particular that there exists a $\kappa$-GIRG for some sufficiently small choice of constants $c'_U, c'_L$ such that for every $u, v \in \mathcal{V}$, $p_{\kappa}(u, v) \leq p_{\kappa_0}(u, v)$, where the former denote the connection probabilities of vertices $u$ and $v$ in a $\kappa$-GIRG respectively in a $\kappa_0$-GIRG. Without loss of generality, we will assume that $S_1 \subseteq \{1, .., d - m\}$ and $S_2 = \{d - m + 1, .., d\}$. Additionally, we can assume that $m = |S_2| \leq |S_1|$, which also means that $D \coloneqq \mathcal{D}(\kappa_0) = \mathcal{D}(\kappa) = m$. We will later modify the original algorithm given in~\cite{lengler2017existence} by splitting the sampling process into the first $d - m$ and the last $m$ coordinates (in the original approach by Lengler and Todorovi\'{c}, the split was between the first $d - 1$ coordinates and the last coordinate).

Let $Y_{uv}$ be a uniform random variable over the interval $[0, 1]$, and define two iid random variables $Y_{uv}^1, Y_{uv}^2$ distributed as follows:
\[ \pr{Y_{uv}^1 < c} = \pr{Y_{uv}^2 < c} = 1 - \sqrt{1 - c}.\]
Then the following also hold:
\begin{align*}
    c/2 \leq \pr{Y_{uv}^1 < c} &= \pr{Y_{uv}^2 < c} \leq c, \\
    \pr{\min(Y_{uv}^1, Y_{uv}^2) < c} &= \pr{Y_{uv} < c} = c.
\end{align*}
Let
\[ p_{uv}(c, x) \coloneqq c \cdot \min  \Big\{ 1, \big( \frac{w_u w_v}{n |x|^D}\big)^{\alpha} \Big\},\]
and define the Edge Insertion Criterion (EIC) as
\begin{equation*}\label{eq:eic}
   Y_{uv} \eqqcolon \min(Y_{uv}^1, Y_{uv}^2) < p_{uv}(c_L, \kappa_0(x_u - x_v)). \tag{EIC} 
\end{equation*}
By our choice of $c_L'$, we have the following lower bound:
\[ p_{uv}(c_L, \kappa_0(x_u - x_v)) \ge p_{uv}(c'_L, \kappa(x_u - x_v)) = \max(p_{uv}(c'_L, \norm{(x_u - x_v)_{S_1}}_{\infty}), p_{uv}(c'_L, \norm{(x_u - x_v)_{S_2}}_{\infty})). \]
Thus, we obtain the two following sufficient conditions for edge insertion:
\begin{align}
    Y_{uv}^1 &< p_{uv}(c'_L, \norm{(x_u - x_v)_{S_1}}_{\infty}), \tag{LB1}\label{eq:lb1} \\
     Y_{uv}^2 &< p_{uv}(c'_L, \norm{(x_u - x_v)_{S_2}}_{\infty}). \tag{LB2}\label{eq:lb2}
\end{align}

Crucially notice that if we insert the edges according to~\eqref{eq:lb1} first, we obtain a GIRG (with respect to the max-norm on $\T^{|S_1|}$), which means that it satisfies all the properties mentioned in Section~\ref{sec:girg_properties}. In particular, after the insertion of the first batch of edges, our graph will already contain a (unique) giant component.

We now proceed to describe an algorithm that is central in proving the robustness of a $\kappa_0$-GIRG. It follows closely the algorithm used in~\cite{lengler2017existence}, but we need to modify it to make the proof work in our more general setting. The main modifications are as follows. Firstly, we use the updated~\eqref{eq:eic}, \eqref{eq:lb1} and~\eqref{eq:lb2}. Secondly, instead of first sampling the first $d - 1$ coordinates and then the last coordinate, we first sample the first $d - m$ coordinates and then sample the last $m$ coordinates.

\subsection{Sampling algorithm}\label{sec:algorithm}

In this section we describe the procedure we use for uncovering the edges of the $\kappa_0$-GIRG. We fix some constant $\delta\in(0,1)$. The algorithm can be decomposed into 6 phases. \\ \\
\noindent 
\textbf{Phase 1} We start by sampling $Y^1_{uv}$ for all pairs of vertices $u, v \in \mathcal{V}$. Additionally, for every vertex $u \in \mathcal{V}$, we sample the first $d - m$ coordinates of its position - $(x_{u i})_{1 \leq i \leq d - m}$ - independently and uniformly at random from $[0, 1]$. This is sufficient to determine the graph induced by the edge insertion criterion~\eqref{eq:lb1}, which we refer to as $G_1$. By Theorem \ref{thm:giant}, $G_1$ has a unique linear-sized component (the giant) with at least $s_{max} n$ vertices w.h.p.; we will assume that this holds for the rest of the proof. We denote this giant by $K^1_{max}$. \\ \\
\noindent
\textbf{Phase 2} From \eqref{eq:pl2} we can obtain a constant $B'$ such that at least half the vertices of $K^1_{max}$ have a weight less than $B'$. This can be done by setting $\eta = 1$ and $B' > (2 c_2 / s_{max})^{1 / (\beta - 2)}$. Next, we sub-sample $F' \subseteq \mathcal{V}$ by including every vertex (not just those in the giant) with weight less than $B'$ into $F'$ independently with probability $4f / s_{max}$ for some constant $0 < f < (s_{max} / 12) \cdot \min\{ \delta, s_{max}\}$ to be determined later (in Lemma~\ref{lem:type3}).\\ \\
\noindent
\textbf{Phase 3} Now set $F \coloneqq F' \cap K^1_{max}$. It is straightforward to see that by the choice of parameters it holds that $2fn \le \ex{|F|} \le \E[|F'|] \le 4fn / s_{max}$. Additionally, by Chernoff's bounds (Lemma \ref{lem:chernoff}), we have that $fn \le |F| \le |F'| \le 6fn / s_{max}$ w.h.p.; we will assume that this holds for the rest of the proof.\\ \\
\noindent
The final three phases are split up into $n$ steps in total (one step for each vertex). In each step, we draw the last $m$ coordinates of some vertex and potentially add some incident edges according to the edge insertion criterion~\eqref{eq:eic}. The order in which the vertices are treated is as follows - first the vertices that are not in $K^1_{max}$ (Phase 4), then the remaining vertices that are not in $F$ (Phase 5), and finally the vertices that are in $F$ (Phase 6). Thus, if we order the vertices as $u_1, u_2, \ldots, u_n$ in order to have $K^1_{max} = \{u_i \mid |\mathcal{V} \setminus K^1_{max}| < i \leq n\}$ and $F = \{u_i \mid |\mathcal{V} \setminus F| < i \leq n\}$, then the $k$th step can be described as follows:
\begin{itemize}
    \item Draw $(x_{ki})_{d - m < i \leq m}$ each independently and uniformly at random from $[0, 1]$. (Note that we denote $x_{u_k}$ by $x_k$)
    \item For all $1 \leq j < k$, sample $Y^2_{jk}$ independently.
    \item For all $1 \leq j < k$, add an edge between $u_j$ and $u_k$ if~\eqref{eq:eic} is satisfied. \\
\end{itemize}
\noindent
\textbf{Phase 4} Perform steps $1$ to $|\mathcal{V} \setminus K^1_{max}|$. \\ \\
\noindent
\textbf{Phase 5} Perform steps $|\mathcal{V} \setminus K^1_{max}| + 1$ to $|\mathcal{V} \setminus F|$.
\\ \\
\noindent
\textbf{Phase 6} Perform steps $|\mathcal{V} \setminus F| + 1$ to $n$. \\ \\
\noindent
We denote the resulting graph after Phase $2 + i$ for $i = 2, 3, 4$ by $G_i$ and the corresponding giant component that contains $K^1_{max}$ by $K^i_{max}$. Our aim in the remainder will be to show that the last phase destroys all sublinear cuts which were present in the giant component of $G_3$, while adding only a small number of vertices.

\subsection{Proof of Theorem \ref{thm:robustness}}
We are now ready to prove the main result of this section. The proof is divided into two core steps - first, we show that $K^3_{max}$ has no small cuts in $G_4$. Then we show that the size of $K^4_{max}$ is not much greater than that of $K^3_{max}$. Combining these results, we can show that any small cut in $K^4_{max}$ would induce a small cut in $K^3_{max}$, and hence also exclude the possibility of small cuts in $K^4_{max}$.

First, we observe, recorded in Lemma~\ref{lem:spread},\footnote{A full proof of Lemma~\ref{lem:spread} and the derivation of Corollary~\ref{cor:constant_connection_prob} from it can be found in Appendix~\ref{sec:spread}.} that the neighbors of a given vertex cannot be too concentrated in a small region. To help quantify this we define the notion of cells. For some $M > 0$, we partition $[0, 1]$ into $M$ sub-intervals of equal length $I_j \coloneqq [j/M,(j + 1)/M]$ for $0 \leq j < M$. We define a \emph{$M$-cell} (or just \emph{cell} if $M$ is clear from context) to be a region of the form $\T^{d - m} \times I_{j_{d - m + 1}} \times I_{j_{d - m + 2}} \times \ldots \times I_{j_d}$, where $0 \leq j_{d - m + 1}, j_{d - m + 2}, \ldots, j_d < M$. It is easy to see that the entire space $\T^d$ is partitioned into $M^m$ cells. We call such a partition an \emph{$M$-cell partition}. We also remark that the cell which contains a given vertex $v$ can be completely characterized by the last $m$ coordinates of $v$. 

Lemma~\ref{lem:spread} can now be derived by observing that a set $S$ of $rn$ cells has volume $rn\cdot M^{-m} \in [rl2^{-m}, rl]$ and thus contains an expected number of  $[rnl2^{-m}, rnl]$ vertices. Using the strong Chernoff bound (Lemma~\ref{lem:chernoff}) for a small enough choice of $r$ guarantees that even after a union bound over the possible choices of $S$, with high probability no such set contains at least $\frac{\delta n}{2}$ vertices.

\begin{lemma}
    \label{lem:spread}
    Let $M = \lceil (n / l)^{1/m} \rceil$ for some constant $l\in(0,1]$, and consider the $M$-cell partition of $\T^d$. Then, for every $\delta \in (0,1)$ and every $l$, there exists a constant $r(\delta, l, m) > 0$ such that the following property holds - with probability $1 - e^{-\Theta(n)}$, there is no set $S$ of $rn$ cells (of the considered partition) such that there are at least $\delta n/2$ vertices in the cells of $S$.
\end{lemma}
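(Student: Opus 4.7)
The plan is to apply the Chernoff upper-tail bound to each fixed choice of $rn$ cells and then union-bound over all such choices, tuning $r$ small enough as a function of $\delta$, $l$, and $m$ so that the combinatorial factor cannot overwhelm the Chernoff tail.

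First I would record the elementary estimates on the partition. Since $M = \lceil (n/l)^{1/m}\rceil$, for $n$ large we have $M^m \in [n/l,\, 2^m n/l]$, and each cell has volume exactly $M^{-m}\in[l\cdot 2^{-m}/n,\, l/n]$. For any fixed set $S$ of $rn$ cells, its total volume is therefore at most $rl$, so the count $X_S$ of vertices landing in $S$ is a sum of $n$ i.i.d.\ Bernoullis with $\mu \coloneqq \E[X_S] \le rln$. Every vertex's inclusion in $S$ depends only on its last $m$ coordinates, so independence across $v\in\mathcal{V}$ is immediate from the GIRG sampling procedure.

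The next step is the Chernoff bound (Lemma~\ref{lem:chernoff}) in its $\Pr[X\ge a]\le(e\mu/a)^{a}$ form (valid for $a\ge\mu$), applied with $a=\delta n/2$. Provided $r\le\delta/(4l)$, this gives
\[
  \Pr\bigl[X_S\ge\delta n/2\bigr]\le\left(\frac{2erl}{\delta}\right)^{\delta n/2}.
\]
Union-bounding over the at most $\binom{M^m}{rn}\le(e\cdot 2^m/(rl))^{rn}$ possible choices of $S$, the overall failure probability is bounded by
\[
  \exp\Bigl(rn\log\tfrac{C_1}{r}-\tfrac{\delta n}{2}\log\tfrac{C_2}{r}\Bigr)
\]
for constants $C_1=C_1(l,m)$ and $C_2=C_2(\delta,l)$. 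Since $r\log(1/r)\to 0$ as $r\to 0$ while the coefficient $(\delta/2)\log(C_2/r)$ of the negative linear term diverges, choosing $r=r(\delta,l,m)$ small enough drives the exponent to at most $-c(\delta,l,m)\cdot n$ for a positive constant $c$, which gives the claimed $e^{-\Theta(n)}$ bound.

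The only real obstacle is this simultaneous tuning of $r$: the same constant has to beat both the per-set tail bound and the entropy $\log\binom{M^m}{rn}$ accumulated by the union bound. No analytic subtlety arises, because the cells form a balanced partition of $\T^d$ (each of volume $\Theta(1/n)$ up to $l,m$-dependent constants), so the argument reduces to the standard fact that with high probability no $r$-fraction of cells in a balanced partition captures more than an $O(r)$-fraction of $n$ uniform points.
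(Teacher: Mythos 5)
Your argument is correct and follows essentially the same route as the paper: fix a set $S$ of $rn$ cells, use the multiplicative Chernoff bound to control the number of vertices landing in $S$, and union-bound over all choices of $S$, tuning $r$ small as a function of $\delta,l,m$. The one small slip is that the form $\Pr[X\ge a]\le(e\mu/a)^a$ comes from the \emph{strong} Chernoff bound (Lemma~\ref{lem:strong_chernoff}), not Lemma~\ref{lem:chernoff} as you cite; and where you union-bound over $\binom{M^m}{rn}$ sets and then balance the entropy $rn\log(C_1/r)$ against the tail exponent, the paper takes the slightly cruder but cleaner bound of $2^{M^m}$ total subsets (so no $r$-dependent entropy term appears) and simply chooses $r$ small enough that the per-set tail probability is below $3^{-2^mn/l}$ — both tunings work.
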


Intuitively, Lemma~\ref{lem:spread} says that since the positions of vertices are randomly chosen, they must be more or less spread out in the last $m$ coordinates. Thus, we must expect that when we uncover vertices later in the ordering, they are close enough to previously uncovered vertices to have a good chance of edge formation. Indeed, one can show the following useful corollary, which implies a constant lower bound on edge formation in every step of phases 5 and 6. 

\begin{corollary}
    \label{cor:constant_connection_prob}
    Let $\delta\in(0,1)$ be a constant. There is a constant $P > 0$ such that with high probability, the following holds for each step $k > \delta n/2$ of the Algorithm. Let $V_k \coloneqq \{u_i \in \mathcal{V} \mid 1 \leq i < k\}$. For each subset $A \subseteq V_k$ of size at least $\delta n/2$, the probability that step $k$ produces an edge from $u_k$ to $A$ due to~\eqref{eq:lb2} is at least $P$ , i.e.
    \[\pr{\exists v \in A \textnormal{ with } u_kv\in\mathcal{E}} \geq P. \]
\end{corollary}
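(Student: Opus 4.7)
The plan is to derive the corollary directly from Lemma~\ref{lem:spread} by calibrating the cell size to the distance scale at which the edge insertion probability~\eqref{eq:lb2} is guaranteed to be at least a constant. Choose $l = 1$ and let $r = r(\delta, 1, m) > 0$ be the constant furnished by Lemma~\ref{lem:spread}, so that with probability $1 - e^{-\Theta(n)}$ over the random positions no set of $rn$ cells of the associated $M$-cell partition contains $\delta n/2$ vertices of $\mathcal{V}$. I condition on this spread event; since positions are ultimately iid uniform, the algorithmic order in which coordinates are revealed does not affect the joint distribution, and the spread property is inherited by every subset of $\mathcal{V}$.

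Fix any step $k > \delta n/2$ and any subset $A \subseteq V_k$ with $|A| \geq \delta n/2$. The spread event, applied to $A \subseteq \mathcal{V}$, forces $A$ to occupy a collection $\mathcal{C}(A)$ of strictly more than $rn$ distinct cells. In step $k$, the last $m$ coordinates of $u_k$ are drawn uniformly from $[0,1]^m$ independently of the prior history, and each cell has volume $1/M^m$ in those coordinates. Using $M \leq 2 n^{1/m}$ for large $n$, we have $M^m \leq 2^m n$, and hence $u_k$ lands in some cell of $\mathcal{C}(A)$ with probability at least $rn/M^m \geq r/2^m$.

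Conditioned on that event, pick any $v \in A$ sharing a cell with $u_k$. Then $\norm{(x_{u_k}-x_v)_{S_2}}_\infty \leq 1/M \leq n^{-1/m}$, so
\[
\frac{w_{u_k} w_v}{n \cdot \norm{(x_{u_k}-x_v)_{S_2}}_\infty^m} \;\geq\; w_{u_k} w_v \;\geq\; 1,
\]
which gives $p_{u_k v}(c'_L, \norm{(x_{u_k}-x_v)_{S_2}}_\infty) \geq c'_L$. Since $Y^2_{u_k v}$ is sampled independently at step $k$ with $\pr{Y^2 < c'_L} \geq c'_L/2$, condition~\eqref{eq:lb2} is satisfied for this specific $v$ with conditional probability at least $c'_L/2$. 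Multiplying the two estimates yields
\[
\pr{\exists v \in A : u_k v \in \mathcal{E} \textnormal{ via~\eqref{eq:lb2}}} \;\geq\; \frac{r \cdot c'_L}{2^{m+1}} \;=:\; P \;>\; 0,
\]
a quantity depending only on $\delta$, $m$, and $c'_L$, hence a positive constant.

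The calculation itself is short once Lemma~\ref{lem:spread} is in hand; the central conceptual step is the choice $l = 1$, which makes cell side lengths $\Theta(n^{-1/m})$ and thereby forces the min-term in $p_{u_k v}$ to equal $1$ for any pair sharing a cell, regardless of how small their weights are. The only point requiring care is uniformity in $k$ and $A$: a single application of Lemma~\ref{lem:spread} to $\mathcal{V}$ suffices, because any spread-violating $A \subseteq V_k$ would also violate spread within $\mathcal{V}$ itself.
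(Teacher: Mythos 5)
Your proof is correct and follows the paper's argument closely: choose $l=1$ so that cells have side length $\Theta(n^{-1/m})$, which makes the $\min$ in the connection probability saturate at $1$ for any pair within a common cell, and invoke Lemma~\ref{lem:spread} to force any $A$ of size $\geq \delta n/2$ to spread over more than $rn$ cells; the constant $P = r c'_L / 2^{m+1}$ matches the paper's. You streamline slightly by applying Lemma~\ref{lem:spread} once to all of $\mathcal{V}$ and using that spread is inherited by every $V_k \subseteq \mathcal{V}$, whereas the paper applies the lemma to each $V_k$ separately and takes a union bound over $k$. One phrasing to tighten: after you ``condition on this spread event'' for $\mathcal{V}$, the last $m$ coordinates of $u_k$ are not exactly uniform, since the global spread event depends on $x_{u_k}$; the clean formulation is to condition only on the positions of the vertices in $V_k$ (spread-for-$V_k$ is measurable with respect to these alone, is implied by the global event, and leaves $x_{u_k}$ uniform and independent), which is what the argument is really using.
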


The remainder of the proof now closely models that of Theorem 3.2 in~\cite{lengler2017existence}.\footnote{Details can be found in Appendix~\ref{sec5details}.} We first prove, using Lemma~\ref{lem:cut}, which restricts the number of sparse bipartitions in a connected graph, that there are no small cuts  in $K_{max}^3$. More precisely, there is a constant $\eta > 0$ such that w.h.p.\ the induced subgraph $G_4[K^3_{max}]$ has no $(\delta, \eta)$-cut (Lemma~\ref{lem:k3cuts}). Then, we can show that in the last phase the giant component does not grow too much, by demonstrating that any "newly added" vertex of $K_{max}^4$ - which contains $K_{max}^3$ entirely - would have to be either in a large non-giant component of $G_3$, or in a non-giant component that contains a vertex of large weight, or in a small component consisting only of small-weight vertices that is nonetheless added to $K^4_{max}$ during this last phase. But each of these categories consists of at most $\delta n$ many vertices (Lemmata~\ref{lem:type1}, \ref{lem:type2}, \ref{lem:type3}), which implies that $|K^4_{max}| \leq |K^3_{max}| + 3 \delta n$ (Lemma~\ref{lem:notbig}). This is the key insight we need to obtain the next result, from which the main theorem of this section follows.

\begin{theorem}
    Let $\delta\in(0,1)$. Then there exists a constant $\eta>0$ such that w.h.p.\ $K^4_{max}$ has no $(4\delta, \eta)$-cuts.
\end{theorem}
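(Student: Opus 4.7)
This theorem is a direct consequence of Lemmata~\ref{lem:k3cuts} and~\ref{lem:notbig}: the first guarantees that $G_4[K^3_{max}]$ admits no $(\delta,\eta)$-cut for some constant $\eta>0$, and the second controls the discrepancy via $|K^4_{max}|\le |K^3_{max}|+3\delta n$ w.h.p. My strategy is to argue by contradiction: given a hypothetical $(4\delta,\eta)$-cut of $K^4_{max}$, I restrict it to $K^3_{max}$ to produce a $(\delta,\eta)$-cut there, and then invoke Lemma~\ref{lem:k3cuts}.

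Concretely, fix $\delta\in(0,1)$ and let $\eta>0$ be the constant supplied by Lemma~\ref{lem:k3cuts} for this $\delta$. Condition on the high-probability events of Lemmata~\ref{lem:k3cuts} and~\ref{lem:notbig} (as well as the giant-component event from Theorem~\ref{thm:giant}), and suppose for contradiction that $K^4_{max}$ has a $(4\delta,\eta)$-cut with parts $A,B$, so that $|A|,|B|\ge 4\delta n$ and at most $\eta n$ edges cross between them. I then set $A' \coloneqq A\cap K^3_{max}$ and $B' \coloneqq B\cap K^3_{max}$; this is a bipartition of $K^3_{max}$ since $K^3_{max}\subseteq K^4_{max}$. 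For the side sizes, Lemma~\ref{lem:notbig} gives $|K^4_{max}\setminus K^3_{max}|\le 3\delta n$, hence $|A'|\ge |A|- 3\delta n\ge \delta n$ and symmetrically $|B'|\ge \delta n$. For the edge count, every cross-edge of $(A',B')$ in $G_4[K^3_{max}]$ is in particular a cross-edge of $(A,B)$ in $G_4[K^4_{max}]$, so there are at most $\eta n$ of them. Therefore $(A',B')$ is a $(\delta,\eta)$-cut of $G_4[K^3_{max}]$, directly contradicting Lemma~\ref{lem:k3cuts}.

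The reduction is essentially arithmetic: the factor $4=1+3$ is tuned precisely so that stripping the $3\delta n$ excess vertices of $K^4_{max}$ over $K^3_{max}$ from each side of a $4\delta n$-balanced cut still meets the $\delta n$ lower bound demanded by the $K^3_{max}$ lemma. I therefore do not foresee a genuine obstacle at this step --- all of the substantive probabilistic work has already been done upstream, in Lemma~\ref{lem:k3cuts} (which uses the Luczak--McDiarmid sparse-cut enumeration of Lemma~\ref{lem:cut} together with the constant connection probability of Corollary~\ref{cor:constant_connection_prob}) and Lemma~\ref{lem:notbig} (which combines the classification of the newly added vertices via Lemmata~\ref{lem:type1}--\ref{lem:type3}). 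The only care needed is to ensure all the relevant high-probability events are intersected before the contradiction argument is run.
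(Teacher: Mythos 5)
Your proof is correct and takes essentially the same approach as the paper: assume a $(4\delta,\eta)$-cut of $K^4_{max}$, restrict it to $K^3_{max}$ using the bound $|K^4_{max}\setminus K^3_{max}|\le 3\delta n$ from Lemma~\ref{lem:notbig} to obtain a $(\delta,\eta)$-cut there, and derive a contradiction with Lemma~\ref{lem:k3cuts}. You have merely spelled out the arithmetic reduction that the paper's one-line argument leaves implicit.
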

\begin{proof}
    Assume that there is a $(4 \delta, \eta)$-cut in $K^4_{max}$. By Lemma \ref{lem:notbig}, such a cut would induce a $( \delta, \eta)$-cut in $K^3_{max}$. But by Lemma \ref{lem:k3cuts}, such a cut cannot exist w.h.p., and hence we get the desired contradiction.
\end{proof}

Since any $\kappa$-GIRG w.h.p.\ has a unique giant component, the above theorem is equivalent to Theorem~\ref{thm:robustness}.

\bibliography{bibliography}

\appendix

\section{Appendix}

\subsection{Tools}
Here we state some very useful lemmata. The first one allows us to sum over (sub)sequences of weights and will be ubiquitous in the proofs.

\begin{lemma}[Lemma 6.3 in~\cite{bringmann2019geometric}]
    \label{lem:sum_to_integral}
    Let $f : \R \rightarrow \R$ be a continuously differentiable function, and recall that $\mathcal{V}_{\ge w} = \{ v \in \mathcal{V} \mid w_v \geq w\}$. Then, for any weights $1 \leq w_0 \leq w_1$, we have
    \begin{align*}
        \sum_{v \in \mathcal{V}, w_0 \leq w_v \leq w_1} f(w_v) = f(w_0) \cdot |\mathcal{V}_{\geq w_0}| - f(w_1) |\mathcal{V}_{\geq w_1}| + \int_{w_0}^{w_1} f'(w) |\mathcal{V}_{\geq w}| dw.
    \end{align*}
\end{lemma}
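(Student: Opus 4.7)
The plan is to prove the identity by an Abel-summation argument, i.e.\ write each $f(w_v)$ as $f(w_0) + \int_{w_0}^{w_v} f'(w)\,dw$ via the fundamental theorem of calculus (this is where continuous differentiability of $f$ is used), substitute into the left-hand side, and then interchange the finite sum with the integral using Fubini. After the swap, the indicator that remains inside the integrand at a point $w \in [w_0, w_1]$ counts the vertices $v$ with $w \le w_v \le w_1$, which is essentially $|\mathcal{V}_{\ge w}| - |\mathcal{V}_{\ge w_1}|$.

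Concretely, first I would note that for every $v$ with $w_v \ge w_0$,
\[
 f(w_v) = f(w_0) + \int_{w_0}^{w_v} f'(w)\,dw = f(w_0) + \int_{w_0}^{w_1} f'(w)\,\mathbf{1}[w \le w_v]\,dw,
\]
provided $w_v \le w_1$. Summing over $v$ with $w_0 \le w_v \le w_1$ yields
\[
 \sum_{v:\,w_0 \le w_v \le w_1} f(w_v) = f(w_0)\,\bigl|\{v : w_0 \le w_v \le w_1\}\bigr| + \int_{w_0}^{w_1} f'(w)\,\bigl|\{v : w \le w_v \le w_1\}\bigr|\,dw,
\]
where the swap between the finite sum and the integral is justified since it is a finite sum of Riemann-integrable functions.

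Next I would rewrite the two cardinalities via $|\{v : w \le w_v \le w_1\}| = |\mathcal{V}_{\ge w}| - |\mathcal{V}_{\ge w_1}|$ (and similarly for the first term with $w=w_0$), valid under the natural convention that no vertex has weight exactly equal to $w_1$, or otherwise by absorbing the single-point contribution into the integral, which does not change its value since $f'$ is bounded on $[w_0,w_1]$. Plugging in and using $\int_{w_0}^{w_1} f'(w)\,dw = f(w_1) - f(w_0)$ to simplify the $|\mathcal{V}_{\ge w_1}|$ terms gives
\[
 f(w_0)|\mathcal{V}_{\ge w_0}| - f(w_0)|\mathcal{V}_{\ge w_1}| + \int_{w_0}^{w_1} f'(w)|\mathcal{V}_{\ge w}|\,dw - \bigl(f(w_1) - f(w_0)\bigr)|\mathcal{V}_{\ge w_1}|,
\]
which collapses to the claimed right-hand side.

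There is no real obstacle here, the statement is essentially a discrete-continuous integration-by-parts identity, and continuous differentiability of $f$ plus finiteness of $\mathcal{V}$ make every interchange entirely elementary. The only minor point to get right is the bookkeeping at the endpoint $w_1$ when a vertex has weight exactly $w_1$; this is handled either by the generic-weights convention or by observing that modifying the step function $w \mapsto |\mathcal{V}_{\ge w}|$ on a finite set of points does not affect the value of the integral against the continuous function $f'$.
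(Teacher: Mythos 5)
The paper itself offers no proof of this lemma --- it is imported verbatim as Lemma 6.3 of the cited source --- so there is no in-paper argument to compare against. Your Abel-summation proof is the standard one and is essentially correct: writing $f(w_v) = f(w_0) + \int_{w_0}^{w_1} f'(w)\,\mathbf{1}[w \le w_v]\,dw$ and interchanging the finite sum with the integral yields the discrete integration-by-parts identity, and the interchange is trivially justified.

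One correction to your endpoint discussion, though. If some vertex has weight exactly $w_1$, the identity as stated genuinely fails, by $f(w_1)\cdot|\{v : w_v = w_1\}|$ (e.g.\ a single vertex of weight exactly $w_1$ makes the left side $f(w_1)$ and the right side $0$), and this \emph{cannot} be repaired by ``absorbing the single-point contribution into the integral.'' The reason is that $|\{v : w \le w_v \le w_1\}|$ and $|\mathcal{V}_{\ge w}| - |\mathcal{V}_{\ge w_1}|$ differ by the \emph{constant} $|\{v : w_v = w_1\}|$ for every $w \in [w_0, w_1]$, not merely at the single point $w = w_1$; integrating $f'$ against this constant contributes $(f(w_1)-f(w_0))\,|\{v: w_v = w_1\}|$, which combined with the analogous error in the $f(w_0)$-term leaves the net discrepancy $f(w_1)\,|\{v : w_v = w_1\}|$. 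So only your first resolution is valid: one must either adopt the convention that no vertex has weight exactly $w_1$, or read the correct identity as having $|\mathcal{V}_{> w_1}|$ in place of $|\mathcal{V}_{\ge w_1}|$ (equivalently, sum over $w_0 \le w_v < w_1$). This is harmless for every application in the paper, since the lemma is only ever used to produce upper bounds up to constants, but the fallback argument you offer should be dropped.
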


\noindent Next we give the classical Chernoff bounds.

\begin{lemma}[Theorem 1.1 in~\cite{dubhashi2009concentration}]
    \label{lem:chernoff}
    Let $X \coloneqq \sum_{i = 1}^n X_i$ be the sum of independent indicator random variables $X_i$. Then for any $\varepsilon \in (0,1)$ we have
    \begin{align*}
        \pr{X \geq (1 + \varepsilon)\ex{X}} \leq \exp(-\varepsilon^2\E[X]/3)
        \quad \textnormal{ and } \quad 
        \pr{X \leq (1 - \varepsilon)\ex{X}} \leq \exp(-\varepsilon^2\E[X]/2).
    \end{align*}
\end{lemma}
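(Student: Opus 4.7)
The statement to prove is the classical multiplicative Chernoff bound for sums of independent indicator random variables. The plan is to use the exponential moment (Cramér--Chernoff) method: for any real $t$, apply Markov's inequality to the non-negative random variable $e^{tX}$, use independence to factor the moment generating function into a product over the $X_i$'s, bound each factor using the inequality $1+x \le e^x$, and finally optimize over the free parameter $t$.

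For the upper tail, I would fix $t>0$ and write
\begin{align*}
\pr{X \ge (1+\varepsilon)\mu} = \pr{e^{tX} \ge e^{t(1+\varepsilon)\mu}} \le e^{-t(1+\varepsilon)\mu}\,\ex{e^{tX}},
\end{align*}
where $\mu \coloneqq \ex{X}$. Independence gives $\ex{e^{tX}} = \prod_{i=1}^n \ex{e^{tX_i}}$, and for each Bernoulli $X_i$ with mean $p_i$ one has $\ex{e^{tX_i}} = 1 + p_i(e^t-1) \le \exp(p_i(e^t-1))$ via $1+x\le e^x$. Multiplying yields $\ex{e^{tX}} \le \exp(\mu(e^t-1))$, and optimizing by taking $t=\ln(1+\varepsilon)$ produces the standard Chernoff form
\begin{align*}
\pr{X \ge (1+\varepsilon)\mu} \le \left(\frac{e^{\varepsilon}}{(1+\varepsilon)^{1+\varepsilon}}\right)^{\mu}.
\end{align*}

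The second, and main, technical step is then to translate this into the clean exponential form stated in the lemma. Setting $f(\varepsilon) \coloneqq \varepsilon - (1+\varepsilon)\ln(1+\varepsilon) + \varepsilon^2/3$, one checks $f(0)=0$ and $f'(\varepsilon) = -\ln(1+\varepsilon) + 2\varepsilon/3$; a short calculus argument (differentiating once more and comparing with the Taylor expansion of $\ln(1+\varepsilon)$) shows $f(\varepsilon) \le 0$ on $[0,1]$, which gives $e^{\varepsilon}/(1+\varepsilon)^{1+\varepsilon} \le e^{-\varepsilon^2/3}$ and hence the first inequality of the lemma. I expect this elementary inequality to be the only mildly delicate step; everything else is algebra.

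For the lower tail, the analogous argument with $t<0$ (or equivalently applying Markov's inequality to $e^{-tX}$ for $t>0$) yields
\begin{align*}
\pr{X \le (1-\varepsilon)\mu} \le \left(\frac{e^{-\varepsilon}}{(1-\varepsilon)^{1-\varepsilon}}\right)^{\mu},
\end{align*}
and the parallel calculus exercise establishes $e^{-\varepsilon}/(1-\varepsilon)^{1-\varepsilon} \le e^{-\varepsilon^2/2}$ on $(0,1)$, where the slightly tighter constant $1/2$ (rather than $1/3$) comes from the asymmetry of $\ln(1-\varepsilon)$ versus $\ln(1+\varepsilon)$ near zero. Combining both tails gives exactly the stated bounds. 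Since the lemma is cited from Dubhashi--Panconesi, in the paper itself one would simply invoke their Theorem 1.1; the sketch above is the standard derivation one would reproduce if a self-contained proof were required.
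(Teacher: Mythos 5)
Your derivation is the standard and correct exponential-moment (Cramér--Chernoff) argument, and both tail inequalities, including the elementary calculus steps showing $e^{\varepsilon}/(1+\varepsilon)^{1+\varepsilon}\le e^{-\varepsilon^2/3}$ and $e^{-\varepsilon}/(1-\varepsilon)^{1-\varepsilon}\le e^{-\varepsilon^2/2}$ on $(0,1)$, check out. The paper itself offers no proof of this lemma---it is imported verbatim as Theorem 1.1 of Dubhashi--Panconesi---so your sketch is exactly the self-contained derivation one would supply if asked, and nothing further is needed.
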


\noindent We will also need a strong version of Chernoff's bounds.
\begin{lemma}[Theorem 2.15 in~\cite{chung2006complex}]
    \label{lem:strong_chernoff}
    Let $X \coloneqq \sum_{i = 1}^n X_i$ be the sum of independent indicator random variables $X_i$. Then for any $\varepsilon > 0$ we have
    \begin{align*}
        \pr{X \geq (1 + \varepsilon)\ex{X}} \leq \left( \frac{e^{\varepsilon}}{(1 + \varepsilon)^{1 + \varepsilon}} \right)^{\ex{X}} \leq \left( \frac{e}{1 + \varepsilon} \right)^{ (1 + \epsilon) \ex{X}}.
    \end{align*}
\end{lemma}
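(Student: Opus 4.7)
The plan is to apply the standard Chernoff method: for arbitrary $t>0$, rewrite the tail probability via Markov's inequality applied to the moment generating function, then exploit independence of the $X_i$ and the fact that each is a Bernoulli, and finally optimize over $t$. The second inequality in the lemma is then a purely analytic consequence of the first. I expect no serious obstacle in this proof; the only point that requires some care is the choice of the optimal parameter $t$ and the simplification leading to the stated closed form.

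First, let $\mu \coloneqq \ex{X}$, let $p_i \coloneqq \ex{X_i}$, and fix some $t>0$. By Markov's inequality applied to the nonnegative random variable $e^{tX}$,
\begin{align*}
    \pr{X \geq (1+\varepsilon)\mu} \;=\; \pr{e^{tX} \geq e^{t(1+\varepsilon)\mu}} \;\leq\; \frac{\ex{e^{tX}}}{e^{t(1+\varepsilon)\mu}}.
\end{align*}
By the independence of the $X_i$, the MGF factorizes: $\ex{e^{tX}} = \prod_{i=1}^n \ex{e^{tX_i}}$. For an indicator with mean $p_i$, one has $\ex{e^{tX_i}} = 1 + p_i(e^t - 1) \leq \exp\bigl(p_i(e^t-1)\bigr)$, where the inequality uses $1+x \leq e^x$. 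Multiplying these bounds over $i$ yields $\ex{e^{tX}} \leq \exp\bigl(\mu(e^t-1)\bigr)$, and therefore
\begin{align*}
    \pr{X \geq (1+\varepsilon)\mu} \;\leq\; \exp\bigl(\mu(e^t-1) - t(1+\varepsilon)\mu\bigr).
\end{align*}

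Next I optimize the right-hand side by choosing $t = \ln(1+\varepsilon)$, which is legal since $\varepsilon>0$. Substituting gives $\mu(e^t-1) = \mu\varepsilon$ and $t(1+\varepsilon)\mu = \mu(1+\varepsilon)\ln(1+\varepsilon)$, so
\begin{align*}
    \pr{X \geq (1+\varepsilon)\mu} \;\leq\; \exp\bigl(\mu\varepsilon - \mu(1+\varepsilon)\ln(1+\varepsilon)\bigr) \;=\; \left(\frac{e^{\varepsilon}}{(1+\varepsilon)^{1+\varepsilon}}\right)^{\mu},
\end{align*}
which is the first stated inequality.

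Finally, for the second inequality it suffices to show $e^{\varepsilon}/(1+\varepsilon)^{1+\varepsilon} \leq \bigl(e/(1+\varepsilon)\bigr)^{1+\varepsilon}$, since raising both sides to the power $\mu$ preserves the inequality. This is equivalent to $e^{\varepsilon} \leq e^{1+\varepsilon}\cdot(1+\varepsilon)^{-(1+\varepsilon)}\cdot (1+\varepsilon)^{1+\varepsilon} = e^{1+\varepsilon}$, which holds trivially. Combining these three steps establishes the lemma.
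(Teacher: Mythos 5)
Your proof is correct. The paper does not prove this lemma at all---it is imported verbatim as Theorem 2.15 of Chung and Lu---so there is nothing to compare against; your derivation is the canonical exponential-moment argument (Markov on $e^{tX}$, factorization by independence, $1+x\le e^x$, and the optimal choice $t=\ln(1+\varepsilon)$), and the reduction of the second inequality to $e^{\varepsilon}\le e^{1+\varepsilon}$ is also right.
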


\subsection{Proofs about Boolean Distance Functions}\label{sec:proof_bdf}

\begin{proof}[Proof of Proposition \ref{prop:volume}]
The proof is by induction on the dimension $d$. Recall that $d$ is a constant w.r.t $n$. When $d=1$, we must have $\kappa(x) = |x|$ and therefore $V_{\kappa}(r) = 2r$ for all $r\in [0, 1/2]$, and in particular $V_{\kappa}(r) = \Theta(r) = \Theta(r^{\mathcal{D}(\kappa)})$. Assume now that $d \ge 2$. Then by Definition~\ref{def:bdf} there exists a non-empty proper subset $S \subsetneq [d]$ of coordinates such that $\kappa(x) = \max(\kappa_1((x_i)_{i \in S}), \kappa_2((x_i)_{i \notin S}))$ or $\kappa(x) = \min(\kappa_1((x_i)_{i \in S}), \kappa_2((x_i)_{i \notin S}))$ for some BDFs $\kappa_1$ and $\kappa_2$. Note that by the induction hypothesis we know that $V_{\kappa_1}(r) = \Theta(r^{\mathcal{D}(\kappa_1)})$ and $V_{\kappa_2}(r) = \Theta(r^{\mathcal{D}(\kappa_2)})$.

\noindent\textbf{Case 1,  $\kappa(x) = \max(\kappa_1((x_i)_{i \in S}), \kappa_2((x_i)_{i \notin S}))$.} In this case, we have
\begin{align*}
    V_{\kappa}(r) &= \int_{x \in \T^d, \kappa(x) \leq r} dx
    = \int_{z \in \T^{d-|S|}, \kappa_2(z) \leq r} \int_{y \in \T^{|S|}, \kappa_1(y) \leq r} dy dz \\
    &= \left(\int_{z \in \T^{d-|S|}, \kappa_2(z) \leq r} dz \right) \cdot \left(\int_{y \in \T^{|S|}, \kappa_1(y) \leq r} dy \right)
    = V_{\kappa_2}(r) \cdot V_{\kappa_1}(r) \\
    &= \Theta(r^{\mathcal{D}(\kappa_2)}) \cdot \Theta(r^{\mathcal{D}(\kappa_1)}) = \Theta(r^{\mathcal{D}(\kappa_1) + \mathcal{D}(\kappa_2)}) = \Theta(r^{\mathcal{D}(\kappa)}),
\end{align*}
where the first two equalities follow from the definition of the Lebesgue measure resp. the case distinction, the third equality follows by the Fubini-Tonelli theorem, and the last equality holds because $\kappa$ is outer-max, and hence $\mathcal{D}(\kappa_1) + \mathcal{D}(\kappa_2) = \mathcal{D}(\kappa)$ by Definition~\ref{def:depth}.

\noindent\textbf{Case 2, $\kappa(x) = \min(\kappa_1((x_i)_{i \in S}), \kappa_2((x_i)_{i \notin S}))$.} In this case, we have
\begin{align*}
    V_{\kappa}(r) &= \int_{x \in \T^d, \kappa(x) \leq r} dx
    = 1 - \int_{x \in \T^d, \kappa(x) > r} dx
    = 1 - \int_{z \in \T^{d-|S|}, \kappa_2(z) > r} \int_{y \in \T^{|S|}, \kappa_1(y) > r} dy dz \\
    &= 1 - \left(\int_{z \in \T^{d-|S|}, \kappa_2(z) > r} dz \right) \left(\int_{y \in \T^{|S|}, \kappa_1(y) > r} dy \right)
    = 1 - \left(1 - V_{\kappa_2}(r) \right) \left(1 -V_{\kappa_1}(r) \right) \\
    &= 1 - \left(1 - \Theta(r^{\mathcal{D}(\kappa_2)}) \right) \left(1 - \Theta(r^{\mathcal{D}(\kappa_1)}) \right) \\
    &= 1 - \left(1 - \Theta(r^{\mathcal{D}(\kappa_1)}) - \Theta(r^{\mathcal{D}(\kappa_2)}) \right)
    = \Theta(r^{\min(\mathcal{D}(\kappa_1), \mathcal{D}(\kappa_2))})
    = \Theta(r^{\mathcal{D}(\kappa)}),
\end{align*}
where  the last equality holds because $\kappa$ is outer-min, and hence $\min(\mathcal{D}(\kappa_1), \mathcal{D}(\kappa_2)) = \mathcal{D}(\kappa)$ by Definition~\ref{def:depth}.
\end{proof}

\begin{proof}[Proof of Proposition \ref{prop:bdfup}]
    The proof is by induction on the dimension $d$. When $d=1$, we must have $\kappa(x) = |x|$ and therefore the claim trivially holds with $S \coloneqq \{1\}$. Assume now that $d\ge 2$, so that $\kappa = \max(\kappa_1, \kappa_2)$ or $\kappa = \min(\kappa_1, \kappa_2)$ for two BDFs $\kappa_1$ and $\kappa_2$ acting on a torus of dimension strictly smaller than $d$. By the induction hypothesis, there exist disjoint subsets $S_1, S_2 \subseteq [d]$ with $|S_1| = \mathcal{D}(\kappa_1), |S_2| = \mathcal{D}(\kappa_2)$ such that $\kappa_1(x^1) \le \max_{i\in S_1} |x^1_i|$ and $\kappa_2(x^2) \le \max_{i\in S_2} |x^2_i|$, where $x^1, x^2$ stand for  $x$ restricted to the coordinates in which $\kappa_1, \kappa_2$ act respectively.
    
    If $\kappa$ is outer-max, setting $S \coloneqq S_1 \cup S_2$ gives us a subset of size $|S| = |S_1| + |S_2| = \mathcal{D}(\kappa_1) + \mathcal{D}(\kappa_2) = \mathcal{D}(\kappa)$. Moreover, we have 
    \begin{align*}
    	 \kappa(x) = \max(\kappa_1(x^1), \kappa_2(x^2)) \le \max\left( \max_{i\in S_1} |x^1_i|, \max_{i\in S_2} |x^2_i|\right) = \max_{i\in S_1 \cup S_2} |x_i|
    \end{align*}
    as desired.

    If $\kappa$ is outer-min, let us assume without loss of generality that $\mathcal{D}(\kappa_1) \le \mathcal{D}(\kappa_2)$, and set $S \coloneqq S_1$. Then $|S| = |S_1| = \mathcal{D}(\kappa_1) = \min(\mathcal{D}(\kappa_1), \mathcal{D}(\kappa_2)) = \mathcal{D}(\kappa)$ and
    \begin{align*}
    	\kappa(x) = \min(\kappa_1(x^1), \kappa_2(x^2)) \le \min\left( \max_{i\in S_1} |x^1_i|, \max_{i\in S_2} |x^2_i|\right) \le \max_{i\in S_1} |x_i|
    \end{align*}
    as desired.
\end{proof}

\begin{proof}[Proof of Proposition \ref{prop:scomup}]
	The proof is by induction on the dimension $d$. Notice that since $\kappa$ is outer-max and non-SCOM, the base case is $d=4$ with $\kappa(x) = \max(\min(|x_1|, |x_2|), \min(|x_3|, |x_4|))$ (up to permutations of the coordinates). We have 
    \begin{align*}
        \kappa(x) &= \max(\min(|x_1|, |x_2|), \min(|x_3|, |x_4|)) \\
        &= \min(\max(|x_1|, |x_3|), \max(|x_1|, |x_4|), \max(|x_2|, |x_3|), \max(|x_2|, |x_4|)) \\
        &\leq \min(\max(|x_1|, |x_3|), \max(|x_2|, |x_4|)) \\
        &\eqqcolon \kappa'(x),
    \end{align*}
    where the second equality follows by distributing the $\max$ operation over the $\min$ operation, and the inequality holds because dropping terms inside the $\min$ does not decrease its value.
    Note that $\kappa'$ is an outer-min BDF, and additionally, $\mathcal{D}(\kappa') = 2 = \mathcal{D}(\kappa)$.
    
    For the induction step, let $d\ge5$ and let $\kappa$ be $\kappa : \T^d \rightarrow \R_{\geq 0}$ be a non-SCOM outer-max Boolean Distance Function with comprising functions $\kappa_1$ and $\kappa_2$. Since $\kappa = \max(\kappa_1, \kappa_2)$ is non-SCOM, we know that $\kappa_1$ is either outer-min or non-SCOM outer-max (if $\kappa_1$ was SCOM, then the coordinate $k$ that can be singled out of $\kappa_1$ could also be singled out of $\kappa$), and the same holds for $\kappa_2$. For $k\in\{1,2\}$, if $\kappa_k$ is non-SCOM outer-max, we know by the induction hypothesis that there exists an outer-min BDF $\kappa_k'$ with $\mathcal{D}(\kappa_k') = \mathcal{D}(\kappa_k)$ and $\kappa_k \le \kappa_k'$. Otherwise if $\kappa_k$ is outer-min we simply set $\kappa_k' \coloneqq \kappa_k$. Let $\kappa_{11}, \kappa_{12}$ be the comprising functions of $\kappa_1'$ and $\kappa_{21}, \kappa_{22}$ be the comprising functions of $\kappa_2'$. Without loss of generality, we can assume that $\mathcal{D}(\kappa_{11}) \le \mathcal{D}(\kappa_{12})$ and $\mathcal{D}(\kappa_{21}) \le \mathcal{D}(\kappa_{22})$. Let us denote by $x^k$ the point $x$ restricted to the coordinates on which $\kappa_k$ acts. We have
    \begin{align*}
        \kappa(x) &= \max(\kappa_1(x^1), \kappa_2(x^2))
        \le \max(\kappa_1'(x^1), \kappa_2'(x^2)) \\
        &= \max(\min(\kappa_{11}(x^{11}), \kappa_{12}(x^{12})), \min(\kappa_{21}(x^{21}), \kappa_{22}(x^{22}))) \\
        &\leq \min(\max(\kappa_{11}(x^{11}), \kappa_{21}(x^{21})), \max(\kappa_{12}(x^{12}), \kappa_{22}(x^{22}))) \\
        &\eqqcolon \kappa'(x).
    \end{align*}
    Clearly $\kappa'$ is an outer-min BDF. It remains to show that $\mathcal{D}(\kappa') = \mathcal{D}(\kappa)$. Indeed
    \begin{align*}
    	\mathcal{D}(\kappa) = \mathcal{D}(\kappa_1) + \mathcal{D}(\kappa_2) = \mathcal{D}(\kappa'_1) + \mathcal{D}(\kappa'_2) = \mathcal{D}(\kappa_{11}) + \mathcal{D}(\kappa_{21})
    \end{align*} 
    and
    \begin{align*}
    	\mathcal{D}(\kappa') = \min(\mathcal{D}(\kappa_{11}) + \mathcal{D}(\kappa_{21}), \mathcal{D}(\kappa_{12}) + \mathcal{D}(\kappa_{22})) = \mathcal{D}(\kappa_{11}) + \mathcal{D}(\kappa_{21}),
    \end{align*} 
    which concludes the proof.
\end{proof}

\subsection{Proof of Theorem \ref{thm:small_separator}}
\label{sec:proof_small_sep}
In the following, we give the full proof of Theorem \ref{thm:small_separator}, which we first restate below for the convenience of the reader.
\begin{theorem}

    Let $\kappa : \T^d \rightarrow \R_{\geq 0}$ be a SCOM BDF and let $\mathcal{G} = (\mathcal{V}, \mathcal{E})$ be a $\kappa$-GIRG. Then, w.h.p.\ there exists a subset of edges $\mathcal{S} \subset \mathcal{E}$ with $|\mathcal{S}| = o(n)$ such that $\mathcal{G}' \coloneqq (\mathcal{V}, \mathcal{E}\setminus\mathcal{S})$ has two connected components of size $\Theta(n)$.
\end{theorem}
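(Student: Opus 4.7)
My plan is to construct an explicit separator perpendicular to the singled-out coordinate axis. Since $\kappa$ is SCOM, there is a coordinate $k \in [d]$ and a BDF $\kappa_0$ on $\T^{d-1}$ such that $\kappa(x) = \max(|x_k|_T, \kappa_0((x_i)_{i \ne k}))$. For a shift $t \in [0, 1)$, I define the half-tori $A_t \coloneqq \{x \in \T^d : x_k \in [t, t+1/2) \bmod 1\}$ and $B_t \coloneqq \T^d \setminus A_t$, each of volume $1/2$, and let $\mathcal{S}_t$ be the set of edges of $\mathcal{G}$ with one endpoint in each half. The objective is to exhibit a shift $t$ such that (a) both $|K \cap A_t|$ and $|K \cap B_t|$ are linear in $n$, where $K$ denotes the giant component, and (b) $|\mathcal{S}_t| = o(n)$ w.h.p. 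Given both, the restrictions of $\mathcal{G}' \coloneqq (\mathcal{V}, \mathcal{E} \setminus \mathcal{S}_t)$ to $A_t$ and to $B_t$ are essentially GIRGs on half-tori with $\Theta(n)$ vertices, so each contains a connected component of size $\Theta(n)$ by (an application of) Theorem~\ref{thm:giant}, yielding the two desired linear-sized components.

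For (a), observe that $g(t) \coloneqq |K \cap A_t|$ is piecewise constant with unit jumps, takes integer values in $[0, |K|]$, averages to $|K|/2$ over $t \in [0,1)$, and satisfies $g(t+1/2) = |K| - g(t)$. Consequently $g$ attains the value $\lfloor |K|/2 \rfloor$ on a union of intervals of positive Lebesgue measure which, combined with $|K| = \Theta(n)$ from Theorem~\ref{thm:giant}, yields a set $T_{\mathrm{bal}} \subseteq [0,1)$ of balanced shifts of positive measure. For (b), I use the elementary identity that, conditional on the difference $y = x_u - x_v$ (uniform on $\T^d$), the probability that $u$ and $v$ lie on opposite sides of $A_t$ equals $2|y_k|_T$, independently of $t$ by translation invariance. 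Combined with the edge probability bound~\eqref{eq:girg-connection} and Fubini's theorem, this gives
\[
\E\!\left[\int_0^1 |\mathcal{S}_t|\, dt\right] = \sum_{u < v} \int_{\T^d} 2 |y_k|_T \cdot p_{uv}(y)\, dy.
\]

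The main technical obstacle is bounding the right-hand side by $o(n)$. Using Proposition~\ref{prop:volume} and the outer-max factorisation $V_\kappa(r) = 2r\cdot V_{\kappa_0}(r) = \Theta(r^D)$ for $D \coloneqq \mathcal{D}(\kappa) \ge 1$, together with the inequality $|y_k|_T \le \kappa(y)$, a case split around the characteristic radius $r_0 \coloneqq (w_u w_v/n)^{1/D}$ produces
\[
\int_{\T^d} |y_k|_T \cdot p_{uv}(y)\, dy \;\le\; C \cdot \max\!\bigl\{(w_u w_v/n)^{1+1/D},\, (w_u w_v/n)^{\alpha}\bigr\}.
\]
Summing over all pairs, controlled through the power-law estimates of Definition~\ref{def:power-law} and a separate treatment of heavy-weight vertices (whose total contribution to $|\mathcal{S}_t|$ is bounded by the sum of their expected degrees via Lemma~\ref{lem:deg}), yields that the right-hand side is $o(n)$. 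A final Markov-type averaging argument then shows that w.h.p.\ the set $\{t \in [0,1) : |\mathcal{S}_t| > f(n)\}$ has Lebesgue measure $o(1)$ for a suitable threshold $f(n) = o(n)$; hence the positive-measure set $T_{\mathrm{bal}}$ must contain a shift $t$ satisfying both (a) and (b), completing the argument.
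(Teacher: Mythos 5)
Your overall strategy matches the paper's: cut perpendicular to the singled-out coordinate axis, bound the expected number of crossing edges via the conditional probability $2|y_k|_T$ of straddling the hyperplanes, sum over pairs using the power-law estimates, then invoke the fact (Lemma 3.12 in \cite{keusch2018geometric}) that a GIRG restricted to a half-torus is again a GIRG to obtain a $\Theta(n)$ component in each half. Your bound on $\int |y_k|_T\,p_{uv}(y)\,dy$ in terms of $\max\{(w_uw_v/n)^{1+1/D},(w_uw_v/n)^{\alpha}\}$ is essentially the paper's $\rho_{u,v} \le O(\gamma_{u,v}^{D+1}) + O(\gamma_{u,v}^{D\alpha}|\log\gamma_{u,v}|)$ modulo the logarithmic factor in the boundary case $\alpha = 1 + 1/D$, which is routine to absorb.

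However, step (a) — balancing the giant $K$ across a random shift $t$ — is both unnecessary and, as written, incomplete. It is unnecessary because once you invoke the half-torus GIRG argument, you only need $\Omega(n)$ \emph{total} vertices on each side (immediate from Chernoff for a fixed cut), not a balanced split of $K$ itself; the two giants produced in the two halves are automatically distinct $\Theta(n)$-sized components of $\mathcal{G}'$. It is incomplete because the positive-measure claim for $T_{\mathrm{bal}}$ does not yield a lower bound on its measure: the level set $\{t : g(t) = \lfloor |K|/2 \rfloor\}$ of a step function with $\Theta(n)$ jumps can easily have measure $\Theta(1/n)$, so it is not guaranteed to escape the bad set of shifts $\{t : |\mathcal{S}_t| > f(n)\}$, which you only bound to measure $o(1)$. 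You would need to relax $T_{\mathrm{bal}}$ to an interval condition like $g(t) \in [\epsilon|K|, (1-\epsilon)|K|]$ to get $\Omega(1)$ measure. But the cleaner route is the paper's: drop the shift-averaging entirely (translation invariance makes it gain nothing), use the fixed cut at $x_k \in \{0,1/2\}$, apply Markov to $\E[|\mathcal{S}|]$, show each fixed half contains $\Omega(n)$ vertices by Chernoff, and finish with the half-torus GIRG argument.
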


\begin{proof}
    Our proof strategy is the same as in~\cite{bringmann2019geometric}. The idea is to partition the ground space along the singled-out coordinate axis into two half-spaces of equal volume, ensuring that each half-space contains a linear number of the vertices of the giant. We then upper-bound the number of edges crossing the separating hyperplanes. Each pair of vertices will contribute an edge intersecting one of the two hyperplanes if and only if the vertices lie in different half-spaces and are connected by an edge. The joint probability of this event can be computed using the law of iterated probability.

    Since $\kappa$ is SCOM, we can assume without loss of generality that it is of the form $\kappa(x) = \max(|x_1|, \kappa'(x_2, \ldots, x_d))$ for $d > 1$, and simply $\kappa(x) = |x_1|$ for $d = 1$. Let $D\coloneqq \mathcal{D}(\kappa)$ denote the depth of $\kappa$. Consider the hyperplanes defined by the equations $x_1 = 0$ and $x_1 = 1/2$. They partition $\T^d$ into two disjoint regions. For any two vertices $u, v$, let us denote by $F_{u, v}$ the event that they lie in different regions of $\T^d$, and by $G^r_{u, v}\coloneqq\{\kappa(x_u - x_v) = r\}$ the event that they are at a distance $r$. By Proposition~\ref{prop:volume}, we know that the probability density $\varrho[G^r_{u, v}]$ of the latter is in $\Theta(r^{D - 1})$, so it remains to estimate $\pr{F_{u, v} \mid G^r_{u, v}}$.

    First, notice that crucially $F_{u, v}$ depends only on the first coordinate of the positions of the vertices $u, v$. To exploit this, let us define the event $H^{r_1}_{u, v} \coloneqq\{|x_{u, 1} - x_{v, 1}| = r_1\}$ for some $0 \leq r_1 \leq r$. The event $F_{u, v}$ is then conditionally independent of the event $G^r_{u, v}$ given $H^{r_1}_{u, v}$. Consequently, for all $0 \leq r_1 \leq r \leq 1/2$, we have $\pr{F_{u, v} \mid G^r_{u, v}, H^{r_1}_{u, v}} = \pr{F_{u, v} \mid H^{r_1}_{u, v}} = 2r_1$. Thus, applying the law of iterated expectation over only the first random coordinate yields 
    \[\pr{F_{u, v} \mid G^r_{u, v}} =  \ex{\pr{F_{u, v} \mid G^r_{u, v}, H^{r_1}_{u, v}}} = \ex{\pr{F_{u, v} \mid H^{r_1}_{u, v}}} = 2r_1 \leq 2r.\]

     Setting $\gamma_{u, v} \coloneqq \min \{ (w_u w_v / n)^{1 / D}, 1/2\}$, we can now express $p_{u, v}$ as
    \begin{align*}
        p_{u, v}(r) = \begin{cases}
        \Theta(1) & \text{ if } r \leq \gamma_{u, v}, \\
        \Theta((\gamma_{u, v} / r)^{\alpha D}) & \text{ if } r > \gamma_{u, v}.
        \end{cases}
    \end{align*} 

    Let $\rho_{u, v}$ be the probability that $u$ and $v$ are connected by an edge that crosses the separating hyperplanes. Then, following the computations that are done in the proof of Lemma 6.1 in~\cite{bringmann2019geometric}, we can show that $\rho_{u, v} \leq O(\gamma_{u, v}^{D + 1}) + O(\gamma_{u, v}^{D \alpha} |\log(\gamma_{u, v})|)$.\footnote{For the sake of readability we defer these computations to subsections \ref{sec:computations_rho} and \ref{sec:computations_S} right after the proof.\label{foot:computations}}
    Defining $\tilde{\alpha} \coloneqq \min(\alpha, 1 + 1/D)$, we get $\rho_{u,v} \le O(\gamma_{u, v}^{D \tilde{\alpha}} \log(n))$ since $\gamma_{u,v} \ge 1/n$. Let $S \coloneqq \sum_{u, v \in \mathcal{V}} \gamma_{u, v}^{D \tilde{\alpha}}$ and let $\mathcal{S}$ denote the (random) set of edges in $\mathcal{G}$ that cross the separating hyperplanes. Then $\E[|\mathcal{S}|] = O(S\log(n))$ (we extracted the log term to ease the reading of the computations to come), and following the computations that are done in the proof of Lemma 6.1 in~\cite{bringmann2019geometric} yields (for any choice of constant $\eta>0$) $S \le O(n^{3-\beta+\eta}) + O(n^{2-\tilde{\alpha}})$.\footnoteref{foot:computations} Defining $m \coloneqq \max\{3 - \beta, 2 - \alpha , 1 - 1/D\} + 2\eta$, and choosing $\eta$ small enough so that $m<1$, we get $\E[|\mathcal{S}|] = O(n^m)$, and therefore, using Markov's inequality, we deduce that $|\mathcal{S}| = o(n)$ w.h.p.

    It remains to show that with high probability $\mathcal{G}'$ has two connected components of linear size. By Chernoff's bounds (Lemma \ref{lem:chernoff}), w.h.p.\ each half-space contains $\Omega(n)$ vertices.  By Lemma 3.12 in \cite{keusch2018geometric} the subgraphs obtained by restricting the vertex set to one of these half-spaces are also GIRGs themselves, and hence by Theorem~\ref{thm:giant} each half-space gives rise to a connected component of size $\Theta(n)$.
    
\end{proof}

\subsubsection{Upper-bounding $\rho_{u,v}$}\label{sec:computations_rho}

Recall that $\rho_{u,v}$ is the probability that $u$ and $v$ are connected by an edge that crosses the
separating hyperplanes. Recall further that $F_{u, v}$ denotes the event that they lie in different regions of $\T^d$, and that $G^r_{u, v}\coloneqq\{\kappa(x_u - x_v) = r\}$ is the event that they are at a distance $r$. More concretely, these events allow us to express $\rho_{u,v}$ as the marginal probability - obtained from integrating over all possible distances $r$ - of the following joint probability: the probability that the pair of vertices $u,v$ is at some distance $r$, with the two vertices not in the same partition and connected by an edge. The possible distances $r$ range from $0$ to $\frac{1}{2}$ due to the choice of the ground space.  Therefore, we have

\begin{align*}
   \rho_{u, v} &= \int_{r = 0}^{1/2} p_{u, v}(r) \cdot \varrho[G^r_{u, v}] \cdot \pr{F_{u, v} \mid G^r_{u, v}} dr \\
   &\leq \int_{r = 0}^{\gamma_{u, v}} 
   \Theta(1) \cdot \Theta(r^{D - 1}) \cdot 2r dr  + \int_{r = \gamma_{u, v}}^{1/2} \Theta((\gamma_{u, v} / r)^{\alpha D}) \cdot \Theta(r^{D - 1}) \cdot  2r dr \\
   &= \underbrace{\int_{r = 0}^{\gamma_{u, v}} \Theta(r^D) dr}_{\eqqcolon I_1}  + \Theta(\gamma_{u, v}^{D \alpha}) \cdot \underbrace{ \int_{r = \gamma_{u, v}}^{1/2} \Theta(r^{D - \alpha D}) dr}_{\eqqcolon I_2}.
\end{align*}
The inequality follows from the bounds described in the main body of the proof. We proceed to compute the integrals $I_1$ and $I_2$. For the first integral we get
\[ I_1 = \int_{r = 0}^{\gamma_{u, v}} \Theta(r^D) dr = \Theta(\gamma_{u,v}^{D + 1}). \]
For the second integral, we distinguish two cases. First, we consider the case when $D - \alpha D \neq -1$, in which case we obtain
\[ I_2 = \int_{r = \gamma_{u, v}}^{1/2} \Theta(r^{D - \alpha D}) dr \le  O(\gamma_{u,v}^{D + 1 - \alpha D}) + O(1),\]
where the first term accounts for the cases where the lower bound of integration dominates, and the second term accounts for the cases where the upper bound of integration dominates.
When $D - \alpha D = -1$ we have
\[ I_2 = \int_{r = \gamma_{u, v}}^{1/2} \Theta(r^{-1}) dr =  O(|\log {\gamma_{u,v}}|). \]
Putting these together, we obtain 
\[ \rho_{u, v} \leq O(\gamma_{u, v}^{D + 1}) + O(\gamma_{u, v}^{D \alpha} |\log(\gamma_{u, v})|) \]
as desired.

\subsubsection{Upper-bounding $S$}\label{sec:computations_S}

Recall that $S=\sum_{u, v \in \mathcal{V}} \gamma_{u, v}^{D \tilde{\alpha}}$ corresponds to the expected number of cross-edges up to a logarithmic factor. We have
\begin{align*}
    S &=
    \sum_{u, v \in \mathcal{V}} \gamma_{u, v}^{D \tilde{\alpha}} = \sum_{u \in \mathcal{V}} \left(\sum_{v \in \mathcal{V}_{\leq n / w_u}} \left(\frac{w_u w_v}{n}\right)^{\tilde{\alpha}} + \sum_{v \in \mathcal{V}_{\geq n / w_u}} \frac{1}{2} \right) \\
    &= \sum_{u \in \mathcal{V}}  \left(\frac{w_u}{n}\right)^{\tilde{\alpha}} \sum_{v \in \mathcal{V}_{\leq n / w_u}} w_v^{\tilde{\alpha}} + \frac{1}{2} \sum_{u \in \mathcal{V}} {|\mathcal{V}_{\geq n / w_u}|} \\
    &\leq \underbrace{\sum_{u \in \mathcal{V}} \left(\frac{w_u}{n}\right)^{\tilde{\alpha}} \int_{1}^{n / w_u} O(n w^{1 - \beta + \eta} w^{\tilde{\alpha} - 1})dw }_{\eqqcolon S_1} + \underbrace{\sum_{u \in \mathcal{V}} {|\mathcal{V}_{\geq n / w_u}|}}_{\eqqcolon S_2},
\end{align*}
where the second equality follows from the minimum in the definition of $\gamma_{u,v}$ and the last inequality holds by Lemma~\ref{lem:sum_to_integral} and equation~\eqref{eq:pl2}. Let us first evaluate $S_2$. It follows from~\eqref{eq:pl2} that there are no vertices of weight larger than $w_{max} = \Theta(n^{1/(\beta - 1 - \eta)})$. Define $w' \coloneqq n / w_{max} = \Theta(n^{(\beta - 2 - \eta)/(\beta - 1 - \eta)})$. Then

\begin{align*}
    S_2 &= \sum_{u \in \mathcal{V}, w_u \geq w'} {|\mathcal{V}_{\geq n / w_u}|} 
    \stackrel{\eqref{eq:pl2}}{=} O\left(\sum_{u \in \mathcal{V}, w_u \geq w'} n \cdot (n / w_u)^{1 - \beta + \eta}\right) \\
    &= O\left(n^{2 - \beta + \eta} \sum_{u \in \mathcal{V}, w_u \geq w'} w_u^{\beta - 1 - \eta}\right).
\end{align*}    
By Lemma~\ref{lem:sum_to_integral} and~\eqref{eq:pl2}, the above becomes, for some $\eta'\in(0,\eta)$,
\begin{align*}
    S_2 &= O\left(n^{2 - \beta + \eta} \left[ (w')^{\beta - 1 - \eta} \cdot n \cdot (w')^{1 - \beta + \eta'} + \int_{w'}^{\infty} w^{\beta - 2 - \eta} n w^{1 - \beta + \eta'} dw \right]\right) \\
    &= O\left(n^{3 - \beta + \eta} \left[ (w')^{\eta' - \eta} + \int_{w'}^{\infty} w^{\eta' - \eta - 1}dw \right]\right) 
    = O(n^{3 - \beta + \eta}).
\end{align*}
To evaluate $S_1$, we consider two cases. If $\tilde{\alpha} + \eta \geq \beta  - 1$, then we have
\begin{align*}
    S_1 &= O\left(n \sum_{u \in \mathcal{V}} \left(\frac{w_u}{n}\right)^{\tilde{\alpha}} \left( \frac{n}{w_u} \right)^{ \tilde{\alpha} - \beta + 1 + \eta} \right) 
    = O\left(n^{2 - \beta + \eta} \sum_{u \in \mathcal{V}} w_u^{\beta - 1 - \eta}\right).
\end{align*}
By Lemma~\ref{lem:sum_to_integral} and~\eqref{eq:pl2} we get, for some $\eta'\in(0,\eta)$,
\begin{align*}
    S_1 &= O\left(n^{2 - \beta + \eta} \left[ n +  \int_{1}^{\infty} w^{\beta - 2 - \eta} n w^{1 - \beta + \eta'} dw \right]\right) 
    = O\left(n^{3 - \beta + \eta} \cdot \int_{1}^{\infty} w^{\eta' - \eta - 1} dw \right) \\
    &= O(n^{3 - \beta + \eta}).
\end{align*}
On the other hand, if $\tilde{\alpha} + \eta < \beta - 1$, then we have
$S_1 = O\left(n^{1 - \tilde{\alpha}} \sum_{u \in \mathcal{V}} w_u^{\tilde{\alpha}}\right)$.
By Lemma~\ref{lem:sum_to_integral} and~\eqref{eq:pl2}, for some $\eta'\in(0,\beta - \tilde{\alpha} - 1)$, we have
\begin{align*}
    S_1 = O\left(n^{1 - \tilde{\alpha}} \left[ n +  \int_{1}^{\infty} w^{\tilde{\alpha}-1} n w^{1 - \beta + \eta'} dw \right]\right) 
    = O\left(n^{2-\tilde{\alpha}} \cdot \int_{1}^{\infty} w^{\tilde{\alpha}-\beta+\eta'} dw \right) = O(n^{2 - \tilde{\alpha}}).
\end{align*}
Summing up, we obtain $S \le O(n^{3-\beta+\eta}) + O(n^{2-\tilde{\alpha}})$ as desired.

\subsection{Proof of Lemma~\ref{lem:spread} and deriving Corollary~\ref{cor:constant_connection_prob} from it}
\label{sec:spread}

\begin{proof}[Proof of Lemma~\ref{lem:spread}]
    
    For simplicity, we assume that $rn$ is an integer. It is straightforward to adapt it to the general case. Fix a set $S$ of $rn$ cells. Their total volume, is given by $rn \cdot M^{-m}$ which lies in the interval $[rl2^{-m}, rl]$. Let $X_v$ be the indicator random variable that denotes if the vertex $v$ is in $S$, so that $\pr{X_v = 1} = rn \cdot M^{-m}$.
    Let $X = \sum_{v \in V} X_v$ be the number of vertices in $S$. We have
    \[ \ex{X} = \sum_{v \in V} \ex{X_v} = rn^2 \cdot M^{-m} \in [rnl2^{-m}, rnl], \]
    and therefore $\frac{\delta}{2}n \geq \frac{\delta}{2} \frac{\ex{X}}{rl}$. Since $X$ is a sum of independent Bernoulli random variables, we can use the strong Chernoff's bound (Lemma \ref{lem:strong_chernoff}) to upper-bound the probability that $S$ has at least $\frac{\delta}{2}n$ vertices. Setting $1 + \varepsilon = \frac{\delta}{2rl}$ yields
    \[ \pr{X \geq \frac{\delta}{2}n} \leq \left(\frac{2erl}{\delta}\right)^{\frac{\delta}{2rl} \cdot \frac{rnl}{2^m}} = \left[ \left(\frac{2erl}{\delta}\right)^{\frac{\delta l}{2^{2m + 1}}} \right]^{\frac{2^mn}{l}}.\]
    Thus, we can choose $r$ small enough so that $(2erl/\delta)^{\delta l/2^{2m+1}} \le 1/3$, which yields
    \[\pr{X \geq \frac{\delta}{2}n} \le 3^{-\frac{2^m n}{l}}.\]
    Now, there are at most $2^{M^m} \leq 2^{\frac{2^m n}{l}}$ choices for $S$, and taking a union bound over all such choices, we get that the probability that there exists a choice of $S$ such that at least $\frac{\delta}{2}n$ vertices are in it, is upper bounded by $(2/3)^{\frac{2^m n}{l}} = e^{-\Theta(n)}$.
\end{proof}

\begin{proof}[Proof of Corollary~\ref{cor:constant_connection_prob}]
   
    Choose $l = 1$, and $M=\lceil n^{1/m} \rceil$ as defined in Lemma~\ref{lem:spread}. Then, for any two vertices $u, v$ in the same cell, we have that $\norm{(x_u - x_v)_{S_2}}^D_{\infty} \leq \left(\frac{l}{n}\right)^\frac{D}{m} = \frac{l}{n}$ (remember that $D = m$), and this implies that $p_{uv}(c'_L, \norm{(x_u - x_v)_{S_2}}_{\infty}) = c'_L$ since $w_u, w_v \ge 1$. Thus, we get that
    \[ \pr{Y_{uv}^2 < p_{uv}(c'_L, \norm{(x_u - x_v)_{S_2}}_{\infty}) \mid u, v \text{ are in the same cell}} \geq c'_L / 2,\]
    and therefore by~\eqref{eq:lb2} this guarantees that $\pr{uv\in\mathcal{E} \mid u, v \text{ are in the same cell}} \geq c'_L / 2$.
    Fix some $\delta n/2 < k \leq n$. Applying Lemma~\ref{lem:spread} to $V_k$ yields that with probability $1 - e^{-\Theta(n)}$ , there is no set $S$ of $rn$ cells such that there are at least $\delta n/2$ vertices $v \in V_k$ in them. In particular, this implies that for every subset $A \subseteq V_k$ of size at least $\delta n/2$, there are more than $rn$ cells that contain at least one vertex of $A$. Thus, with probability at least $rn/M^m \geq rl/2^m$, $u_k$ is in a cell with at least one vertex $v$ of $A$, and hence
    \[ \pr{\exists v \in A \textnormal{ with } u_kv\in\mathcal{E}} \geq rl/2^m \cdot c'_L / 2 \eqqcolon P.\]
    Taking a union bound over all possible values of $k$ concludes the proof.
\end{proof}

\subsection{Proving that $K_{max}^3$ and $K_{max}^4$ have no small cuts in the proof of Theorem~\ref{thm:robustness}}\label{sec5details}

Corollary~\ref{cor:constant_connection_prob}, yielding a constant lower bound on edge formation in Phases 5 and 6, as well as the cut bound from Lemma~\ref{lem:cut}, enable us to show that $K^3_{max}$ has no small cuts. Let us fix some $\delta\in(0,1)$ throughout the section.

\begin{lemma}
    \label{lem:k3cuts}
    There is a constant $\eta > 0$ such that w.h.p.\ the induced subgraph $G_4[K^3_{max}]$ has no $(\delta, \eta)$-cut.
\end{lemma}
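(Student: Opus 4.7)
The plan is to mimic the Luczak--McDiarmid-style argument adapted to GIRGs in~\cite{lengler2017existence}: combine the bound of Lemma~\ref{lem:cut} on the number of sparse bipartitions of a connected graph with the uniform constant lower bound on connection probability from Corollary~\ref{cor:constant_connection_prob}, then apply Chernoff's bound and a union bound.

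First, since $G_3[K^3_{max}]$ is connected by definition of $K^3_{max}$ and since $G_3\subseteq G_4$, every $(\delta,\eta)$-cut $(A,B)$ of $G_4[K^3_{max}]$ is also a bipartition of $G_3[K^3_{max}]$ with at most $\eta|K^3_{max}|\le\eta n$ cross-edges. For an $\varepsilon>0$ to be fixed at the end of the argument, Lemma~\ref{lem:cut} yields a constant $\eta_0=\eta_0(\varepsilon)>0$ such that, whenever $\eta\le\eta_0$, there are at most $(1+\varepsilon)^{|K^3_{max}|}\le(1+\varepsilon)^n$ such candidate bipartitions to rule out.

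Condition on the w.h.p.\ events that $|K^3_{max}|\ge s_{max}n$, that $|F|\ge fn$, and that the high-probability conclusion of Corollary~\ref{cor:constant_connection_prob} (applied with threshold $\delta^\ast\coloneqq \delta s_{max}$) holds; crucially, none of these events depends on the choice of bipartition. Fix now a candidate bipartition $(A,B)$ of $K^3_{max}$, and for each $u_k\in F$ let $X_k$ be the indicator that step $k$ of Phase 6 produces an edge from $u_k$ to the opposite side of $(A,B)$ via~\eqref{eq:lb2}. Both sides have size at least $\delta|K^3_{max}|\ge\delta s_{max}n$, and since at most $6fn/s_{max}$ vertices of $F$ remain unprocessed at any step of Phase 6, for $f$ chosen small enough (e.g.\ $f\le\delta s_{max}^2/12$) the intersection of the opposite side with the already-processed set $V_k$ has size at least $\delta^\ast n/2$. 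Corollary~\ref{cor:constant_connection_prob} therefore gives $\pr{X_k=1}\ge P$ for a constant $P=P(\delta^\ast)>0$, and after conditioning on all vertex positions the $X_k$'s become independent because each depends only on the disjoint fresh random variables $Y^2_{jk}$ with $j<k$ sampled at step $k$.

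Consequently $\sum_{u_k\in F}X_k$ stochastically dominates a binomial $\mathrm{Bin}(|F|,P)$, so by Chernoff's bound (Lemma~\ref{lem:chernoff}) at least $Pfn/2$ Phase-6 cross-edges of $(A,B)$ are produced with failure probability $\exp(-\Omega(n))$. Setting $\eta\coloneqq\min(\eta_0(\varepsilon),Pf/4)$ makes this quantity strictly larger than $\eta|K^3_{max}|$, contradicting the assumption that $(A,B)$ is a $(\delta,\eta)$-cut of $G_4[K^3_{max}]$. A union bound over the $(1+\varepsilon)^n$ candidate bipartitions then closes the argument, provided $\varepsilon$ is chosen small enough that $\ln(1+\varepsilon)$ is strictly smaller than the constant in the Chernoff exponent. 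I expect the main obstacle to be the bookkeeping of constants -- ensuring that $f$ is large enough to force many fresh Phase-6 cross-edges while simultaneously small enough that removing $F$ still leaves linearly many vertices on each side of every candidate bipartition, and fixing $\varepsilon$ only after the Chernoff exponent is known.
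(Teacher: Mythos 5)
Your overall plan is the same as the paper's: combine Lemma~\ref{lem:cut}'s bound on the number of sparse bipartitions with Corollary~\ref{cor:constant_connection_prob}'s constant connection probability, apply Chernoff, and union-bound over the $(1+\varepsilon)^n$ candidate cuts. The one place where your reasoning has a genuine gap is the independence-plus-domination step.

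You take the target set at step $k$ to be (opposite side)$\,\cap\,V_k$, which \emph{includes Phase-6 vertices that were processed before step $k$}, and you justify stochastic domination by saying that (i) $\Pr[X_k=1]\ge P$, and (ii) the $X_k$'s are independent once all vertex positions are conditioned on. But (i) is a statement that averages over $u_k$'s not-yet-revealed last $m$ coordinates, while (ii) conditions those coordinates away. Once you condition on \emph{all} positions, $\Pr[X_k=1\mid\text{positions}]$ can be arbitrarily small (for realizations where $u_k$ lands in a cell far from every target), and conditional independence together with an \emph{unconditional} mean $\ge P$ does not imply $\sum X_k\succeq\mathrm{Bin}(|F|,P)$ (e.g., take all $X_k$ equal to one shared Bernoulli$(1/2)$). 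The paper sidesteps this entirely: it restricts the target set to $C_2\setminus F$, which consists solely of vertices processed in Phase 5, so its positions are fixed \emph{before} Phase 6 begins. Conditioning only on these fixed positions, each $X_v$ for $v\in F\cap C_1$ depends exclusively on the fresh randomness of step $v$ (its last $m$ coordinates and its $Y^2$-variables), so the $X_v$'s are truly independent \emph{and} each has conditional mean $\ge P$ -- exactly what Chernoff needs. Your argument can be repaired either by likewise restricting the target to (opposite side)$\setminus F$, or by replacing the ``condition on all positions'' step with a filtration argument showing $\Pr[X_k=1\mid\mathcal F_{k-1}]\ge P$ on the good spreading event, which is what actually underlies the ``stochastically dominates a binomial'' claim you invoke. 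The bookkeeping of constants ($f$, $\delta^\ast$, $\eta$, $\varepsilon$) in your write-up is otherwise sound and mirrors the paper's.
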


\begin{proof}
    At the end of phase $5$, consider bipartitions of $K^3_{max}$. From Lemma \ref{lem:cut}, we know that for all $\varepsilon > 0$ there is some $\eta' > 0$ and $n_0$, such that for all $n \geq n_0$ there are at most $(1 + \varepsilon)^n$ many bipartitions with at most $\eta' n$ many cross edges. Thus, this also holds for any bipartitions of $K^3_{max}$ into two sets $C_1$ and $C_2$, each of size at least $\delta n$.

    Since $F \subseteq K^3_{max}$ satisfies $|F| \geq fn$, without loss of generality we can assume that $|F \cap C_1| \geq fn/2$. Additionally, since $|F| \leq 6fn / s_{max}$ and $f / s_{max} < \delta/12$, we have that $|C_2 \setminus F| \geq \delta n / 2$. For some $v \in F \cap C_1$, let $X_v$ be the indicator random variable that detects if there is an edge from $v$ to some vertex in $C_2 \setminus F$. By Corollary~\ref{cor:constant_connection_prob}, we know that $\ex{X_v} \geq P$. Notice that $(X_v)_{v \in F \cap C_1}$ are independent indicator random variables (because we implicitly condition on the positions of the vertices in $C_2 \setminus F$). So $X \coloneqq \sum_{v \in F \cap C_1}X_v$ is a sum of independent indicator random variables with $\ex{X} \geq P fn/2 \eqqcolon \mu$. Thus, by Chernoff's bound (Lemma \ref{lem:chernoff}) we get that $X < \mu/2$ with probability at most $e^{-\mu / 8}$.

    Finally, choose $\varepsilon>0$ so that $1 + \varepsilon < e^{\mu/(9n)}$, and let $\eta \coloneqq \min(\eta', \mu / (2n))$. By a union bound we get that the probability that there is some bipartition of $K^3_{max}$ with at most $\eta n \leq \eta' n$ edges at the end of Phase 5 that still less than $\mu/2 \geq \eta n$ edges after Phase 6 is upper-bounded by $(1 + \varepsilon)^n e^{-\mu / 8} = e^{- \Theta(n)}$.
\end{proof}

Now, the remaining goal is to show that the size of the giant does not grow too much in Phase $6$. To show this, first notice that $K^3_{max} \subseteq K^4_{max}$. We will classify the vertices outside $K^3_{max}$ into $3$ types and show that each vertex type does not contribute a lot of vertices to $K^4_{max} \setminus K^3_{max}$. First, we consider vertices which are in components that are large but not the giant. Let $s_t \coloneqq 4/(s_{max} P)$, where the constants $s_{max}$ and $P$ are taken from Theorem~\ref{thm:giant} and Corollary~\ref{cor:constant_connection_prob} respectively.

\begin{lemma}
    \label{lem:type1}
    W.h.p.\ $G_3$ has at most $\delta n$ vertices that are in non-giant components of size at least $s_t$.
\end{lemma}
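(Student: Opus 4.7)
The plan is to track the total size of non-giant components of size at least $s_t$ throughout Phase~5 and show that this quantity drops below $\delta n$ with high probability. Let $\mathcal{U}_k$ denote the union of vertices lying in non-giant components of size at least $s_t$ in the graph obtained after the first $k$ Phase~5 steps, so that $\mathcal{U}_0$ is measured at the end of Phase~4 and $\mathcal{U}_{N_5}$ at the end of Phase~5, where $N_5 \coloneqq |K^1_{max} \setminus F|$. Two structural observations drive the argument: (i) $\mathcal{U}_k \subseteq \mathcal{V} \setminus K^1_{max}$ for every $k$, since $K^1_{max} \subseteq K^3_{max}$; (ii) the sequence $(|\mathcal{U}_k|)_k$ is non-increasing during Phase~5, because each Phase~5 vertex belongs to $K^1_{max}$, so any edge it forms to a non-giant vertex absorbs that vertex's component into the giant rather than enlarging $\mathcal{U}$.

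If $|\mathcal{U}_0| < \delta n$, the lemma follows immediately by monotonicity. In the main case $|\mathcal{U}_0| \ge \delta n$, I introduce the stopping time $T$ equal to the first Phase~5 step at which $|\mathcal{U}_k| < \delta n$, with the convention $T \coloneqq N_5 + 1$ if no such step exists. For every $k \le T$ we have $|\mathcal{U}_{k-1}| \ge \delta n \ge \delta n/2$, so Corollary~\ref{cor:constant_connection_prob} applied with $A \coloneqq \mathcal{U}_{k-1}$ gives that, conditional on the history up to step $k$, the probability that the Phase~5 vertex $u_k$ forms an edge into $\mathcal{U}_{k-1}$ via~\eqref{eq:lb2} is at least $P$. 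Let $Y_k$ denote the indicator of this event. Whenever $Y_k = 1$, at least one component of $\mathcal{U}_{k-1}$, which necessarily has size $\ge s_t$, is absorbed into the giant, so $|\mathcal{U}|$ decreases by at least $s_t$ at that step.

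To rule out $T = N_5 + 1$, I will couple $(Y_k)$ from below by independent Bernoulli$(P)$ variables $(\tilde Y_k)$ so that $Y_k \ge \tilde Y_k$ for all $k \le T$, and apply Chernoff's bound (Lemma~\ref{lem:chernoff}) to conclude $\sum_{k \le N_5} Y_k \ge PN_5/2$ with high probability. Choosing $f$ small enough that $N_5 \ge s_{max} n/2$ and recalling $s_t = 4/(s_{max} P)$, the total decrease in $|\mathcal{U}|$ across Phase~5 is at least $s_t \cdot PN_5/2 \ge n$. However, this decrease is bounded above by $|\mathcal{U}_0| \le |\mathcal{V} \setminus K^1_{max}| \le (1-s_{max})n < n$, a contradiction. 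Hence $T \le N_5$ with high probability, and monotonicity yields $|\mathcal{U}_{N_5}| \le |\mathcal{U}_T| < \delta n$, which is the claim.

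The main obstacle I foresee is justifying the Chernoff step cleanly: Corollary~\ref{cor:constant_connection_prob} supplies the lower bound $\pr{Y_k = 1 \mid \text{history}} \ge P$ only while $|\mathcal{U}_{k-1}| \ge \delta n/2$. I will handle this by carrying out the coupling with the independent Bernoulli sequence only up to the stopping time $T$, which is precisely the range in which the contradiction is derived. A secondary point that will need care is ensuring that each event $\{Y_k = 1\}$ indeed corresponds to a genuinely new size-$\ge s_t$ component being absorbed; this is built into the time-dependent definition of $\mathcal{U}_{k-1}$, which automatically excludes components already merged into the giant at earlier steps.
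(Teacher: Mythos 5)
Your proposal is correct and follows essentially the same strategy as the paper: track the (non-increasing) set of vertices in non-giant components of size $\ge s_t$ through Phase~5, use Corollary~\ref{cor:constant_connection_prob} to argue each step absorbs such a component with probability $\ge P$ as long as the set has size $\ge \delta n$, dominate from below by independent Bernoulli$(P)$ variables and apply Chernoff, then derive a contradiction because the total number of vertices removed would exceed $|\mathcal{V}\setminus K^1_{max}|$. The only cosmetic difference is that you formalize the conditioning via an explicit stopping time and coupling, whereas the paper conditions directly on the event $\{|A_k|\ge\delta n\ \forall k\}$; both are the same argument.
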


\begin{proof}
    Recall that
    \[ \{ u_k \in \mathcal{V} \mid |\mathcal{V} \setminus K^1_{max}| < k \leq |\mathcal{V} \setminus F|\} = K^1_{max} \setminus F.\]
    Let $k > |\mathcal{V} \setminus K^1_{max}|$. At the end of step $k - 1$, let $A_k$ be the set of vertices that are in a non-giant component of size at least $s_t$. 
    Notice that since we are only uncovering vertices in the giant, $A_k$ is non-increasing in $k$ (for the range of $k$s we consider).
    Let $Z_k$ be an indicator random variable that is $1$ if and only if $A_k \neq A_{k + 1}$.

    Now, if $|A_{|\mathcal{V} \setminus F| + 1}| < \delta n$, we are done. Otherwise, we have that $|A_k| \geq \delta n$ for all $|\mathcal{V} \setminus K^1_{max}| < k \leq |\mathcal{V} \setminus F|$. By Corollary~\ref{cor:constant_connection_prob}, we have that $\pr{Z_k = 1 \mid |A_k| \geq \delta n} \geq P$.
    Let $B^P_k$ be independent indicator random variables such that each one of them is $1$ with probability exactly $P$. Notice that 
    \[ \ex{\sum_{k = |\mathcal{V} \setminus K^1_{max}| + 1}^{|\mathcal{V} \setminus F|} Z_k\ \bigg |\ |A_k| \geq \delta n} \geq \ex{\sum_{k = |\mathcal{V} \setminus K^1_{max}| + 1}^{|\mathcal{V} \setminus F|} B^P_k} > P \cdot s_{max} n / 2 \eqqcolon \mu,\]
    where we have used the fact that $f / s_{max} < s_{max} / 12$ and $|F| \leq 6fn / s_{max}$. By Chernoff's bound (Lemma \ref{lem:chernoff}) we get that $\sum_k B^P_k \geq \mu / 2$ w.h.p., and hence we also have that $\sum_k Z_k \geq \mu / 2$ w.h.p. But since $s_t = 4/(s_{max} P)$ this means that we have removed at least $s_t \cdot \mu / 2 > n$ vertices from $A_{|\mathcal{V} \setminus K^1_{max}| + 1}$, which is a contradiction.
\end{proof}

Now we look at vertices that are in small components containing a large-weight vertex.

\begin{lemma}
    \label{lem:type2}
    Let $B'$ be the constant defined in Phase 2. There exists a constant $B \geq B' > 0$ such that $G_3$ has at most $\delta n$ vertices that are in a component of size at most $s_t$ containing at least one vertex of weight at least $B$. 
\end{lemma}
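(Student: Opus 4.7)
The proof is a purely deterministic counting argument that relies only on the power-law upper bound~\eqref{eq:pl2} of Definition~\ref{def:power-law}; no randomness from the vertex positions or from the edges inserted in Phases~1, 4, and 5 is needed. Writing $\mathcal{V}_{\ge B} \coloneqq \{v \in \mathcal{V} : w_v \ge B\}$, note that any connected component of $G_3$ of size at most $s_t$ which contains a vertex of weight at least $B$ must intersect $\mathcal{V}_{\ge B}$. Since the vertex sets of distinct components are disjoint, the map sending a vertex of $\mathcal{V}_{\ge B}$ to its component is a surjection onto the collection of components of interest. Hence there are at most $|\mathcal{V}_{\ge B}|$ such components, and each of them contributes at most $s_t$ vertices to the total. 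This immediately yields the deterministic bound
\[
\bigl|\{v \in \mathcal{V} : v \text{ lies in a component of } G_3 \text{ of size} \le s_t \text{ containing some } u \in \mathcal{V}_{\ge B}\}\bigr| \;\le\; s_t \cdot |\mathcal{V}_{\ge B}|.
\]

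Now I apply~\eqref{eq:pl2}: for every constant $\eta > 0$ there is a constant $c_2 > 0$ such that $|\mathcal{V}_{\ge B}| \le c_2 n / B^{\beta - 1 - \eta}$. Since $\beta \in (2,3)$, I may fix any $\eta \in (0, \beta - 2)$, which keeps the exponent $\beta - 1 - \eta > 1$ strictly positive. The bound $s_t c_2 n / B^{\beta - 1 - \eta}$ then falls below $\delta n$ as soon as $B \ge (s_t c_2 / \delta)^{1/(\beta - 1 - \eta)}$, and I simply set $B$ to be the maximum of this threshold and the constant $B'$ from Phase~2, ensuring both $B \ge B'$ and the desired cardinality bound.

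I do not foresee any genuine obstacle: because~\eqref{eq:pl2} is imposed deterministically on the weight sequence, the whole estimate holds uniformly over every realization of the geometric positions and the randomized edge insertions, so no high-probability analysis is invoked. The only points demanding a moment's care are the choice of $\eta$ (trivially handled by $\beta > 2$) and the observation that the counting is indeed an upper bound even when a component contains several vertices of weight at least $B$, since such a component is still counted with multiplicity at most its size $\le s_t$.
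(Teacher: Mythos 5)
Your proposal is correct and follows essentially the same route as the paper: bound the number of offending components by $|\mathcal{V}_{\ge B}|$, multiply by $s_t$, and invoke \eqref{eq:pl2} to choose $B \ge B'$ large enough. The only cosmetic difference is the choice of $\eta$ (the paper takes $\eta = 1$, giving exponent $\beta - 2$), and your restriction $\eta \in (0, \beta - 2)$ is harmless but unnecessary, since any positive exponent suffices.
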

\begin{proof}
    Notice that for $B \coloneqq \max\{B', (c_2 s_t / \delta)^{1 / (\beta - 2)}\}$ we have that there are at most $\delta n / s_t$ many vertices with weight at least $B$ (using $\eta = 1$ in (\ref{eq:pl2})). Thus, there are at most $\delta n$ vertices that are in a component of size at most $s_t$ containing such a vertex.
\end{proof}

Finally, we take care of the remaining type of vertices bounding the number of edges incident to small-weight vertices that are created in Phase 6. To do so, we will need the following  Azuma-Hoeffding bound with two-sided error events.
\begin{lemma}[Theorem 3.3 in~\cite{bringmann2016average}]
    \label{lem:azuma}
    Let $Z_1, .., Z_m$ be independent random variables over $\Omega_1, .., \Omega_m$. Let $Z = (Z_1, .., Z_m)$, $\Omega = \prod_{k = 1}^m \Omega_k$ and let $g \colon \Omega \mapsto \R$ be measurable with $0 \leq g(\omega) \leq M$ for all $\omega \in \Omega$. Let $\mathcal{B} \subseteq \Omega$ be such that for some $c > 0$ and for all $\omega, \omega' \in \Omega \setminus \mathcal{B}$ that differ in at most $2$ components, the following holds
    \[ |{g(\omega) - g(\omega')}|\leq c.\]
    Then for all $t \ge 2M\pr{\mathcal{B}}$ we have
    \[ \pr{|{g(Z) - \ex{g(Z)}|}\geq t} \leq 2 e^{-\frac{t^2}{32mc^2}} + (2 \tfrac{mM}{c} + 1)\pr{\mathcal{B}} \]

\end{lemma}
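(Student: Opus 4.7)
The result is a bounded-differences (McDiarmid-type) concentration inequality that tolerates an exceptional ``bad'' set $\mathcal{B}$ on which the Lipschitz hypothesis on $g$ may fail. The plan is to reduce it to the classical McDiarmid inequality via a Lipschitz-modification strategy: construct an auxiliary function $\bar g\colon\Omega\to[0,M]$ that agrees with $g$ on most of $\Omega$ and satisfies a global bounded-differences condition with constant $O(c)$, then apply the standard concentration bound to $\bar g(Z)$ and transfer the result back to $g(Z)$ at the cost of error terms controlled by $\pr{\mathcal{B}}$.

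\textbf{Construction of $\bar g$.} For each $\omega\in\mathcal{B}$, I would search for some $\omega^\star\notin\mathcal{B}$ at Hamming distance at most one from $\omega$, chosen according to a fixed deterministic rule (for instance, the lexicographically smallest single-coordinate modification that escapes $\mathcal{B}$), and set $\bar g(\omega):=g(\omega^\star)$; for $\omega\notin\mathcal{B}$ I keep $\bar g(\omega)=g(\omega)$, and for the (exceptional) $\omega$ with no valid $\omega^\star$ I assign any value in $[0,M]$. Whenever $\omega,\omega'$ differ in exactly one coordinate, their representatives $\omega^\star,(\omega')^\star$ both lie in $\Omega\setminus\mathcal{B}$ and differ in at most two coordinates, so the ``2 components'' hypothesis forces $|\bar g(\omega)-\bar g(\omega')|$ to be bounded by a small multiple of $c$. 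This is exactly where the ``$2$ components'' assumption pulls its weight: representatives of two one-coordinate neighbors end up at Hamming distance at most two, not one, from each other.

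\textbf{Concentration and comparison.} McDiarmid's bounded-differences inequality applied to $\bar g(Z)$ then yields $\pr{|\bar g(Z)-\E[\bar g(Z)]|\geq t/2}\leq 2e^{-t^2/(32mc^2)}$, the constant $32$ absorbing the factor lost in the modification. Next, $|\E[g(Z)]-\E[\bar g(Z)]|\leq M\pr{\mathcal{B}}$, which the hypothesis $t\geq 2M\pr{\mathcal{B}}$ absorbs into the $t/2$ slack. Finally, $g(Z)\neq\bar g(Z)$ occurs only if $Z$ itself lies in $\mathcal{B}$ or if some single-coordinate flip of $Z$ does; a union bound over the $m$ coordinates and the roughly $2M/c$ possible replacement values (thinking of $g$ as being quantized at scale $c$) gives $\pr{g(Z)\neq\bar g(Z)}\leq(2mM/c+1)\pr{\mathcal{B}}$. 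Combining these by a union bound delivers the stated inequality. The main obstacle is the construction of $\bar g$: one must choose $\omega^\star$ in a way that is simultaneously measurable, inherits a Lipschitz constant of order $c$ from the two-coordinate hypothesis, and makes the coupling-failure count admit the clean bound $(2mM/c+1)\pr{\mathcal{B}}$ rather than something exponentially larger in $m$.
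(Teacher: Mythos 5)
This lemma is imported verbatim from \cite{bringmann2016average} (Theorem 3.3 there); the present paper cites it as a black-box tool and contains no proof of it, so there is no internal argument to compare yours against. Judged on its own, your modification strategy has a genuine gap.

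The construction of $\bar g$ does not yield a function with globally bounded one-coordinate differences. First, not every $\omega\in\mathcal{B}$ admits a Hamming-distance-one escape $\omega^\star\notin\mathcal{B}$; for instance, if $\mathcal{B}$ contains a full Hamming ball of radius two around some point, that center has no escape. You patch such points by assigning ``any value in $[0,M]$'', but then a one-coordinate neighbor of such a point can have $|\bar g(\omega)-\bar g(\omega')|$ as large as $M$, so McDiarmid cannot be applied to $\bar g$ with a Lipschitz constant of order $c$. Second, even when escapes exist, your central claim -- that the representatives of two one-coordinate neighbors differ in at most two coordinates -- fails precisely in the case $\omega,\omega'\in\mathcal{B}$: if $\omega,\omega'$ differ in coordinate $k$, and their escapes flip coordinates $i\neq j$ with $i,j\neq k$, then $\omega^\star$ and $(\omega')^\star$ differ in the three coordinates $i,j,k$, which is outside the reach of the two-component hypothesis. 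Forcing the escape to always use a fixed coordinate avoids the distance-three issue but makes the existence problem strictly worse.

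There is also a telling inconsistency in your error accounting. Since $\bar g=g$ on $\Omega\setminus\mathcal{B}$ by construction, the event $\{g(Z)\neq\bar g(Z)\}$ is contained in $\{Z\in\mathcal{B}\}$, so the transfer step should cost only $\pr{\mathcal{B}}$, not $(2mM/c+1)\pr{\mathcal{B}}$; the ``union bound over $m$ coordinates and $2M/c$ replacement values'' does not correspond to any event that actually needs to be excluded in your scheme. If your construction were sound it would therefore prove a strictly sharper bound than the one stated, which is a strong hint that something is off. The factor $2mM/c$ is instead characteristic of a Doob-martingale argument in which one tracks $X_k=\E[g(Z)\mid Z_1,\dots,Z_k]$, uses the two-component hypothesis to bound increments at ``good'' steps, and controls the rare steps where the increment can be as large as $M$ by a Markov-type count over the $m$ stages; the two-coordinate condition is there to let one compare $\E[g\mid Z_{\le k-1},Z_k=z]$ with $\E[g\mid Z_{\le k-1},Z_k=z']$ inside a single martingale step, not to certify a globally Lipschitz modification of $g$.
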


We are thus ready to prove the outlined bound.
\begin{lemma}
    \label{lem:type3}
    There is a choice of $0 < f < (s_{max} / 12) \cdot \min\{ \delta, s_{max}\}$ such that w.h.p.\ there are at most $\delta n / s_t$ edges from $F$ to vertices of weight at most $B$, where $B$ is the constant from Lemma~\ref{lem:type2}.
\end{lemma}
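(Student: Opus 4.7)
The plan is to bound the expected number of such edges by $O(fn)$, choose $f$ small enough to drive this expectation below $\delta n/(4s_t)$, and then use the Azuma--Hoeffding inequality of Lemma~\ref{lem:azuma} to control fluctuations around the expectation. To avoid tracking the dependence of $F$ on $K^1_{max}$, I first pass to the upper bound $g' \coloneqq |\{uv \in \mathcal{E}(G_4) : u \in F',\, w_v \le B\}|$; since $F \subseteq F'$, any bound on $g'$ transfers, and because every vertex of $F'$ has weight at most $B' \le B$, both endpoints of each edge counted by $g'$ have weight at most $B$.

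For the expectation, part~3 of Lemma~\ref{lem:deg} gives $\ex{\deg_{G_4}(v)} \le Cw_v \le CB$ for each $v$ with $w_v \le B$, so the expected total number of edges of $G_4$ with both endpoints of weight at most $B$ is at most $CBn$. Since $F'$-inclusion is independent of the graph and puts each weight-$\le B'$ vertex into $F'$ with probability $4f/s_{max}$, at least one endpoint of any such edge lies in $F'$ with probability at most $8f/s_{max}$, giving $\ex{g'} \le 8CBfn/s_{max}$. For $f \le \delta s_{max}/(32 C B s_t)$ this is at most $\delta n/(4s_t)$.

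For concentration, I would package the randomness attached to each vertex $v$ into a single independent vector $Z_v \coloneqq (x_v, \mathbf{1}_{v \in F'}, (Y_{uv}^1, Y_{uv}^2)_{u<v})$. The key observation is that $g'$ depends only on those $Z_v$ with $w_v \le B$: a vertex $w$ with $w_w > B \ge B'$ is neither in $F'$ nor a valid second endpoint, so perturbing $Z_w$ leaves $g'$ unchanged. Let $\mathcal{B}$ be the event that some vertex violates $\deg_{G_4}(v) \le C(w_v + \log^2 n)$; by part~3 of Lemma~\ref{lem:deg} we have $\pr{\mathcal{B}} \le n^{-\omega(1)}$. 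Outside $\mathcal{B}$, altering a single $Z_v$ with $w_v \le B$ changes $g'$ by at most $\deg_{G_4}(v) = O(\log^2 n)$, so Lemma~\ref{lem:azuma} applies with $m \le n$, trivial upper bound $M = n^2$, and $c = O(\log^2 n)$. Taking $t = \delta n/(4s_t)$ easily satisfies $t \ge 2M\pr{\mathcal{B}}$ and yields $\pr{g' \ge \ex{g'} + t} \le 2 e^{-\Omega(n/\log^4 n)} + O(n^3/\log^2 n) \cdot n^{-\omega(1)} = o(1)$.

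The main obstacle is precisely the coupling subtlety in the previous paragraph: $F$ itself is determined by $K^1_{max}$, whose vertex set depends on the positions and $Y^1$-values of \emph{all} vertices including high-weight ones, so a direct per-vertex Lipschitz analysis of the $F$-indexed count would have $c$ scaling with the maximum degree, which is much too large for concentration at scale $\Theta(n)$. The superset $F'$ breaks this dependence because membership in $F'$ is an independent Bernoulli attached to each low-weight vertex, and combined with the degree bound from Lemma~\ref{lem:deg}.3 this yields the $O(\log^2 n)$ Lipschitz constant needed for Azuma. Choosing any $f < \min\bigl\{(s_{max}/12)\delta,\, s_{max}^2/12,\, \delta s_{max}/(32 C B s_t)\bigr\}$ simultaneously satisfies the Phase~2 constraint and ensures $\ex{g'} \le \delta n/(4s_t)$; combining the expectation and concentration bounds then gives $g \le g' \le \delta n/(2 s_t) \le \delta n/s_t$ w.h.p., as required.
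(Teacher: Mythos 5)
Your proposal is correct and takes essentially the same approach as the paper's proof: pass from $F$ to the independently sub-sampled superset $F'$ to break the dependence on $K^1_{max}$, bound the expected edge count via Lemma~\ref{lem:deg} to obtain $O(fn)$, and then apply the Azuma--Hoeffding inequality of Lemma~\ref{lem:azuma} with the bad event being a degree overshoot, giving a Lipschitz constant of $O(\log^2 n)$. The only cosmetic difference is that you fold the indicator $\mathbf{1}_{v\in F'}$ into the per-vertex random vector $Z_v$, whereas the paper implicitly conditions on $F'$ (which is independent of the positions and $Y$-variables) and applies Azuma to the remaining randomness; both formulations are valid and lead to the same $n^{-\omega(1)}$ failure probability.
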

\begin{proof}
    Recall from their definitions that $F \subseteq F'$, and that $F'$ is chosen independently from all the edges and only depends on the weights of the vertices. Let $S$ be the set of vertices in the graph with weight at most $B$. Recall that all vertices in $F'$ have a weight of at most $B'$. Since $B' \le B$, we know that $F' \subseteq S$, and morevover we have $|F'| \leq (6fn / s_{max}) \eqqcolon f'n$.

    We may assume that $S = \{1, \ldots, |S|\}$, so that we have an ordering of the vertices (it does not need to be the same ordering as in the sampling algorithm from section~\ref{sec:algorithm}). Consider the random variables $Y_i = (Y_{1 i}, Y_{2 i}, .., Y_{(i - 1) i})$ for all $2 \leq i \leq |S|$, where $Y_{ij}$ are the same random variables defined in~\eqref{eq:eic}. Clearly, all the $Y_i$s are independent, as each $Y_{ij}$ is chosen independently from $[0, 1]$. For a given realization $\omega\in\Omega$ of the $\kappa$-GIRG, let $g(\omega)$ be the number of edges between $F'$ and $S$. Clearly, $g$ depends only on the random variables $\{Y_i\}_{2 \leq i \leq |S|}$ and $\{ x_i\}_{1 \leq i \leq |S|}$. Notice that $0 \leq g(\omega) \leq |F'||S| \leq f' n^2$ for all $\omega\in\Omega$.

    We define a "bad" event $\mathcal{B}$ as
    \[ \mathcal{B} \coloneqq \{ \omega \in \Omega \mid \exists u \in S : \deg(u) \geq 2C \log^2 n\}, \]
    where $C > 0$ is the constant from Lemma~\ref{lem:deg}. We have $\pr{\mathcal{B}} = n^{-\omega(1)}$ by Lemma~\ref{lem:deg}. Moreover, for any $\omega, \omega' \in \Omega \setminus \mathcal{B}$ that differ in at most $2$ components (among the $2|S|-1$ components given by $\{Y_i\}_{2 \leq i \leq |S|}$ and $\{ x_i\}_{1 \leq i \leq |S|}$), we have
    \[ |g(\omega) - g(\omega')| \leq 4C \log^2 n \eqqcolon c\]
    since the outcome of every $x_u$ and $Y_u$ affects at most $2C \log^2 n$ edges if $\omega, \omega' \in \Omega \setminus \mathcal{B}$.

    Furthermore, Lemma~\ref{lem:deg} also implies that the expected degree of every vertex in $F'$ is upper-bounded by $B \cdot C$. Thus, we can also upper-bound the expectation of $g$ as follows
    \[ \ex{g(x_1, \ldots, x_{|S|}, Y_2, \ldots, Y_{|S|})} \leq \ex{\sum_{u \in F'} \deg(u)} \leq BCf'n.\]

    Applying Lemma~\ref{lem:azuma} with $t \eqqcolon 2BCf'n - \ex{g(Z)} \geq BCf'n$ thus yields 
    \[ \pr{g(Z) - \ex{g(Z)} \geq t} \le 2e^{-\tfrac{(BCf'n)^2}{32\cdot2n(4C\log^2 n)^2}} + \big(2\tfrac{2nf'n^2}{4C\log^2 n}+1\big) n^{-\omega(1)} = n^{-\omega(1)}.\]
    For a small enough $f$ such that $2BCf' < \delta / s_t$, we obtain that w.h.p.\  $g(Z) \leq \ex{g(Z)} + t = 2BCf'n < \delta n /s_t$ as desired.
\end{proof}
Therefore, we conclude that $K^4_{max}$ is not much larger than $K^3_ {max}$:
\begin{lemma}
    \label{lem:notbig}
     There is some choice of $0 < f < (s_{max} / 12) \cdot \min\{ \delta, s_{max}\}$ such that w.h.p.\
    \[ |K^4_{max}| \leq |K^3_{max}| + 3 \delta n.\]
\end{lemma}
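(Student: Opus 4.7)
The plan is to decompose $\mathcal{V}\setminus K^3_{max}$ into three disjoint types of vertices and to argue that each type contributes at most $\delta n$ vertices to $K^4_{max}\setminus K^3_{max}$. Since $K^3_{max}\subseteq K^4_{max}$, any vertex counted by the difference must lie in a connected component $C$ of $G_3$ that is entirely absorbed into the giant during Phase 6, so it suffices to give an upper bound on the total size of all such absorbed components.

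Define the types as follows. Let $\mathcal{T}_1$ consist of vertices lying in non-giant components of $G_3$ of size at least $s_t$; let $\mathcal{T}_2$ consist of vertices lying in non-giant components of size at most $s_t$ that contain at least one vertex of weight at least $B$ (the constant from Lemma~\ref{lem:type2}); and let $\mathcal{T}_3$ consist of vertices in non-giant components of size at most $s_t$ in which every vertex has weight strictly less than $B$. This is clearly an exhaustive partition of $\mathcal{V}\setminus K^3_{max}$.

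By Lemma~\ref{lem:type1} we have $|\mathcal{T}_1|\le \delta n$ w.h.p., and by Lemma~\ref{lem:type2} we have $|\mathcal{T}_2|\le \delta n$ deterministically once the weight sequence satisfies~\eqref{eq:pl2}. For $\mathcal{T}_3$, observe that a Type-3 component $C$ can only be pulled into $K^4_{max}$ during Phase 6 if Phase 6 creates at least one edge between $C$ and $F$ (since Phase 6 only produces edges incident to $F$). Every such edge goes from $F$ to a vertex in $C$, and, by definition of $\mathcal{T}_3$, every such endpoint has weight at most $B$. Hence the number of absorbed Type-3 components is upper-bounded by the number of edges from $F$ to vertices of weight at most $B$ created in Phase 6, which by Lemma~\ref{lem:type3} is at most $\delta n/s_t$ w.h.p.\ for a suitable choice of $f<(s_{max}/12)\cdot\min\{\delta,s_{max}\}$. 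Since each absorbed Type-3 component has size at most $s_t$, the total contribution of $\mathcal{T}_3$ to $K^4_{max}\setminus K^3_{max}$ is at most $s_t\cdot \delta n/s_t=\delta n$.

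Taking a union bound over the three high-probability events and summing the contributions yields $|K^4_{max}|\le|K^3_{max}|+3\delta n$ w.h.p. The substance of the argument is already packaged inside Lemmas~\ref{lem:type1}--\ref{lem:type3}; the only real content of this step is checking that the trichotomy is exhaustive and that every newly absorbed vertex is charged exactly once. The main subtlety to keep in mind is that for Type-3 components the bound is on the \emph{number of new absorbing edges}, not on the number of vertices themselves, which is why the factor $s_t$ from the component-size bound cancels against the denominator in Lemma~\ref{lem:type3}; choosing $f$ small enough in Lemma~\ref{lem:type3} to realize this cancellation is the only parameter tuning required.
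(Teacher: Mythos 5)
Your proof is correct and follows exactly the route the paper intends: the paper's own proof is just the one-line remark ``immediately follows from combining Lemmata~\ref{lem:type1}, \ref{lem:type2} and~\ref{lem:type3},'' and your trichotomy (large components, small components with a heavy vertex, small all-light components) together with the charging argument for Type~3 is precisely the bookkeeping that remark elides. In particular the key observation that Phase~6 only creates edges incident to $F\subseteq K^3_{max}$, so an absorbed all-light small component must receive at least one $F$-edge to a weight-at-most-$B$ endpoint, is exactly what makes the $\delta n/s_t$ bound of Lemma~\ref{lem:type3} translate (after multiplying by the component-size cap $s_t$) into the $\delta n$ contribution.
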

\begin{proof}
    The claim immediately follows from combining Lemmata~\ref{lem:type1}, \ref{lem:type2} and~\ref{lem:type3}.
\end{proof}

\subsection{Clustering coefficient}\label{sec:clustering}

In the subsequent, we show that the clustering coefficient of any BDF-GIRG is in $\Omega(1)$. We begin with the definition of the clustering coefficient, taken from~\cite{lengler2017existence}.
\begin{definition}
    \label{def:cc}
For a graph $\mathcal{G}=(\mathcal{V},\mathcal{E})$ the clustering coefficient of a vertex $v\in \mathcal{V}$ is defined as
\[
\textsc{cc}(v) \coloneqq \begin{cases}\frac{1}{\binom{\deg(v)}{2}}\cdot \#\{\text{triangles in $\mathcal{G}$ containing $v$}\}, & \text{ if } \deg(v) \geq 2, \\ 0, & \text{ otherwise,}\end{cases}
\]
and the (mean) clustering coefficient of $\mathcal{G}$ is defined as $\textsc{cc}(\mathcal{G}) \coloneqq \frac{1}{|\mathcal{V}|} \sum_{v\in \mathcal{G}} \textsc{cc}(v)$.

\end{definition}
We further require the definition of a stochastic relaxation of the triangle inequality.
\begin{definition}[Definition A.3. in \cite{lengler2017existence}]\label{def:stochtriangle}
Let $\kappa : \mathbb{T}^d \to \mathbb{R}_{\geq 0}$ be a measurable, translation-invariant, and symmetric function inducing a surjective volume function $V:\mathbb{R}_0^+ \to [0,1]$.\footnote{This is fulfilled in our case by choosing $\kappa$ to be a BDF and $V$ to be the volume $V_\kappa(r)$ induced by $\kappa$.} We say that $\kappa$ satisfies a \emph{stochastic triangle inequality} if there is a constant $C>0$ such that the following two conditions hold. 
\begin{enumerate}
\item For every $\varepsilon >0$ let $x_1 = x_1(\varepsilon),x_2 = x_2(\varepsilon)$ be chosen independently and uniformly at random in the $\varepsilon$-ball $\{x \in \mathbb{T}^d \mid \kappa({x}) \leq \varepsilon\}$. Then
\begin{align*}
\liminf_{\varepsilon \to 0} \Pr[\kappa(x_1-x_2) \leq C \varepsilon] > 0.
\end{align*}
\item Moreover,
\begin{align*}
\liminf_{\varepsilon \to 0} \frac{V(\{x \in \mathbb{T}^d \mid \kappa(x) \leq \varepsilon\})}{V(\{x \in \mathbb{T}^d \mid \kappa(x) \leq C \varepsilon\})}> 0.
\end{align*}
\end{enumerate}
\end{definition}
Finally, we need the following theorem from~\cite{lengler2017existence}, which reduces the task of lower-bounding the clustering coefficient of BDF-GIRGs to showing that they satisfy the stochastic triangle inequality.

\begin{theorem}[Theorem A.4 in \cite{lengler2017existence}]\label{thm:clustering_from_stochstic_triangle_ineq}
Consider the GIRG model with a distance function that satisfies the stochastic triangle inequality as described in Definition~\ref{def:stochtriangle}, and let $\mathcal{G}$ be a random instance. Then $\textsc{cc}(\mathcal{G}) = \Omega(1)$ with high probability.
\end{theorem}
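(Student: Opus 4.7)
The plan is to exhibit a linear number of vertices whose local clustering coefficients are each bounded below by a positive constant with constant probability, and then to promote this to a high-probability statement about $\textsc{cc}(\mathcal{G})$. We restrict attention to vertices of constant weight: by~\eqref{eq:pl1}, the set $W \coloneqq \{v \in \mathcal{V} : w_v \in [1,2]\}$ satisfies $|W| = \Omega(n)$.

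Fix $v \in W$ and let $\varepsilon_n$ be the radius for which $V(\varepsilon_n) = c/n$, for a sufficiently small constant $c > 0$; this is well-defined because $V$ is surjective onto $[0,1]$. A vertex $u \in W$ lying in the $\varepsilon_n$-ball around $v$ satisfies $w_u w_v / (n V(\kappa(x_u - x_v))) = \Omega(1)$, and is therefore connected to $v$ with probability at least some constant $q > 0$ by~\eqref{eq:girg-connection}. The expected number of such close, constant-weight neighbors of $v$ is $\Theta(|W| \cdot V(\varepsilon_n)) = \Theta(1)$, and a standard Poisson-type concentration argument shows that with probability bounded below by a constant $p^* > 0$, the vertex $v$ has at least two such neighbors $u_1, u_2 \in W$ inside its $\varepsilon_n$-ball.

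Conditional on this event, the displacements $x_{u_i} - x_v$ (for $i = 1,2$) are independent and uniformly distributed in the $\varepsilon_n$-ball around the origin. Condition~1 of Definition~\ref{def:stochtriangle} then guarantees that, for $n$ large enough, $\kappa(x_{u_1} - x_{u_2}) \le C \varepsilon_n$ with probability bounded below by a constant $p' > 0$. Condition~2 ensures that $V(C\varepsilon_n) = O(V(\varepsilon_n)) = O(1/n)$, so that whenever the previous event holds the edge $u_1 u_2$ is present with probability $\Omega(1)$. Combined, these steps show that $v$ lies in at least one triangle with probability $\Omega(1)$. Since $\mathbb{E}[\deg(v)] = O(w_v) = O(1)$ by Lemma~\ref{lem:deg}, Markov's inequality yields $\deg(v) \le K$ with constant probability for a sufficiently large constant $K$. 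Intersecting these two constant-probability events gives $\textsc{cc}(v) \ge 1/\binom{K}{2} \eqqcolon c_0 > 0$ with constant probability, and hence $\mathbb{E}[\textsc{cc}(\mathcal{G})] = \Omega(1)$ by linearity.

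The main obstacle is upgrading this expectation bound to a high-probability statement, since the local clustering coefficients are correlated across vertices. The plan is to define $X_v$ as the indicator of $\{\textsc{cc}(v) \ge c_0\}$ and apply Chebyshev's inequality to $Y \coloneqq \sum_{v \in W} X_v$, whose expectation is $\Omega(n)$ by the preceding argument. For two vertices $u, v \in W$ with $\kappa(x_u - x_v) \gg \varepsilon_n$, the events $\{X_u = 1\}$ and $\{X_v = 1\}$ depend on essentially disjoint constant-sized neighborhoods and are therefore nearly independent, contributing negligibly to the covariance; accounting for the fact that only an $o(1)$ fraction of vertex pairs lie within geometric distance $O(\varepsilon_n)$ of each other should then yield $\mathrm{Var}[Y] = o(\mathbb{E}[Y]^2)$. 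Chebyshev's inequality then gives $Y = \Omega(n)$ w.h.p., and consequently $\textsc{cc}(\mathcal{G}) \ge c_0 \cdot Y/n = \Omega(1)$ w.h.p. Restricting the argument throughout to the high-probability event in Lemma~\ref{lem:deg} that every degree is at most $C(w_v + \log^2 n)$ prevents pathological high-degree vertices from generating large correlations.
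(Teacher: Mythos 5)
First, note that the paper does not prove this statement at all: it is imported as Theorem~A.4 of \cite{lengler2017existence} and used as a black box (the paper's own contribution is only Lemma~\ref{lem:stochastic_triangle_ineq}, verifying the hypothesis for BDFs). So there is no in-paper proof to compare against; your proposal is measured against the standard argument behind the cited result, and it does reconstruct that argument's route faithfully: constant-weight vertices, a ball of volume $\Theta(1/n)$ around each, Condition~1 of Definition~\ref{def:stochtriangle} to close the wedge, Condition~2 to keep the closing edge probability constant, a degree cap via Markov, and a second-moment argument for concentration.

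Two places need tightening. First, after conditioning on ``$u_1,u_2$ lie in the $\varepsilon_n$-ball of $v$ \emph{and} are adjacent to $v$,'' the displacements $x_{u_i}-x_v$ are not uniform on the ball: the conditioning tilts the density by the factor $p_{vu_i}$. Since inside the ball $w_{u_i}w_v/(nV(\kappa(x_{u_i}-x_v)))\ge 1$ (for $c\le1$), \eqref{eq:girg-connection} pins $p_{vu_i}$ between $c_L$ and $c_U$, so the conditional density is within constant factors of uniform and Condition~1 still applies up to constants --- but this must be said, since Condition~1 is stated only for uniform points. (A smaller quibble: $\{v: w_v\in[1,2]\}$ need not have linear size under \eqref{eq:pl1}--\eqref{eq:pl2}; take $[1,w_0]$ with $w_0$ large enough that $c_2w_0^{-(\beta-1-\eta)}<1/2$.) Second, the concentration step is a plan rather than a proof, and it is the only step where the argument could genuinely fail if executed carelessly: $X_v=\mathbb{1}[\textsc{cc}(v)\ge c_0]$ as you define it depends on the positions of \emph{all} vertices (which fall into $v$'s ball) and on the full degree of $v$, so distant pairs $u,v$ do not depend on disjoint randomness. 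The covariance bound does go through --- conditioning on positions makes the relevant edge sets disjoint except for the single edge $uv$, and the positional correlations contribute $O(1/n)$ per pair, giving $\mathrm{Var}[Y]=O(n)=o(\E[Y]^2)$ --- but you should either carry out this computation or replace $X_v$ by a more local event (e.g.\ ``$v$ has two adjacent ball-neighbours and $\deg(v)\le K$'') before invoking Chebyshev. With these repairs the proposal is a correct and essentially standard proof of the cited theorem.
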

Hence we need to prove the following.
\begin{lemma}
    \label{lem:stochastic_triangle_ineq}
    Any BDF $\kappa$ satisfies the stochastic triangle inequality with $C = 2$.
\end{lemma}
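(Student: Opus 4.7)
My plan is to prove Lemma~\ref{lem:stochastic_triangle_ineq} by structural induction on the BDF $\kappa$. Condition 2 in Definition~\ref{def:stochtriangle} with $C=2$ follows immediately from Proposition~\ref{prop:volume}: since $V_\kappa(r)=\Theta(r^{\mathcal{D}(\kappa)})$ as $r\to 0$, the ratio $V_\kappa(\varepsilon)/V_\kappa(2\varepsilon)$ is bounded away from $0$ by a constant of order $2^{-\mathcal{D}(\kappa)}$. Hence the work reduces to condition 1, i.e., to showing $\liminf_{\varepsilon\to 0}\Pr[\kappa(x_1-x_2)\le 2\varepsilon]>0$ whenever $x_1,x_2$ are iid uniform on the $\kappa$-ball $B_\kappa^\varepsilon(0)\subseteq\T^d$.

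For the base case $d=1$ we have $\kappa(x)=|x|$, so both $x_1,x_2$ lie in an interval of length $2\varepsilon$ on the torus, forcing $|x_1-x_2|\le 2\varepsilon$ deterministically. For the inductive step when $\kappa=\max(\kappa_1,\kappa_2)$ with comprising functions acting on disjoint coordinate blocks indexed by $S$ and $[d]\setminus S$, the ball $B_\kappa^\varepsilon(0)$ factorizes as $B_{\kappa_1}^\varepsilon(0)\times B_{\kappa_2}^\varepsilon(0)$. Consequently the $S$-blocks and the complement blocks of $x_1,x_2$ are independent, with iid uniform pairs on the respective smaller balls. Applying the induction hypothesis on each block and using that $\max(a,b)\le 2\varepsilon$ iff both $a,b\le 2\varepsilon$, independence across blocks yields the desired positive lower bound as a product of two positive liminfs.

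The remaining case is $\kappa=\min(\kappa_1,\kappa_2)$; assume without loss of generality $\mathcal{D}(\kappa_1)\le\mathcal{D}(\kappa_2)$, so $\mathcal{D}(\kappa)=\mathcal{D}(\kappa_1)$. I will condition on the event $E\coloneqq\{x_1,x_2\in A_1\}$ where $A_1\coloneqq B_{\kappa_1}^\varepsilon(0)\times\T^{d-|S|}\subseteq B_\kappa^\varepsilon(0)$. Its single-sample probability equals $V_{\kappa_1}(\varepsilon)/V_\kappa(\varepsilon)$, and by Proposition~\ref{prop:volume} numerator and denominator are both $\Theta(\varepsilon^{\mathcal{D}(\kappa_1)})$, so this probability, and hence $\Pr[E]$, are bounded below by a positive constant for all small $\varepsilon$. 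Because $A_1$ is itself a product set contained in $B_\kappa^\varepsilon(0)$, the uniform measure on $B_\kappa^\varepsilon(0)$ restricted to $A_1$ is uniform on $A_1$; in particular, given $E$ the $\kappa_1$-blocks $x_1^1,x_2^1$ are iid uniform on $B_{\kappa_1}^\varepsilon(0)$. The induction hypothesis applied to $\kappa_1$ then yields $\liminf_{\varepsilon\to 0}\Pr[\kappa_1(x_1^1-x_2^1)\le 2\varepsilon\mid E]\ge c>0$, and since $\kappa(x_1-x_2)=\min(\kappa_1(x_1^1-x_2^1),\kappa_2(x_1^2-x_2^2))\le\kappa_1(x_1^1-x_2^1)$, we conclude $\liminf_{\varepsilon\to 0}\Pr[\kappa(x_1-x_2)\le 2\varepsilon]\ge c\cdot\liminf_{\varepsilon\to 0}\Pr[E]>0$.

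The main obstacle is the outer-min step, where one must verify carefully that conditioning on $E$ preserves the product-uniform structure on $A_1$ (which works precisely because $A_1$ is a product set sitting inside $B_\kappa^\varepsilon(0)$) and that $\Pr[E]$ does not vanish as $\varepsilon\to 0$ (which is where the matching polynomial orders delivered by Proposition~\ref{prop:volume} play the decisive role, via the WLOG assumption $\mathcal{D}(\kappa_1)\le\mathcal{D}(\kappa_2)$). Once these two bookkeeping points are in place, the outer-max case reduces to applying independence and the base case is trivial, completing the induction.
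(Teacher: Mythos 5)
Your proposal is correct, and it takes a genuinely different route from the paper. The paper does not use induction at all: it invokes Proposition~\ref{prop:bdfup} to bound $\kappa(x) \le \max_{i \in S}|x_i|$ for a set $S$ with $|S| = \mathcal{D}(\kappa)$, then applies the ordinary triangle inequality coordinate-wise to get $\kappa(x_1 - x_2) \le \max_{i\in S}|x_{1,i}| + \max_{i\in S}|x_{2,i}|$, and observes that each factor $\Pr[\max_{i\in S}|x_{k,i}| \le \varepsilon] = V_{\|\cdot\|_S}(\varepsilon)/V_\kappa(\varepsilon) = \Theta(1)$ by Proposition~\ref{prop:volume}, with independence of $x_1, x_2$ giving the product lower bound. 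You instead unroll the recursion directly: the $d=1$ base case is deterministic, the outer-max case exploits that $B_\kappa^\varepsilon(0)$ factorizes as a product of the comprising balls (giving independence of blocks), and the outer-min case uses the conditioning trick on $A_1 = B_{\kappa_1}^\varepsilon(0)\times\T^{d-|S|}$, where the product structure of $A_1$ keeps the blocks uniform and Proposition~\ref{prop:volume} plus the WLOG ordering $\mathcal{D}(\kappa_1) \le \mathcal{D}(\kappa_2)$ ensures $\Pr[E]$ stays bounded away from zero. Both arguments are sound; the paper's is shorter once Proposition~\ref{prop:bdfup} is available, essentially compressing the entire recursion into that single lemma, whereas yours is self-contained modulo Proposition~\ref{prop:volume} and makes the role of the outer-min/outer-max structure transparent at each step.
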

\begin{proof}
  
    For the first statement, recall that Proposition~\ref{prop:bdfup} gives us a subset $S \subseteq [d]$ of the coordinates with $|S| = \mathcal{D}(\kappa)$ such that $\kappa(x) \le \max_{i\in S} |x_i|$ for all $x\in \T^d$. For $\varepsilon \leq 1/4$, we have
    \[ \kappa(x_1 - x_2) \leq \max_{i\in S} |x_{1, i} - x_{2, i}| \leq \max_{i\in S} |x_{1, i}| + |x_{2, i}| \leq \max_{i\in S} |x_{1, i}| + \max_{i\in S} |x_{2, i}| \]
    Consequently, $\pr{\kappa(x_1 - x_2) \leq 2 \varepsilon} \geq \pr{\max_{i\in S} |x_{1, i}| \leq \varepsilon} \cdot \pr{\max_{i\in S} |x_{2, i}| \leq \varepsilon}$. Now, since $x_1, x_2$ are chosen uniformly and independently at random from the $\varepsilon$-ball centered at the origin, and $\mathcal{D}(\kappa) = |S|$, we have
    \[\pr{\max_{i\in S} |x_{1, i}| \leq \varepsilon} = \pr{\max_{i\in S} |x_{2, i}| \leq \varepsilon} = \frac{V_{\|\cdot\|_S}(\varepsilon)}{V_{\kappa}(\varepsilon)} = \frac{\Theta(\epsilon^{\mathcal{D}(\kappa)})}{\Theta(\epsilon^{\mathcal{D}(\kappa)})} = \Theta(1),\]
    where the third equality equality comes from Proposition~\ref{prop:volume}. Therefore $\pr{\kappa(x_1 - x_2) \leq 2 \varepsilon} = \Theta(1)$, which implies that the limit must evaluate to a non-zero constant.

    For the second statement, we have, using Proposition~\ref{prop:volume} again,
    \begin{align*}
        \liminf_{\varepsilon \to 0} \frac{V(\{x \in \mathbb{T}^d \mid \kappa(x) \leq \varepsilon\})}{V(\{x \in \mathbb{T}^d \mid \kappa(x) \leq C \varepsilon\})} 
        = \liminf_{\varepsilon \to 0} \frac{V_\kappa(\varepsilon)}{V_\kappa(C\varepsilon)}
        = \liminf_{\varepsilon \to 0}\frac{\Theta(\varepsilon^{\mathcal{D}(\kappa)})}{\Theta((C\varepsilon)^{\mathcal{D}(\kappa)})} = \Theta(1),
    \end{align*}
    and as above we can conclude that the limit is a positive constant.
\end{proof}
Combining Theorem~\ref{thm:clustering_from_stochstic_triangle_ineq} and Lemma~\ref{lem:stochastic_triangle_ineq} gives us the desired result.

\begin{theorem}
    Let $G$ be a GIRG induced by a BDF $\kappa$ acting on the $d$-dimensional torus $\T^d$. Then, with probability $1-o(1)$, its clustering coefficient is constant, i.e.\ $\textsc{cc}(G) = \Theta(1)$. 
\end{theorem}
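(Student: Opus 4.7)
The plan is to invoke the reduction of Lengler and Todorovi\'{c} (Theorem~A.4 in~\cite{lengler2017existence}), which says that as soon as the underlying distance function $\kappa$ satisfies the stochastic triangle inequality of Definition~\ref{def:stochtriangle}, one automatically has $\textsc{cc}(\mathcal{G}) = \Omega(1)$ with high probability. Since by Definition~\ref{def:cc} each $\textsc{cc}(v) \in [0,1]$, the matching upper bound $\textsc{cc}(\mathcal{G}) = O(1)$ is trivial. Hence the whole task boils down to checking that every BDF $\kappa$ fulfills the two clauses of Definition~\ref{def:stochtriangle}, and I would do this with constant $C = 2$.

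The second clause is immediate from the volume characterization: Proposition~\ref{prop:volume} gives $V_\kappa(r) = \Theta(r^{\mathcal{D}(\kappa)})$ as $r\to 0$, so
\[
\frac{V_\kappa(\varepsilon)}{V_\kappa(2\varepsilon)} = \frac{\Theta(\varepsilon^{\mathcal{D}(\kappa)})}{\Theta((2\varepsilon)^{\mathcal{D}(\kappa)})} = \Theta(2^{-\mathcal{D}(\kappa)}),
\]
which is a positive constant independent of $\varepsilon$.

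For the first clause, the key step is to pick the ``right'' block of coordinates on which $\kappa$ behaves like a max-norm. Concretely, I would invoke Proposition~\ref{prop:bdfup} to obtain a subset $S \subseteq [d]$ with $|S| = \mathcal{D}(\kappa)$ and $\kappa(x) \le \max_{i\in S}|x_i|$ for all $x\in\T^d$. If both samples $x_1, x_2$ from the $\varepsilon$-ball satisfy $\max_{i\in S}|x_{k,i}| \le \varepsilon$ for $k = 1,2$, then by the one-dimensional triangle inequality applied coordinate-wise,
\[
\kappa(x_1 - x_2) \le \max_{i\in S}|x_{1,i} - x_{2,i}| \le \max_{i\in S}|x_{1,i}| + \max_{i\in S}|x_{2,i}| \le 2\varepsilon.
\]
Each of the two events $\{\max_{i\in S}|x_{k,i}| \le \varepsilon\}$ has probability equal to the ratio of the volume of the $|S|$-dimensional max-norm $\varepsilon$-ball in $\T^d$ to $V_\kappa(\varepsilon)$; both are $\Theta(\varepsilon^{\mathcal{D}(\kappa)})$ by Proposition~\ref{prop:volume} (applied once to $\kappa$ and once to the max-norm, which is itself a BDF of depth $|S|$). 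Multiplying the two independent contributions yields $\Pr[\kappa(x_1 - x_2) \le 2\varepsilon] = \Theta(1)$, bounded away from zero uniformly in $\varepsilon$.

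The only subtle point, and the reason the argument needs the structural theory from Section~\ref{sec:bdf}, is that an arbitrary nested min/max combination does \emph{not} obey a deterministic triangle inequality; Proposition~\ref{prop:bdfup} is precisely the device that isolates a sub-block of coordinates on which such an inequality does hold, and this sub-block happens to carry a constant fraction of the mass of the $\kappa$-ball by Proposition~\ref{prop:volume}. Once these two propositions are in hand, everything else reduces to comparing volumes of the same polynomial order in $\varepsilon$, and no further probabilistic machinery is needed.
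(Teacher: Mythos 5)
Your proposal is correct and follows essentially the same route as the paper: reduce to the stochastic triangle inequality via Theorem~A.4 of Lengler and Todorovi\'{c}, verify it with $C=2$ by using Proposition~\ref{prop:bdfup} to isolate a max-norm block of $\mathcal{D}(\kappa)$ coordinates for the first clause, and use the volume asymptotics of Proposition~\ref{prop:volume} for both clauses. The only (cosmetic) addition is that you make explicit the trivial $O(1)$ upper bound on $\textsc{cc}(\mathcal{G})$, which the paper leaves implicit.
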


\end{document}